\newtheorem{theorem}{Theorem}
\newtheorem{lemma}[theorem]{Lemma}
\newtheorem{proposition}[theorem]{Proposition}
\newtheorem{corollary}[theorem]{Corollary}
\newtheorem{claim}[theorem]{Claim}
\newtheorem{observation}[theorem]{Observation}
\newtheorem{conjecture}[theorem]{Conjecture}
\def\defn#1{\textit{\textbf{\boldmath #1}}}
\long\def\@caption#1[#2]#3{\par\addcontentsline{\csname
  ext@#1\endcsname}{#1}{\protect\numberline{\csname 
  the#1\endcsname}{\ignorespaces #2}}\begingroup
    \@parboxrestore
    \small
    \@makecaption{\csname fnum@#1\endcsname}{\ignorespaces #3}\par
  \endgroup}
\newcommand{\remove}[1]{}
\newcommand{\changed}[1]{{\color{blue}#1}}
\newcommand{\mchanged}[1]{{\color{olive}#1}}
\newcommand{\anna}[1]{{\color{brown}Anna says: #1}}
\newcommand{\jaysonl}[1]{{\color{magenta}Jayson says: #1}}
\newcommand{\oswin}[1]{{\color{orange}Oswin says: #1}}
\newcommand{\therese}[1]{{\color{teal}Therese says: #1}}
\newcommand{\mati}[1]{{\color{red}Matias says: #1}}
\newcommand{\pepa}[1]{{\color{violet}Pepa  says: #1}}
\newcommand{\yushi}[1]{{\color{cyan}Yushi  says: #1}}
\title{Reconfiguration of Non-crossing Spanning Trees}
\author[1]{Oswin Aichholzer}
\affil[1]{University of Technology Graz, Austria. {\tt oaich@ist.tugraz.at}}
\author[2]{Brad Ballinger}
\affil[2]{Cal Poly Humboldt, USA. {\tt brad@humboldt.edu}}
\author[3]{Therese Biedl}
\affil[3]{University of Waterloo, Canada. {\tt \{biedl,alubiw,jayson.lynch\}@uwaterloo.ca}}
\author[4]{Mirela Damian}
\affil[4]{Villanova University, USA. {\tt mirela.damian@villanova.edu}}
\author[5]{Erik D. Demaine}
\affil[5]{MIT Computer Science and Artificial Intelligence Laboratory, USA. \protect{\url{edemaine@mit.edu}}}
\author[6]{Matias Korman}
\affil[6]{Siemens Electronic Design Automation, USA. {\tt matias\_korman@mentor.com}}
\author[3]{Anna Lubiw}
\author[3]{Jayson Lynch}
\author[7]{Josef Tkadlec}
\affil[7]{Harvard Unversity, USA. {\tt tkadlec@math.harvard.edu}}
\author[8]{Yushi Uno}
\affil[8]{Osaka Metropolitan University, Japan. {\tt yushi.uno@omu.ac.jp}}
\author[1]{Oswin Aichholzer}
\affil[1]{University of Technology Graz, Austria. {\tt oaich@ist.tugraz.at}}
\author[2]{Brad Ballinger}
\affil[2]{Cal Poly Humboldt, USA. {\tt brad@humboldt.edu}}
\author[3]{Therese Biedl}
\affil[3]{University of Waterloo, Canada. {\tt \{biedl,alubiw\}@uwaterloo.ca}}
\author[4]{Mirela Damian}
\affil[4]{Villanova University, USA. {\tt mirela.damian@villanova.edu}}
\author[5]{Erik D. Demaine}
\affil[5]{Erik's at MIT}
\author[6]{Matias Korman}
\affil[6]{Siemens Electronic Design Automation, USA. {\tt matias\_korman@mentor.com}}
\author[3]{Anna Lubiw}
\author[3]{Jayson Lynch}
\author[7]{Josef Tkadlec}
\affil[7]{Harvard Unversity, USA. {\tt tkadlec@math.harvard.edu}}
\author[8]{Yushi Uno}
\affil[8]{Osaka Metropolitan University, Japan. {\tt yushi.uno@omu.ac.jp}}
\begin{document}

\maketitle

\begin{abstract}
For a set $P$ of $n$ points in the plane in general position, 
a \defn{non-crossing spanning tree} is a spanning tree of the points 
where every edge is a straight-line segment between a pair of points and
no two edges intersect except at a common endpoint. 
We study the problem of \defn{reconfiguring} one non-crossing spanning tree of $P$ to another 
using a sequence of \defn{flips} where each flip removes one edge 
and adds one new edge so that the result is again a non-crossing spanning tree of $P$. 
There is a known upper bound of $2n-4$ flips [Avis and Fukuda, 1996] and a lower bound of $1.5n - 5$ flips. 

We give a reconfiguration algorithm that uses at most $2n-3$ flips but reduces that to $1.5n-2$ flips when one tree is a path and either: the points are in convex position; or the path is monotone in some direction. 
For points in convex position, we prove an upper bound of 
$2d - \Omega(\log d)$ where $d$ is half the size of the symmetric difference between the trees. 
We also examine whether the \defn{happy edges} (those common to the initial and final trees) need to flip, 
and we find exact minimum flip distances for small point sets using exhaustive search.
\end{abstract}

\section{Introduction}
\label{sec:introduction}
Let $P$ be a set of $n$ points in the plane in general position.
A \defn{non-crossing spanning tree} is a spanning tree of $P$
whose edges are straight line segments between pairs of points such that no two edges intersect except at a common endpoint.
A \defn{reconfiguration step} or \defn{flip} removes one edge of a non-crossing spanning tree and adds one new edge so that the result is again a non-crossing spanning tree of $P$.
We study the problem of
\defn{reconfiguring} one non-crossing spanning of $P$ to another
via a sequence of flips.


Researchers often consider three problems about reconfiguration, which are most easily expressed in terms of the \defn{reconfiguration graph} that has a vertex for each configuration (in our case, each non-crossing spanning tree) and an edge for each reconfiguration step.  The problems are:
(1) connectivity of the reconfiguration graph---is reconfiguration always possible? 
(2) diameter of the reconfiguration graph---how many flips are needed for reconfiguration in the worst case? and (3) distance in the reconfiguration graph---what is the complexity of finding the minimum number of flips to reconfigure between two given configurations? 

For reconfiguration of non-crossing spanning trees,
Avis and Fukuda~\cite[Section 3.7]{avis1996reverse} proved 
that reconfiguration is always possible, 
and that at most $2n-4$ flips are needed. 
Hernando et al.~\cite{hernando1999geometric} proved a lower bound of $1.5n - 5$ flips.
Even for the special case of points in convex position, there are no better upper or lower bounds known. 


Our two main results 
make some progress in reducing the diameter upper bounds.

{\bf (1)} For points in general position, we give a reconfiguration algorithm 
that uses at most $2n-3$ flips but reduces that to $1.5n-2$ flips
in two cases: (1) when the points are in convex position and one tree is a path; (2) for general point sets when one tree is a monotone path. 
The algorithm first flips one tree
to a downward tree
(with each vertex connected to a unique higher vertex)
and the other
tree to an upward tree (defined symmetrically)
using $n-2$ flips---this is where we save when one tree is a path.  After that, we give an algorithm to  flip from a downward tree $T_D$ to an upward tree $T_U$ using at most $n-1$ 
``perfect'' flips each of which removes an edge of $T_D$ and adds an edge of $T_U$. The algorithm is simple to describe, but proving that intermediate trees remain non-crossing is non-trivial.
We also show that $1.5n - 5$ flips
may be required, even in the two special cases.
See Section~\ref{sec:2phase}.

{\bf (2)} For points in convex position, we improve the upper bound on the number of required flips to
$2d - \Omega(\log d)$ where $d$ is half the size of the symmetric difference between the trees.
So $d$ flips are needed in any flip sequence, and $2d$ is an upper bound.
The idea is to find an edge $e$ of one tree that is crossed by at most (roughly) $d/2$ edges of the other tree, 
flip all but one of the crossing edges temporarily to the convex hull (this will end up costing 2 flips per edge), and then flip the last crossing edge to $e$. Repeating this saves us one flip, compared to the $2d$ bound, for each of the (roughly) $\log d$ repetitions. 
See Section~\ref{sec:improved-bound}.

Notably, neither of our algorithms uses the common---but perhaps limited---technique of 
observing that the diameter is at most twice the radius, and bounding the radius of the reconfiguration 
graph by identifying a ``canonical'' configuration that every other configuration can be flipped to.  Rather, our
algorithms find reconfiguration sequences tailored to the specific input trees.

In hopes of
making further progress on the diameter and distance problems,
we address the question of \emph{which} edges need to be involved in a minimum flip sequence from an initial non-crossing spanning tree $T_I$ to a final non-crossing spanning tree $T_F$. 
We say that the edges of $T_I \cap T_F$ are \defn{happy} edges, and we formulate the \defn{Happy Edge Conjecture} that for points in convex position, there is a minimum flip sequence that never flips happy edges. 
We prove the conjecture for happy convex hull edges. 
See Section~\ref{sec:happy-edges}.
More generally, we say that a reconfiguration problem has the ``happy element property'' if elements that are common to the initial and final configurations can remain fixed in a minimum flip sequence.  
%
Reconfiguration problems that satisfy the happy element property seem easier.
For example, the happy element property holds for reconfiguring 
spanning trees in a graph (and indeed for matroids more generally), and the distance problem is easy.
On the other hand, 
the happy element property
fails for reconfiguring triangulations of a point set in the plane, and for the problem of \emph{token swapping} on a tree~\cite{biniaz2019token}, 
and in both cases, 
this is the key to constructing gadgets 
to prove that the distance problem is NP-hard~\cite{LUBIW201517,pilz2014flip,aichholzer2021hardness-tokens}.
As an aside, we note that 
for reconfiguring triangulations of a set of points in convex position---where the distance problem is the famous open question of rotation distance in binary trees---the happy element property holds~\cite{sleator1988}, which may be why no one has managed to prove that the distance problem is NP-hard. 
%

Finally, 
we implemented a combinatorial search program to compute the diameter (maximum reconfiguration distance between two trees) and radius of the reconfiguration graph for points in convex position.
For $6 \leq n \leq 12$ the diameter is $\lfloor 1.5n-4 \rfloor$ and the radius is $n-2$. In addition we provide the same information for the special case when the initial and final trees are non-crossing spanning paths, though intermediate configurations may be more general trees. We also verify 
the Happy Edge Conjecture for $n \leq 10$ points in convex position.
See Section~\ref{sec:exhaustive}.

\remove{ old version
Our two main results make some progress on these diameter bounds.  For points in convex position, we 
give a reconfiguration algorithm that uses at most $2n - \Omega(\log n)$ flips. 
For points in general position, we give a reconfiguration algorithm that uses at most $2n-3$ flips but reduces that to $1.5n-2$ flips when one tree is a path and either: the points are in convex position; or the path is monotone in some direction.
Notably, neither of our algorithms uses the common (but perhaps limited) technique of 
\changed{observing that the diameter is at most twice the radius,} and bounding the radius of the reconfiguration 
graph by identifying a ``canonical'' configuration that every other configuration can be flipped to.  Rather, our reconfiguration sequences are tailored to the specific input trees.

In hopes of
making further progress on the diameter and distance problems,
we address the question of \emph{which} edges need to be involved in a minimum flip sequence from an initial non-crossing spanning tree, $T_I$ to a final non-crossing spanning tree $T_F$. 
We say that the edges of $T_I \cap T_F$ are \defn{happy} edges, and we formulate the ``Happy Edge Conjecture'' that for points in convex position, there is a minimum flip sequence that never flips happy edges.  
More generally, we say that a reconfiguration problem has the ``happy element property'' if elements that are common to the initial and final configurations can remain fixed in a minimum flip sequence.  
%
Reconfiguration problems that satisfy the happy element property seem easier.
For example, the happy element property holds for reconfiguring 
spanning trees in a graph (and indeed for matroids more generally), and the distance problem is easy.
On the other hand, 
the happy element property
fails for reconfiguring triangulations of a point set in the plane, and for the problem of \emph{token swapping} on a tree~\cite{biniaz2019token}, 
and in both cases, 
this is the key to constructing gadgets 
to prove that the distance problem is NP-hard~\cite{}.
(As an aside, we note that 
for reconfiguring triangulations of a set of points in convex position---where the distance problem is the famous open question of rotation distance in binary trees---the happy element property holds, which may be why no one has managed to prove that the distance problem is NP-hard.) 
We make progress on our
\changed{Happy Edge Conjecture} 
for reconfiguring non-crossing spanning trees of points in convex position by proving that happy edges on the convex hull need not be flipped.  
See Section~\ref{sec:happy-edges}.

\changed{
We implemented a combinatorial search program to compute the diameter (maximum reconfiguration distance between two trees) and radius of the reconfiguration graph for points in convex position.
We give results for $n \le 12$, and we verify the Happy Edge Conjecture for $n \le 10$.
%
}

\remove{We also explore one aspect of the distance problem---is it NP-hard to find the minimum number of flips to reconfigure
one non-crossing spanning tree to another.  
There is a property that seems to separate hard from easy reconfiguration distance problems.
Reconfiguration of bases of a matroid is easy in part because ``happy'' elements that are common to the initial and final bases never need to be removed during a minimum reconfiguration sequence.  This ``happy element property'' fails for reconfiguring triangulations of a point set in the plane, and for the problem of \emph{token swapping} on a tree, and in both cases, the fact that happy elements may need to flip is the key to constructing gadgets for an NP-completeness proof.
Interestingly, the happy element property holds for triangulations of a set of points in convex position, and this may be why no one has found an NP-hardness proof in this case.  
We conjecture that the happy element property holds for reconfiguring non-crossing spanning trees of a set of points in convex position.

\remove{ 
Several similar NP-hardness results for reconfiguration distance rely on reconfiguration being ``non-monotone'' in the sense that 
a ``happy'' element that is in both the initial and final configurations may need to be removed in a minimum length reconfiguration sequence.  


This property holds, for example, for flipping triangulations of a points set, and is key to the proof that the corresponding distance problem is NP-hard. 

For example flip distance in a triangulation, and token swapping 
\anna{say more}.  The complexity of flip distance for triangulations of a convex point set is famously open, and one barrier to finding an NP-completeness proof is that an edge that is in the initial and final triangulations will never be removed in a minimum flip sequence.
}
}

%

\paragraph*{Results}
\begin{enumerate}

\item For points in general position, we give a reconfiguration algorithm 
\anna{This repeats a bit from above.  Maybe say less above?}
that uses at most $2n-3$ flips but reduces that to $1.5n-2$ flips
in two cases: (1) when the points are in convex position and one tree is a path; (2) for general point sets when one tree is a monotone path. 
The algorithm first flips one tree
to a downward tree
(with each vertex connected to a unique higher vertex)
and the other
tree to an upward tree (defined symmetrically)
using $n-2$ flips---this is where we save when one tree is a path).  After that, we give an algorithm to  flip from a downward tree $T_D$ to an upward tree $T_U$ using at most $n-1$ 
``perfect'' flips that remove an edge of $T_D$ and add an edge of $T_U$. The algorithm is simple to describe, but proving that intermediate trees remain non-crossing is non-trivial.
We also show that $1.5n - 5$ flips
may be required, even in the two special cases.
See Section~\ref{sec:2phase}.

\item For points in convex position, we improve the upper bound on the number of required flips to
$2d - \Omega(\log d)$ where $d$ is half the size of the symmetric difference between the trees.
So $d$ flips are needed in any flip sequence, and $2d$ is an upper bound.
The idea is to find an edge $e$ of one tree that is crossed by at most (roughly) $d/2$ edges of the other tree, 
flip all but one of the crossing edges temporarily to the convex hull (this will end up costing 2 flips per edge), and then flip the last crossing edge to $e$. Repeating this saves us one flip (compared to the $2d$ bound) for each of the (roughly) $\log d$ repetitions. 
See Section~\ref{sec:improved-bound}.

\item
We conjecture that for points in convex position, there is a minimum flip sequence that never flips happy edges, and we prove the conjecture for happy convex hull edges. 
See Section~\ref{sec:happy-edges}.

\changed{
\item We give the results of an exhaustive search for point sets in convex position of size up to $n=12$. We computed the maximum reconfiguration distance between two trees (the diameter of the reconfiguration graph) as well as the radius of the reconfiguration graph. For $6 \leq n \leq 12$ the diameter is $\lfloor 1.5n-4 \rfloor$ and the radius is $n-2$. In addition we provide the same information for the special case when the trees are non-crossing spanning paths. We also verify that for $n \leq 10$ points in convex position there is always a minimum flip sequence that never flips happy edges. See Section~\ref{sec:exhaustive}.
}
\jaysonl{What if any bounds does this experement suggest? Ex, is it $1.5n-3$ for all $n<12$?}
\oswin{Hm, more like $1.5n-4$ for $n=6,8,10,12$ and $1.5n-4.5$ for $n=7,9,11$, see Table~\ref{tab:exhaustive}. I added a sentence about this in the above paragraph, please check if you like it.}
\end{enumerate}
} 

\subsection{Background and Related Results}
\label{sec:background}


Reconfiguration is about changing one structure to another, either through continuous motion or through discrete changes. 
In mathematics, the topic has a vast and deep history, for example in knot theory, 
and the study of bounds on the simplex method for linear programming.  
Recently, reconfiguration via discrete steps has become a focused research area, see the surveys by van den Heuvel~\cite{vandenheuvel2013complexity} and Nishimura~\cite{nishimura2018introduction}.  
Examples include sorting a list by swapping pairs of adjacent elements, solving a Rubik’s cube, or changing one colouring of a graph to another. 
With discrete reconfiguration steps, the reconfiguration graph is well-defined.  Besides questions of diameter and distance in the reconfiguration graph, there is also research on enumeration via a Hamiltonian cycle in the reconfiguration graph, see the recent survey~\cite{mutze2022combinatorial}, and on mixing properties to find random configurations, see~\cite{randall2006rapidly}.

Our work is about reconfiguring one graph to another.
Various reconfiguration steps have been considered, for example exchanging one vertex for another (for reconfiguration of paths~\cite{bonsma2013complexity}, independent sets~\cite{ito2020parameterized,avis2022reconfiguration}, etc.), or exchanging multiple edges
(for reconfiguration of matchings~\cite{bonamy_et_al:LIPIcs:2019:11024}).  
However, we concentrate on elementary steps (often called \emph{edge flips}) that exchange  one edge for one other edge.
A main example of this is reconfiguring one spanning tree of a graph to another, which can always be accomplished by ``perfect'' flips that add an edge of the final tree and delete an edge of the initial tree---more generally, such a perfect exchange sequence is possible when reconfiguring bases of a matroid.
%

Our focus is on geometric graphs whose vertices are points in the plane
and whose edges are non-crossing line segments between the points.
In this setting, one well-studied problem is reconfiguring between triangulations of a point set in the plane, see the survey by Bose and Hurtado~\cite{BOSEHURTADO200960}. Here, a flip replaces 
an edge in a convex quadrilateral by the other diagonal of the quadrilateral. 
For the special case of $n$ points in convex position this is equivalent to rotation of an edge in a given (abstract) rooted binary tree, which is of interest in the study of splay trees, and the study of phylogenetic trees in computational biology.
While an upper bound for the reconfiguration distance of $2n-10$ is known to be tight 
for $n > 12$~\cite{pournin2014diameter,sleator1988},
the complexity of computing the shortest distance between 
two triangulations of a convex point set
(equivalently between two given binary trees) is still unknown. See~\cite{amp-fdtsp-15,LUBIW201517,pilz2014flip} for related hardness results for the flip-distance of triangulations of point sets and simple polygons.


Another well-studied problem for geometric graphs is  reconfiguration of non-crossing perfect matchings.  Here a flip typically exchanges matching and non-matching edges in  non-crossing cycles, and there are results for a single cycle of unbounded length~\cite{houle2005graphs}, and for multiple cycles~\cite{AICHHOLZER2009617,razen2008lower}.
For points in convex position, the flip operation on a single cycle of length 4 suffices to connect the reconfiguration graph~\cite{Hernando02graphsof}, but this is open for general point sets.
For a more general flip operation that connects any two disjoint matchings whose union is non-crossing, the reconfiguration graph is connected for points in convex position~\cite{aam-dcgnc-15} (and, notably, the diameter is less than twice the radius), but connectivity is open for general point sets~\cite{AICHHOLZER2009617,IST-DCGM-2013}. 
%

\remove{ \anna{Oswin's more detailed version:}
Another class of graphs for which many reconfiguration results exist are non-crossing perfect matchings on points in the plane. For $n=2m$ points in convex position it is shown in~\cite{Hernando02graphsof} that the reconfiguration graph of non-crossing perfect matchings is connected with diameter $m-1$. Here a flip exchanges two matching edges that form a convex 4-cycle by the other two matching edges of the cycle. This can be generalized in the following way.
Two non-crossing perfect matchings are connected via an edge in the reconfiguration graph if their symmetric difference consists of non-crossing cycles. Note that for the case of points in convex position a flip consists of a single cycle that has length 4.

In~\cite{houle2005graphs} it is shown that also for general point sets the reconfiguration graph is connected with diameter at most $n-1$ when a flip is a single cycle of unbounded length.
The bound on the diameter has been improved to $O(\log n)$~\cite{AICHHOLZER2009617} when the number of cycles in not bounded, which is equivalent to the fact that the union of the two matchings is non-crossing. A corresponding lower bound of $\Omega(\log n/ log log n)$ is given by Razen~\cite{razen2008lower}. However, in these cases a flip in the reconfiguration graph is not local (constant size) but can have linear size. It is an open problem if the reconfiguration graph for points in general position is connected for a single cycle with a length bounded by some constant. So far it is only known that even for a single 4-cycle the graph does not contain isolated singletons~\cite{houle2005graphs}.

\anna{Let's leave out the bichromatic version.}
In the bichromatic variant the set of $n=2m$ points consists of $m$ blue and $m$ red points. A matching edge always connects a blue and a red point. Again, there is an edge between two non-crossing perfect matchings in the reconfiguration graph if their union is non-crossing.
It has been shown that the reconfiguration graph is connected~\cite{ALOUPIS2015622} with linear diameter~\cite{abhpv-ltdbm-17}.

\anna{I added this into the brief summary above.}
In another variation of the problem disjoint compatible non-crossing perfect matchings are considered, that is, two matchings in the reconfiguration graph are connected if their union is non-crossing and they do not have an edge in common. For points in convex position this reconfiguration graph is connected~\cite{aam-dcgnc-15}, but in general this problem is open~\cite{AICHHOLZER2009617,IST-DCGM-2013}.
}

\remove{Considering reconfiguration operations between graphs (typically of constant size and often called an exchange or a flip) such that both graphs do belong to the same graph class is a well-studied subject in many areas, see~\cite{BOSEHURTADO200960} for a nice survey.
Prominent examples are flips in triangulations, where an edge in a convex quadrilateral is replaced by the other diagonal of the quadrilateral, or the rotation of an edge in a given (abstract) rooted binary tree. The latter is related to the associativity rule for strings of $n$ symbols and is thus of importance in computer science. It is also equivalent to edge exchanges in triangulations of $n+2$ points in convex position. While an upper bound for the reconfiguration distance of $2n-10$ is known to be tight for $n \geq 13$~\cite{pournin2014diameter,sleator1988}, the complexity of computing the shortest distance between two given binary trees (equivalently between two triangulations of a convex point set) is still unknown. See~\cite{amp-fdtsp-15,LUBIW201517,pilz2014flip} for related hardness results for the flip-distance of triangulations of point sets and  simple polygons.
} 

The specific geometric graphs we study are non-crossing (or ``plane'') spanning trees of a set of points in the plane. 
For points in convex position, these have been explored for enumeration~\cite{noy1998enumeration}, and for a duality with quadrangulations and consequent lattice properties, e.g., see~\cite{apostolakis2021non} and related literature.  

For non-crossing spanning trees of a general point set in the plane,
%
there 
are
several basic reconfiguration operations that can be used to transform these trees into each other. The one we use in this work is the simple {\it edge exchange} of an edge $e$ by an edge $e'$ as described above. If 
we require that
$e$ and $e'$ do not cross, then this operation is called a {\it compatible edge exchange}. Even more restricted is an {\it edge rotation}, where 
$e=uv$ and $e'=uw$ 
share a common vertex $u$.
If the triangle $u,v,w$ is not intersected by an edge of the two involved trees, then this is called an {\it empty triangle edge rotation}. The most restricted operation is the {\it edge slide} (see also Section~\ref{sec:edge-slides}) where the edge $vw$ has to exist in both trees. The name comes from viewing this transformation as sliding one end of the edge $e$ along $vw$  to $e'$ (and rotating the other end around $u$) and at no time crossing through any other edge of the trees.
For an overview and detailed comparison of the described operations see Nichols et al.~\cite{nichols2020transition}.

It has been shown that the reconfiguration graph of non-crossing spanning trees is even connected
for the ``as-local-as-possible'' edge slide operation~\cite{aah-sstft-00}. See also~\cite{aichholzerreinhardt2007155} where a tight bound of $\Theta(n^2)$ steps for the diameter is show. This implies that also for the other reconfiguration operations the flip graph is connected.
For edge exchange, compatible edge exchange, and edge rotation a linear upper bound for the diameter is known~\cite{avis1996reverse}, while for empty triangle edge rotations an upper bound of $O(n \log n)$ has been shown recently~\cite{nichols2020transition}.
For all operations (except edge slides) the best known lower bound is $1.5n-5$~\cite{hernando1999geometric}.

There 
are several variants for the reconfiguration of spanning trees. For example, the operations can be performed in parallel, that is, as long as the exchanges (slides etc) do not interfere with each other, they can be done in one step; see~\cite{nichols2020transition} for an overview of results.
In a similar direction of a more global operation we can say that two non-crossing spanning trees are compatible, if their union is still crossing free. A single reconfiguration step then transforms one such tree into the other. A lower bound of $\Omega(\log n / \log \log n)$~\cite{buchin2009transforming} and an upper bound of $O(\log n)$~\cite{aah-sstft-00} for the diameter of this reconfiguration graph has been shown.

Another variation is that the edges are labeled (independent of the vertex labels), and if an edge is exchanged the replacement edge takes that label. In this way two geometrically identical trees can have a rather large transformation distance.
For labeled triangulations 
there is a good characterization of when reconfiguration is possible, and a polynomial bound on the number of steps required~\cite{lubiw2019proof}.

The reconfiguration of non-crossing spanning paths (where each intermediate configuration must be a path) has also been considered. For points in convex position, the diameter of the reconfiguration graph is $2n-6$ for $n \ge 5$~\cite{akl2007planar,chang2009diameter}. Surprisingly, up to now it remains an open problem if the reconfiguration graph of non-crossing spanning paths is connected for general point sets~\cite{aklmmopv-fpsp-22}.





\subsection{Definitions and Terminology}

Let $P$ be a set of $n$ points in 
\emph{general position}, meaning that no three points are collinear. 
The points of $P$ are in \defn{convex position} if the boundary of the convex hull of $P$ contains all the points of $P$. 
%
An \defn{edge} is a line segment joining two points of $P$, and a \defn{spanning tree} $T$ of $P$ is a set of $n-1$ edges that form a tree.  Two edges \defn{cross} if they intersect but do not share a common endpoint. 
A \defn{non-crossing spanning tree} is a spanning tree such that no two of its edges cross. 
When we write ``two non-crossing spanning trees,'' we mean that each tree is non-crossing but we allow edges of one tree to cross edges of the other tree.

We sometimes consider the special case where a non-crossing spanning tree of $P$ is a path.  A path is \defn{monotone} if there is a direction in the plane such that 
the order of points along the path matches the order of points in that direction.

For a spanning tree of a graph, 
a \defn{flip} removes one edge and adds one new edge to obtain a new spanning tree, i.e., spanning trees $T$ and $T'$ are related by a flip if $T' = T \setminus \{ e \} \cup \{e'\}$, where $e \in T$ and $e' \notin T$.  
The same definition applies 
to non-crossing spanning trees: If $T$ is a non-crossing spanning tree of $P$, and $T' = T \setminus \{ e \} \cup \{e'\}$ is a non-crossing spanning tree of $P$, then we say that $T$ and $T'$ are related by a \defn{flip}.  We allow $e$ and $e'$ to cross.

Let $T_I$ and $T_F$ be initial and final non-crossing spanning trees of $P$.  A \defn{flip sequence} from $T_I$ to $T_F$ is a sequence of flips that starts with $T_I$ and ends with $T_F$ and such that each intermediate tree is a non-crossing spanning tree.
We say that $T_I$ can be reconfigured to $T_F$ 
\defn{using $k$ flips} (or ``in $k$ steps'') if there is a reconfiguration sequence of length at most $k$.
The \defn{flip distance} from $T_I$ to $T_F$ is the minimum length of a flip sequence.

The edges of $T_I \cap T_F$ are called \defn{happy} edges.  
Thus, $T_I \cup T_F$ consists of the happy edges together with the symmetric difference $(T_I \setminus T_F) \cup (T_F \setminus T_I)$.  We have $|T_I \setminus T_F| = |T_F \setminus T_I|$.  A flip sequence of length $|T_I \setminus T_F|$ is called a \defn{perfect flip sequence}.  In a perfect flip sequence, every flip removes an edge of $T_I$ and adds an edge of $T_F$---these are called \defn{perfect flips}.

\remove{
\bigskip
\anna{I think the parts below are now covered above.  OK, Yushi?} 
Let $P$ be a set of $n$ points {\color{red}(we may say ``in a general position" here?)} in the plane, 
where no three points are collinear. 
A \emph{geometric spanning tree} over a point set $P$ is a tree that spans all $n$ points 
by regarding that $n$ points form the complete graph $K_n$, 
where each edge is a straight line segment connecting two points. 
We refer to a geometric spanning tree simply as a \emph{spanning tree} in this paper. 
A spanning tree is \emph{non-crossing} if its no two edges cross. 
\yushi{I think that we need more here.}
\anna{I added a definition of ``edge'' to help this.}
}

\remove{
\subsection{Reconfigurations of Spanning Trees}

For a non-crossing spanning tree $T$ on a set of $n$ points, 
an edge flip transformation (\emph{edge flip}, or simply \emph{flip})
of two edges $e$ $(\in T)$ and $e'$ $(\not\in T)$ is an operation 
to transform $T$ into a new non-crossing spanning tree $T\setminus \{e\} \cup \{e'\}$. 
(For a tree $T$, we often use $T$ itself to denote its edge set.)
\yushi{($\leftarrow$ it would be more convenient to define flip between two spanning trees from first, depending on how it is used in the later discussions)}
Let $T_I$ and $T_F$ be two non-crossing spanning trees, i.e., 
an initial and a final spanning trees, respectively. 
We consider to reconfigure $T_I$ to $T_F$ 
by performing a sequence of flips (\emph{reconfiguration step}) one by one, 
and now the following problem is studied in this paper. 
\anna{Note that to be really precise, we should say that each intermediate tree should be non-crossing spanning. Done above.}
\iln

\anna{Hmm.  The problems of diameter and of distance are slightly different from each other.  Most of our results are on diameter, so we don't solve the problem below. }
\yushi{Right. But I feel that the motivation comes from this original setting also, 
and our problem (diameter of the solution space (or, reconfiguration graph)) 
is one of its derivations. 
So we could give explanations to fill in the gap, 
or shall we describe our problem without it from the start? 
Or even, are we interested in the diameter from the start, 
and we could write so?}

\begin{center}
\fbox{\parbox{0.85\linewidth}{\noindent
\textsc{Non-crossing Spanning Tree Reconfiguration}\\[.8ex]
\begin{tabular*}{.93\textwidth}{l}
{\em Input:} two non-crossing spanning trees $T_I$ and $T_F$ on a set of $n$ points, \\
{\em Question:} minimum number of flips to reconfigure $T_I$ into $T_F$. 
\end{tabular*}
}}
\end{center}

\yushi{add an example figure to explain the problem?}
}

\remove{
\noindent
{\color{red}\bf We could postpone to give following definitions until when they are needed.}

Given $T_I$ and $T_F$, an edge $e$ $(\in T_I\cup T_F)$ is called \emph{happy} 
if $e \in T_I \cap T_F$; it is \emph{unhappy}, otherwise, that is, $e\in T_I \triangle T_F$. 
For convenience of discussions, 
we will color the edges of $T_I$ and $T_F$ red and blue, respectively, throughout the paper. 
A flip is \emph{perfect} if we flip an edge in $T_I$ and an edge in $T_F$. We say that a \emph{flip sequence is perfect} if it consists only of perfect flips.
}


\remove{
\bigskip
\noindent
{\color{red}\bf Remarks. Some terms that are still used without definitions, etc.}

a ``monotone" path \anna{done}

reconfiguration graph \anna{now defined in the intro}

should replace path by ``spanning" path, in contrast to spanning tree \anna{it's ok so long as we say that one spanning tree is a path}

general position $\rightarrow$ in what sense? no 3 points are collinear? or not in convex position? 
mostly used in the latter sense, but not stable. 
\anna{no 3 points collinear.  As Nichols et al. say, the assumption does not matter, but avoids special cases}


edge flip vs edge ``slide" (as a special case of edge flip)
\anna{I think we can define edge slides in the relevant section}

convex hull edge (trivial and need no formal definition?)
\anna{right, needs no formal definition}

flip distance (between two non-crossing spanning trees)
\anna{done}
}

%

\section{A two-phase reconfiguration approach}
\label{sec:2phase}

In this section we give a new algorithm to reconfigure between two non-crossing spanning trees on $n$ points using at most $2n-3$ flips.  This is basically the same as the upper bound of $2n-4$ originally achieved by Avis and Fukuda~\cite{avis1996reverse}, but the advantage of our new algorithm is that it gives a bound of $1.5n-2$ flips when one tree is a path and either: (1) the path is monotone; or (2) the points are in convex position.
Furthermore, for these two cases, we show a lower bound of $1.5n -5$ flips, so the bounds are tight up to the additive constant.

Before proceeding, we mention one way in which our upper bound result differs from some other reconfiguration bounds.
Many of those bounds (i.e., upper bounds on the diameter $d$ of the reconfiguration graph) are actually bounds on the \emph{radius} $r$ of the reconfiguration graph. 
The idea is to identify a 
``canonical'' configuration and  prove that its distance to any other configuration is at most $r$, thus proving that the diameter $d$ is at most $2r$.  For example, Avis and Fukuda's $2n$  bound is achieved via a canonical star centered at a convex hull point. As another example, 
the bound of  $O(n^2)$ flips between triangulations of a point set can be proved using the Delaunay triangulation as a canonical configuration, and 
the bound of $2n$ flips for the special case of points in convex position uses a canonical star triangulation~\cite{sleator1988}. 
For some reconfiguration graphs  $d$ is equal to $2r$ (e.g., for the Rubik's cube, because of the underlying permutation group). However, in general, $d$ can be less than $2r$, in which case, 
using a canonical configuration will not give the best diameter bound.
Indeed, our result does not use a canonical configuration, and we do not bound the radius of the reconfiguration graph. 


\remove{Introduce this section.  Most bounds on the number of  reconfiguration steps are obtained by reconfiguring the initial and final configurations to some ``canonical'' configuration.  
For example, the $2n$ upper bound of Avis and Fukuda used as a canonical configuration a  
star centered at a convex hull point (Avis and Fukuda).  Also, for convex point sets, one can obtain the $2n$ upper bound using a path on the convex hull as the canonical configuration.  Using a single canonical configuration is fine if the diameter and radius (is it called eccentricity?) of the flip graph are the same asymptotically---but why should they be? and then we will need different techniques.  Note that our proof of a better bound in the  previous section did not use a canonical configuration.

Here we use a different technique . . . 

\therese{This also needs an outlook---what is the main result of this section?   
}
}

Our algorithm has two phases.  In the first phase, we reconfigure the input trees in a total of at most $n-2$ flips so that one is ``upward'' and one is ``downward'' (this is where we save 
if one tree is a path).  In the second phase we show that an upward tree can be reconfigured to a downward tree in $n-1$ flips.  We begin by defining these terms.

\remove{
\mati{How about a slightly more detailed overview? Attempt below: }
\anna{The above is sufficient.  The theorems come soon enough and are clear.  We don't need to introduce ``familes'' or ``pseudo-canonical''.}
Rather than having a single canonical form, we have two families of pseudo-canonical forms called ({\em upward} and {\em downward}). We show how any tree can be transformed into one of the two families using roughly $n/2$ flip (Theorem~\ref{theo:canonicalize}). \anna{That's not what we use from the theorem.  The version I wrote above is accurate.}
Next, we show that we can reconfigure from any tree one family to a tree in the other family with at most $n$ flips  (Theorem~\ref{theo:twocanonical}). The two results combined show how to reconfigure between any two trees using approximately $n/2+n+n/2=2n$ flips. This two step approach gives us more flexibility to tackle the problem. In particular, we can use an alternate definition of upward/downward families so that for some specific cases (such as points when points are in convex position and one of the trees is a path) one of the input trees already is in pseudo-canonical form. 
\anna{We don't use an alternate definition of upward/downward.}
This allows us to save $n/2$ flips (Theorem~\ref{thm:convex-path}). Finally, we show that our results are almost tight, since we provide an instance requiring $1.5n$ many flips (Theorem~\ref{theo:lower}) (for simplicity in the description, this overview ignores $O(1)$ terms ). \anna{Announcing the lower bound is good.  I've added it to the above.}
\mati{end of description}
}

Let $P$ be a set of $n$ points in general position.  Order the points $v_1, \ldots, v_n$ by increasing $y$-coordinate
(if necessary, we  slightly perturb the point set to ensure that no two $y$-coordinates are identical). 
Let $T$ be a non-crossing spanning tree of $P$.
\remove{For a node $v_i$, the \defn{upward degree} $d^+(v_i)$ is the number of edges in $T$ connecting $v_i$ to a node with a higher $y$-coordinate (i.e, the number of edges pointing \defn{upward}). Similarly, the \defn{downward degree} $d^-(v_i)$ is the number of edges in $T$ connecting $v_i$ to a node with a lower $y$-coordinate (pointing \defn{downward}).}
Imagining the edges as directed upward, 
we call a vertex $v_i$ a \defn{sink} if there are no edges in $T$ connecting $v_i$ to a higher vertex $v_j, j > i$, and we call $v_i$ a \defn{source} if there are no edges connecting $v_i$ to a lower vertex $v_k, k < i$. 
We call $T$ a \defn{downward tree} if it has only one sink (which must then be $v_n$) and we call $T$ an \defn{upward tree} if it has only one source (which must then be $v_1$).
Observe that in a downward tree every vertex except $v_n$ has exactly one edge connected to a higher vertex, and in an upward tree every vertex except $v_1$ has exactly one edge connected to a lower vertex.

\remove{
Informally, the \defn{upward excess} of $T$ 
\changed{counts the number of edges that must be flipped}
to turn $T$ into a downward tree. Formally, the upward excess is defined as 
\[e^+(T) = \sum_{v \in T, d^+(v) \ge 1} d^+(v)-1.\] 
\therese{It's not clear that the two definitions of $e^+$ are the same.   Do we know that we can always flip $e^+$ edges to turn $T$ into a downward tree?   I would omit the former definition.}
\anna{The intuition is not saying you can realize this number (the algorithm is needed for that claim!) but just that you MUST flip at least this many edges. Let's keep that intuition and just write it more clearly (see my blue change above). 
}
The \defn{downward excess}, $e^-(T)$, is defined similarly.
Note that $e^+(T)$ and $e^-(T)$ are at most $n-2$, since at most every edge but one can contribute 1 to the sum.  Furthermore, this bound is realized by a star centered at the first or last vertex.

\therese{I would suggest a reformulation here.   Call a vertex $v$ a \defn{sink} if $d^+(v)=0$ and a \defn{source} if $d^-(v)=0$.  Then
$$e^+(T)=\sum_{v\in T} (d^+(v)-1) + \# \text{sinks} = m-n+\#\text{sinks} = \#\text{sinks}-1$$
and similarly $e^-(T)=\#\text{sources}-1$.  I would suggest that we {\em define} $e^+$ and $e^-$ via sources and sinks right away.   It's very clear that the proof of Theorem~\ref{thm:tree-downward} works exactly the same, since each step removes a sink.   But the proof of Lemma~\ref{lem:tree-up-down} gets simpler because quite obviously a vertex cannot be both source and sink in a connected graph.}
\anna{Hmm.  I'm not convinced that this is more intuitive.  I kind of like the original.}
}

\subsection{Phase 1: Reconfiguring to upward/downward trees}
\label{sec:phase1}

We first bound the number of flips needed to reconfigure a single tree $T$ to be upward or downward.
If a tree has $t$ sinks, then we need at least $t-1$ flips to reconfigure it to a downward tree---we show that $t-1$ flips suffice.  Note that $t$ is at most $n-1$ since $v_1$ cannot be a sink (this bound is realized by a  star at $v_1$).

\begin{theorem}\label{theo:canonicalize}
\remove{
Let $T$ be a non-crossing spanning tree.
$T$ can be reconfigured to a downward tree with $e^+(T) \le n-2$ flips. 
$T$ can be reconfigured to an upward tree with $e^-(T) \le n-2$ flips. 
}
Let $T$ be a non-crossing spanning tree with $s$ sources and $t$ sinks.
$T$ can be reconfigured to a downward tree with $t-1 \le n-2$ flips. 
$T$ can be reconfigured to an upward tree with $s-1 \le n-2$ flips. 
Furthermore, these reconfiguration sequences do not flip any edge of the form
$v_iv_{i+1}$ where $1 \le i < n$. 
%
%
%
\label{thm:tree-downward}
\end{theorem}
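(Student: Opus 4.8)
The plan is to prove the downward statement by induction on the number $t$ of sinks, spending one flip per step to destroy exactly one sink; the upward statement then follows by reflecting $P$ through a horizontal line, which swaps sources and sinks, turns downward trees into upward trees, and preserves both non-crossing-ness and the set of edges $v_kv_{k+1}$. First I would record the easy accounting: in a flip that deletes $v_av_b$ and inserts $v_cv_d$ (listing the lower endpoint first in each pair), the only vertex that can newly become a sink is $v_a$ and the only vertex that can stop being a sink is $v_c$, so a single flip lowers the sink count by at most one. Since $v_1$ has all its tree-edges directed upward it is never a sink, whence $1\le t\le n-1$; the target downward tree has exactly one sink ($v_n$), so it suffices to exhibit, whenever $t\ge 2$, a single flip that reduces the sink count to $t-1$ while touching no edge $v_kv_{k+1}$. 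Iterating gives $t-1\le n-2$ flips, as claimed.

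The heart is this single flip. Pick any sink $v_i$ with $i<n$ (one exists since $v_n$ is a sink and $t\ge 2$). The key geometric input is a \emph{visibility lemma}: because there is at least one vertex above $v_i$, some higher vertex $v_j$ is visible from $v_i$, i.e.\ the open segment $v_iv_j$ crosses no edge of $T$ (and, by general position, contains no third point). Granting this, set $e'=v_iv_j$; note $e'\notin T$ since the sink $v_i$ has no upward edge. As $e'$ is crossing-free with all of $T$, the graph $T\cup\{e'\}$ is non-crossing and has a unique cycle $C$ consisting of the $T$-path $\pi$ from $v_i$ to $v_j$ together with $e'$, so deleting \emph{any} edge of $\pi$ yields a non-crossing spanning tree. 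To choose the deleted edge well, I would look at the lowest vertex $w$ of $\pi$: since $\pi$ leaves the sink $v_i$ by a downward edge yet ends at the higher vertex $v_j$, its lowest vertex $w$ is an interior vertex lying below both of its path-neighbours, so $w$ carries two upward path-edges and has upward-degree at least two. Deleting one of these two edges cannot make $w$ a sink (it retains an upward edge) nor its higher endpoint a sink (only a downward edge is lost there), while inserting $e'$ makes $v_i$ a non-sink; hence the sink count drops by exactly one, and the result is a valid non-crossing spanning tree.

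Two points need care, and I expect the \emph{visibility lemma} to be the main obstacle. I would prove it in the standard way via the star-shaped visibility region of $v_i$ with obstacle set $T$: following its boundary starting from the straight-up direction, the first blocking edge has its higher endpoint above $v_i$ (the blocking point lies above $v_i$, so the upper endpoint does too), and walking along the boundary, past any occlusion windows — each of which exposes an obstacle vertex — one reaches, in finitely many steps, an actual vertex above $v_i$ that is visible. The second point is the ``furthermore'' clause. For the \emph{deleted} edge there are two candidates at $w$, and at most one can be of the form $v_kv_{k+1}$ because only one vertex lies immediately above $w$ in the height order, so I simply delete the other. For the \emph{added} edge $e'$ I must ensure $j\ne i+1$; here I would exploit the slack in the construction — among the $t\ge 2$ sinks and their visible higher vertices there is a height-nonconsecutive pair to flip — rather than blindly taking the nearest visible vertex, and verifying that such a pair always exists is the delicate part of this clause. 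With these two choices pinned down, the induction goes through, and the symmetric reflection argument gives the upward bound with $s-1\le n-2$ flips.
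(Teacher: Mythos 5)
Your argument is, in substance, the paper's own proof: induction on the number of sinks with one flip per step; a rotating-ray visibility argument producing a higher vertex $v_j$ visible from the chosen sink $v_i$; and deletion of one of the two upward edges at the lowest vertex $w$ of the resulting cycle, chosen so as not to be of the form $v_kv_{k+1}$. All of that is correct, as is your accounting that only the lower endpoint of a deleted edge can become a sink and only the lower endpoint of an inserted edge can cease to be one.

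The one genuine divergence is your treatment of the \emph{inserted} edge, and there the repair you sketch does not work --- but it is also not needed. The claim that among the sinks there is always one whose visible higher vertex is not its immediate successor is false: place $v_i$ at the origin, $v_{i+1}$ just above it, and let the two tree edges at $v_{i+1}$ descend steeply to far-away points \emph{below} the height of $v_i$, one on each side. These two edges cross the horizontal line through $v_i$ on both sides, so together with the apex $v_{i+1}$ they block every upward ray from $v_i$ except the one ending at $v_{i+1}$; this picture can be completed to a non-crossing tree with exactly two sinks ($v_i$ and $v_n$) in which every valid sink-destroying flip must insert $v_iv_{i+1}$ (removing a blocking edge instead either disconnects a vertex that the new edge cannot reconnect, or leaves a crossing). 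So under your strict reading --- never insert an edge $v_kv_{k+1}$ either --- the theorem with the bound $t-1$ would actually be false. The ``furthermore'' clause is needed, and is proved by the paper, only in the weaker sense that no edge of the form $v_kv_{k+1}$ \emph{present in the tree} is ever \emph{removed}; that is exactly what Corollary~\ref{cor:up-or-down} and Theorem~\ref{tree-and-path} use, namely that happy edges of a monotone path survive the reconfiguration. Your construction already guarantees this: every deleted edge lies at a vertex with two upward edges, while the lower endpoint of every inserted edge is a former sink that keeps exactly one upward edge for the rest of the procedure (later insertions occur only at sinks), so inserted edges --- including a possible $v_iv_{i+1}$ --- are never deleted afterwards. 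Dropping the ``height-nonconsecutive pair'' detour and reading the clause this way, your proof is complete and coincides with the paper's.
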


\begin{proof}
We give the proof for a downward tree, since the other case is symmetric. 
The proof is by induction on $t$.  In the base case, $t=1$ and the tree is downward and so no flips are needed.
Otherwise, let $v_i, 1 < i < n$ be a sink. 
The plan is to decrease $t$ by adding an edge going upward from $v_i$ and removing {some edge $v_k v_l, k<l$} 
from the resulting cycle  while ensuring that $v_k$ does not become a sink. 

If there is an edge $v_i v_j$ that does not cross any edge of $T$, we say that $v_i$ \emph{sees} $v_j$.
We argue that $v_i$ sees some vertex $v_j$ with $j > i$.
If $v_i$ sees $v_n$, then choose $j=n$.  Otherwise the upward ray directed from $v_i$ to $v_n$ hits some edge $e$ before it reaches $v_n$. Continuously rotate the ray around $v_i$ towards the higher endpoint of $e$ until the ray reaches the endpoint or is blocked by some other vertex.
%
%
In either case this gives us a vertex $v_j$ visible from $v_i$ and higher than $v_i$.
For example, in Figure~\ref{fig:phase1}, the sink $v_5$ sees $v_7$.

\begin{figure}[htp]
    \centering
    \includegraphics[page=1,width=0.7\textwidth]{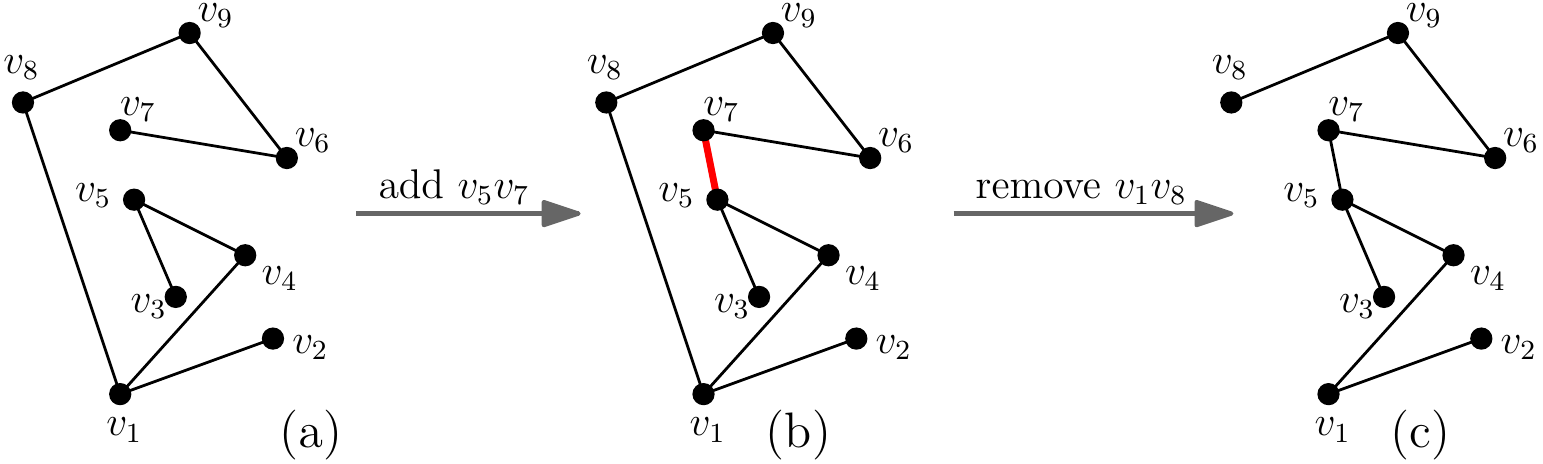}
    \caption{A flip that removes $v_5$ from the set of sinks.
    }
    \label{fig:phase1}
\end{figure}

Adding the edge $v_i v_j$ to $T$ creates a cycle.  Let $v_k$ be the lowest vertex in the cycle.  Then $v_k$ has two upward edges, and in particular, $k < i$.  Remove the edge $v_k v_l$ that goes higher up.  Then $v_k$ does not become a sink, and furthermore, if the edge $v_k v_{k+1}$ is in $T$, then we do not remove it. 
\end{proof}

Since no vertex is both a sink and a source, any tree has $s+t \le n$, which yields the following result that will be useful later on when we reconfigure between a path and a tree.
\begin{corollary}
\label{cor:up-or-down}
Let $T$ be a non-crossing spanning tree.  Then either $T$ can be reconfigured to a downward tree in $0.5n - 1$ flips or $T$ can be reconfigured to an upward tree in $0.5n - 1$ flips.  Furthermore, these reconfiguration sequences do not flip any edge of the form 
$v_iv_{i+1}$ where $1 \le i < n$.
\end{corollary}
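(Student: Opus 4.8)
The plan is to derive this directly from Theorem~\ref{theo:canonicalize} by a simple counting (pigeonhole) argument on the sources and sinks. The theorem already gives me reconfiguration to a downward tree in $t-1$ flips and to an upward tree in $s-1$ flips, where $s$ and $t$ denote the numbers of sources and sinks of $T$; both reconfiguration sequences are guaranteed by the theorem never to flip an edge of the form $v_iv_{i+1}$. So the only thing I need to establish is that at least one of $s-1$ or $t-1$ is at most $0.5n-1$, and the ``furthermore'' clause will then be inherited for free.

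The key step is the observation flagged just before the corollary statement: no vertex can be simultaneously a source and a sink. First I would justify this. A vertex $v_i$ is a sink if it has no edge to a higher vertex and a source if it has no edge to a lower vertex; a vertex that were both would have no incident edges at all. But $T$ is a spanning tree on $n \ge 2$ points, hence connected, so every vertex has degree at least one. Therefore the set of sources and the set of sinks are disjoint, which gives $s + t \le n$.

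From $s+t \le n$ I conclude $\min(s,t) \le \lfloor n/2 \rfloor \le n/2$, so $\min(s,t)-1 \le n/2 - 1 = 0.5n - 1$. I would then split into two cases. If $t \le n/2$, I apply the downward part of Theorem~\ref{theo:canonicalize} to reconfigure $T$ to a downward tree using $t-1 \le 0.5n-1$ flips; if instead $s \le n/2$, I apply the upward part to reconfigure $T$ to an upward tree using $s-1 \le 0.5n-1$ flips. Since $\min(s,t) \le n/2$, at least one of these two cases applies, which establishes the stated ``either/or'' disjunction. The constraint that no edge $v_iv_{i+1}$ is flipped carries over verbatim from Theorem~\ref{theo:canonicalize}, since the flip sequence used in each case is exactly the one that theorem produces.

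I do not expect any real obstacle here: this is a one-line pigeonhole consequence of the preceding theorem. The only point requiring a moment's care is the disjointness of sources and sinks, i.e.\ the inequality $s+t \le n$, which hinges on connectivity forcing minimum degree at least one; everything else is arithmetic.
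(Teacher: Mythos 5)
Your proof is correct and matches the paper's own argument exactly: the paper also derives the corollary from Theorem~\ref{theo:canonicalize} via the observation that no vertex of a connected tree can be both a source and a sink, giving $s+t\le n$ and hence $\min(s,t)-1\le 0.5n-1$. The ``furthermore'' clause is inherited from the theorem precisely as you say.
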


\remove{Consider a flip that removes edge $v_i v_j$, $i < j$ and adds edge $v_k v_l$, $k < l$.
This flip decreases the upward degree at $v_i$ by one and increases the upward degree at $v_k$ by one, and does not change the upward degree of any other vertex.
We say that the flip is 
\defn{improving} if, before the flip, $d^+ (v_i) \ge 2$ 
and $d^+ (v_k) = 0$.  Thus, an improving flip decreases the 
\changed{number of sinks} \remove{upward excess} by one.
To prove the theorem it suffices to show that any non-crossing spanning tree that is not downward has an improving flip.


Remove an edge $v_i v_j$, with $i < j$, whose bottom endpoint 
has $d^+(v_i)\ge 2$.  \changed{Since $v_i$ has at least two upward edges, we can pick $j\neq i+1$, so the removed edge is not $v_iv_{i+1}$.
Removing $v_iv_j$} separates the tree into two components.  Let $v_k$ be the highest vertex in the component that does not contain $v_n$.
Then $d^+(v_k) = 0$.  
 \changed{In Figure~\ref{fig:phase1} for instance, after removing $v_1v_4$, we have $v_k = v_5$.}
 We will show that the two trees can be reconnected by adding some edge $v_k v_l$, $k < l$ that does not cross any other tree edge.  All we need is that $v_k$ \emph{sees} some vertex higher up. 
Consider an initial horizontal ray with origin $v_k$ directed to $x = - \infty$, and a final horizontal ray with origin $v_k$ directed to $x = + \infty$.  Sweep the ray clockwise from the initial to the final position.  
When the ray passes through $v_n$, then $v_k$ either sees $v_n$ (done) or sees an edge. In the latter case, $v_k$ cannot see the same edge throughout the sweep, and  when visibility changes, $v_k$ sees a vertex. \changed{This is $v_6$ in the example from Figure~\ref{fig:phase1}}.
} 



We next bound the number of flips needed to reconfigure two given trees into 
\defn{opposite trees}, 
meaning that one tree is upward and one is downward.
By Theorem~\ref{thm:tree-downward}, we can easily do this in at most $2n - 4$ flips (using $n-2$ flips to reconfigure each tree independently). We now show that $n-2$ flips suffice to reconfigure the two trees into opposite trees. 

\remove{
First we introduce an auxiliary lemma showing that a tree can be flipped into both an upward and a downward tree in at most $n-2$ flips. 

\begin{lemma}
Any non-crossing spanning tree $T$ can be reconfigured into two opposite trees (one upward, one downward) in at most $n-2$ flips. 
\changed{Furthermore, any edge 
of the form
$v_iv_{i+1}, 1 \le i < n$ 
need not be flipped.}
\label{lem:tree-up-down}	
\end{lemma}
\begin{proof}
\changed{
Let $s$ and $t$ be the number of sources and sinks of $T$.   Since $T$ is connected, no vertex can be both a source and a sink, so $s+t\leq n$.   By Theorem~\ref{thm:tree-downward} we
can reconfigure $T$ into an upward tree in $s-1$ flips and into a downward tree in $t-1$ flips, so 
the total is $s+t-2\leq n-2$ flips.
}
\remove{
By Theorem~\ref{thm:tree-downward} we need $e^+(T)$ flips to reconfigure to an upward tree and $e^-(T)$ flips to reconfigure to a downward tree.
The total is $e(T) = e^+(T) + e^-(T)$.
Since $T$ is connected, each node in $T$ is incident to at least one (upward or downward) edge, so the contribution of each vertex $v$ to $e(T)$ is at most $d^+(v) + d^-(v) - 1 = \deg(v) - 1 \ge 0$, where $\deg(v)$ is the degree of $v$ in $T$. Note that $\sum_{v} \deg(v)$ is twice the number of edges, i.e., $2n-2$.
It follows that 
\[e(T) \le \sum_{v} 
(d^+(v) + d^-(v) - 1) = 
\sum_{v} \deg(v) - n \le n-2
\]
}
\end{proof}	

We now use 
Lemma~\ref{lem:tree-up-down} to derive the main result of this section. 
\anna{Wait a sec, we're not using Lemma~\ref{lem:tree-up-down} here!}
} 

\begin{theorem} 
Given two non-crossing spanning trees on the same point set,
we can flip them into opposite trees
in at most $n-2$ flips.
\label{thm:trees-opposite}
\remove{
\pepa{3 minor comments on the wording: (i) maybe clarify that each tree is non-crossing individually? E.g. ``two spanning trees, each non-crossing''. 
\anna{That's awkward. But I  think it's clear enough to say two non-crossing trees.}
(ii) the ``on the same point set'' could be omitted. 
\anna{Hmm.  Here I think this matters, to be precise in the theorem statement.}
(iii) strictly speaking, ``flip them in opposite directions'' is undefined (and maybe could be confusing?). I'd be fine with ``Turn them into opposite trees'' (ideally after we define opposite).
\anna{fixed, thanks.} Another (much) more verbose alternative is ``turn one of them into a downward tree and the other one into an upward tree in at most $n-2$ total flips.''}
\mati{We should define the term "plane spanning tree" early on and use it along the whole paper (or simply say that all trees we look at are plane and spanning). This avoids the issue of whether the noncrossing applies to each tree individually or as a group}
\anna{I favour ``non-crossing'' -- but I put the comment in the Definitions part so others can comment.  BTW, the issue of whether the condition applies to each tree separately or both together is the same whether we use non-crossing or plane.}
}
\end{theorem}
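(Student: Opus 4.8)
The plan is to reduce everything to Theorem~\ref{thm:tree-downward} together with a short averaging argument. Let $T_1$ and $T_2$ denote the two input trees, and let $s_i$ and $t_i$ be the number of sources and sinks of $T_i$ (for $i=1,2$). By Theorem~\ref{thm:tree-downward}, each $T_i$ can be reconfigured into a downward tree using $t_i-1$ flips, and into an upward tree using $s_i-1$ flips. Since ``opposite'' means exactly one tree is upward and the other is downward, there are precisely two admissible assignments: either make $T_1$ downward and $T_2$ upward, at a total cost of $(t_1-1)+(s_2-1)$; or make $T_1$ upward and $T_2$ downward, at a total cost of $(s_1-1)+(t_2-1)$. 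The goal is to show that the cheaper of these two assignments uses at most $n-2$ flips.

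First I would establish the key counting fact that $s_i+t_i\le n$ for each $i$. This holds because a source has no downward edge and a sink has no upward edge, so a vertex that were both a source and a sink would have no incident edges at all; but every vertex of a spanning tree on $n\ge 2$ points has degree at least one, so the sources and sinks of $T_i$ form disjoint sets. (This is the same observation already recorded just before Corollary~\ref{cor:up-or-down}.) Summing the costs of the two admissible assignments then gives
\[
\bigl[(t_1-1)+(s_2-1)\bigr] + \bigl[(s_1-1)+(t_2-1)\bigr]
= (s_1+t_1)+(s_2+t_2)-4 \le 2n-4,
\]
so by averaging, at least one of the two assignments costs at most $n-2$ flips. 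Choosing that assignment and applying the corresponding flip sequences from Theorem~\ref{thm:tree-downward} yields opposite trees within the claimed budget.

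Honestly, there is no serious obstacle here once Theorem~\ref{thm:tree-downward} is in hand; the entire content is the pigeonhole observation that the two ways of forming opposite trees have costs summing to at most $2n-4$, forcing one of them to be at most $n-2$. The only point requiring a moment's care is the disjointness of sources and sinks, which underlies $s_i+t_i\le n$; without it the bound degrades. I would also remark, for completeness, that the ``$v_iv_{i+1}$ edges need not flip'' clause of Theorem~\ref{thm:tree-downward} is inherited automatically, since the flip sequences are taken verbatim from that theorem---a property that becomes relevant later when one of the trees is a path.
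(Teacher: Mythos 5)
Your proof is correct and follows essentially the same route as the paper's: both establish $s_i+t_i\le n$ from the disjointness of sources and sinks, then use a pigeonhole argument on the two ways of pairing up/down assignments (the paper phrases it as ``$s_1+t_2\le n$ or $t_1+s_2\le n$,'' which is exactly your averaging step) and invoke Theorem~\ref{thm:tree-downward} for the cheaper pairing. No gaps; your closing remark about the $v_iv_{i+1}$ edges is a harmless bonus not required by the statement.
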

\begin{proof}
\remove{
Consider flipping $T_1$ upward or flipping $T_1$ downward.  By Lemma~\ref{lem:tree-up-down}, the total number of flips for both alternatives is at most $n-2$.  Similarly for $T_2$. Thus the four possibilities ($T_1$ upward, $T_1$ downward, $T_2$ upward, $T_2$ downward) take at most $2n-4$ flips. This implies that one of the pairs ($T_1$ upward, $T_2$ downward) or ($T_1$ downward, $T_2$ upward) must take at most $n-2$ flips.
}
Let the trees be $T_1$ and $T_2$, and
let $s_i$ and $t_i$ be the number of sources and sinks of $T_i$, for $i=1,2$.  
Since $s_i + t_i \le n$, 
we have
$s_1+t_1+s_2+t_2\leq 2n$.
This implies that 
$s_1+t_2\leq n$ or $t_1+s_2\leq n$.
In the former case use Theorem~\ref{thm:tree-downward} to flip $T_1$ upward and $T_2$ downward in $s_1-1+t_2-1\leq n-2$ flips; otherwise flip $T_1$ downward and $T_2$ upward in $t_1-1+s_2-1\leq n-2$ flips.
\end{proof}

\subsection{Phase 2: Reconfiguring an upward tree into a downward tree}
\label{sec:phase2}

In this section we show how to reconfigure 
from an initial downward tree $T_I$ to a final upward tree $T_F$ on a general point set using only perfect flips. 
Thus the total number of flips will be $|T_I \setminus T_F|$.
%
%
%
%
The sequence of flips is simple to describe, and it will be obvious that each flip yields a spanning tree.  What needs proving is that each intermediate tree is non-crossing.
To simplify the description of the algorithm, imagine $T_I$ colored red and $T_F$ colored blue. Refer to Figure~\ref{fig:phase2-setup}.
Recall that
$v_1, \ldots, v_n$ is the ordering of the points by increasing $y$-coordinate.
Define $b_i$ to be the (unique) blue edge in $T_F$ going down from $v_i$, $i=2,\dots,n$.
An \defn{unhappy} edge is an edge of $T_I\setminus T_F$, i.e., it is red but not blue.


\begin{figure}[htp]
    \centering
    \includegraphics[page=1,width=0.6\textwidth]{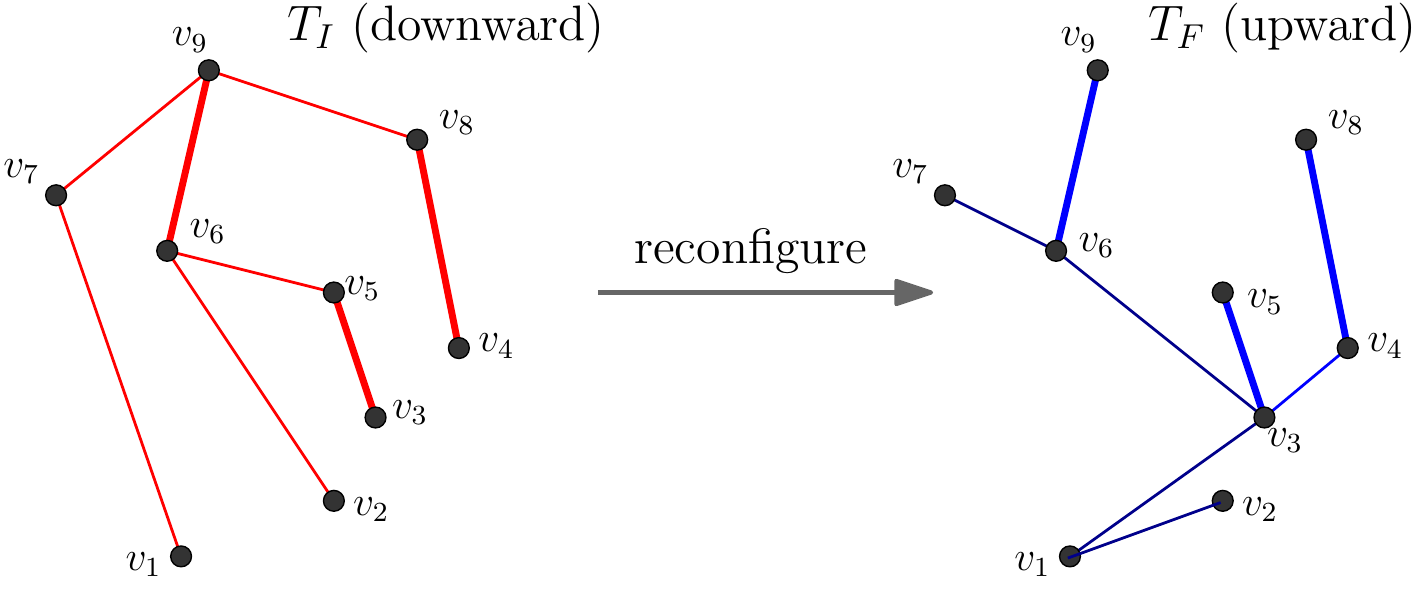}
    \caption{Reconfiguring opposite trees $T_I$ to $T_F$ with happy 
    edges marked in thick lines.
    } 
    \label{fig:phase2-setup}
\end{figure}

\paragraph{Reconfiguration algorithm.}

Let $T_1 = T_I$.  
For $i=2, \ldots, n$ we create a tree $T_i$ that contains $b_2, \ldots, b_i$ and all the happy edges.
If $b_i$ is happy, then $b_i$ is already in the current tree and we simply set $T_{i} := T_{i-1}$.  
Otherwise, consider the cycle formed by adding $b_i$ to $T_{i-1}$ and, in this cycle, 
let $r_i$ be the unhappy red edge with the lowest bottom endpoint.
Note that $r_i$ exists, otherwise all edges in the cycle would be blue.
Set $T_i := T_{i-1} \cup \{b_i\} \setminus \{r_i\}$.

This reconfiguration algorithm, applied to the trees $T_I$ and $T_F$ from Figure~\ref{fig:phase2-setup}, is depicted in Figure~\ref{fig:phase2}.


\remove{
To simplify the description of the algorithm, imagine $T$ colored red and $T'$ colored blue. 
\therese{In my opinion we should do introduce this color-scheme much earlier, in the intro, and use it throughout.}
Refer to Figure~\ref{fig:phase2-setup}.
Recall that
$v_1, \ldots, v_n$ is the ordering of vertices of $T$ from lowest to highest.
Write $r_i$ for the (unique) red edge in $T'$ going upward from $v_i$ ($i=1,\dots,n-1$) and
write $b_i$ for the (unique) blue edge in $T$ going down from $v_i$ ($i=2,\dots,n$).

For $i=1, \ldots, n$ we will create $T_i$ as follows.  Set $T_1=T'$.
For $i\geq 2$, if $b_i$ is in $T_{i-1}$, then we simply set $T_{i} := T_{i-1}$.  
Otherwise, consider the cycle $\gamma_i$ that exists in $T_{i-1}\cup \{b_i\}$.   This must
contain some unhappy red edges, else there would be a cycle within the blue edges.   
Among all those unhappy red edges, let $r^*$ be the one with the lowest bottom endpoint and
set $T_i := T_{i-1} \cup \{b_i\} - \{r^*\}$.
This reconfiguration algorithm, applied to the trees $T$ and $T'$ from Figure~\ref{fig:phase2-setup}, is depicted in Figure~\ref{fig:phase2}. 
}

\begin{figure}[htp]
    \centering
    \includegraphics[page=2,scale=0.7]{}
    \caption{Phase 2: Reconfiguring downward tree $T_1$ into upward tree $T_{9}$.
    The dashed horizontal line separates $B_i$ (below, and drawn with blue edges) from $R_i$ (above).  Happy edges are drawn thick. 
    $T_1$ has unhappy connector-edge $r_2$.
	$T_2$ has the unhappy connector-edge $r_3$ crossing $b_3$. $T_4$, $T_6$ and $T_7$ have two happy connector-edges each. 
    }
    \label{fig:phase2}
\end{figure}

\remove{
Observe that $T_i$ (for $i\geq 1$) satisfies the following:
\begin{itemize}
\item[(A)] $T_i$ is a spanning tree that contains all happy edges. 
\item[(B)] All edges of $T_i$ incident to $v_{i+1},\dots,v_n$ are red.   

	[For future reference we split these edges into two groups: \defn{connector-edges} 
	are edges of the form $(v_h,v_j)$ with $h\leq i < j$, while $R_i$ denotes
	the set of edges within $v_{i+1},\dots,v_n$.]
\item[(C)] Edges $b_2,\dots,b_i$ all belong to $T_i$.  

	[For future reference, we use $B_i$ to denote the subtree
	formed by these edges.   Note that $B_i$ is connected since every vertex $v_h$
	for $1<h\leq i$ is connected to a lower-indexed vertex via $b_h$.]
\end{itemize}
}

\begin{theorem}\label{theo:twocanonical}
	Given a red downward tree $T_I$ and a blue upward tree $T_F$ on a general point set, the reconfiguration algorithm described above flips $T_I$ to $T_F$ using $|T_I \setminus T_F|$ perfect flips. 
\label{thm:trees-up-down}
\end{theorem}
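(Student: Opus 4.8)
The plan is to run the stated algorithm and to verify three things about each $T_i$: it is a spanning tree, the overall sequence consists of exactly $|T_I\setminus T_F|$ perfect flips and terminates at $T_F$, and—the only hard point—each $T_i$ is non-crossing.

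The bookkeeping is routine. Each non-trivial step adds $b_i$ to $T_{i-1}$, creating a unique cycle, and deletes the cycle edge $r_i$, so $T_i$ is again a spanning tree. Since $T_F$ is upward, $\{b_2,\dots,b_n\}$ is exactly the edge set of $T_F$; hence $T_n\supseteq T_F$, and as $|T_n|=n-1=|T_F|$ we get $T_n=T_F$. To count flips I would carry the decomposition $T_i=B_i\cup R_i$ with $B_i=\{b_2,\dots,b_i\}$ and $R_i=T_i\setminus B_i$, and show by induction that $R_i\subseteq T_I$: happy edges are never deleted, each $r_i$ is by definition an unhappy red edge, and every added edge $b_i$ is blue. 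Consequently each non-trivial flip removes an edge of $T_I\setminus T_F$ and adds an edge of $T_F\setminus T_I$, i.e.\ is perfect; the unhappy $b_i$'s are in bijection with $T_F\setminus T_I$ and each consumes a distinct unhappy red edge, so the total is $|T_F\setminus T_I|=|T_I\setminus T_F|$.

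For the non-crossing proof I would first extract the structural invariant that makes progress possible. Let $S_i=\{v_1,\dots,v_i\}$ and $H_i=\{v_{i+1},\dots,v_n\}$, separated by a horizontal line $\ell_i$. Because $T_F$ is upward, $B_i$ is precisely the restriction of $T_F$ to $S_i$: a non-crossing tree spanning $S_i$, lying entirely below $\ell_i$. Contracting the connected subtree $B_i$ inside the tree $T_i$ shows that $R_i$ has no edge with both endpoints in $S_i$, so every red edge has an endpoint in $H_i$. I would then prove ``$T_i$ is non-crossing'' by induction: the only possible new crossings are between the freshly added edge $b_i$ and red edges of $R_{i-1}$, since $B_{i-1}\cup\{b_i\}\subseteq T_F$ is non-crossing and deleting $r_i$ creates nothing. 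Two clean observations trim the candidates. First, any edge crossing $b_i$ must be \emph{unhappy}: a happy edge lies in $T_F$, and $b_i\in T_F$, yet $T_F$ is non-crossing. Second, a red edge with \emph{both} endpoints in $H_{i-1}$ cannot cross $b_i$, because $v_i$ is the lowest vertex of $H_{i-1}$, so such an edge lies weakly above the height of $v_i$ while $b_i$ lies weakly below it, and they can meet only at $v_i$ itself. Thus the only edges that can cross $b_i$ are unhappy red \emph{connector} edges descending from $H_{i-1}$ into $S_{i-1}$.

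The crux—and the step I expect to be the main obstacle—is to show that $b_i$ crosses \emph{at most one} connector edge and that this edge is exactly $r_i$. Half of this is structural and I expect to handle it directly: closing the cycle $\gamma_i$ requires the path from $v_m$ up to $v_i$ in $T_{i-1}$ to cross from $S_{i-1}$ into $H_{i-1}$, and contracting $B_{i-1}$ shows it does so across exactly one connector; that connector has its bottom endpoint in $S_{i-1}$, hence strictly below every vertex of $H_{i-1}$, so among the unhappy red edges of $\gamma_i$ it has the lowest bottom endpoint and is therefore selected as $r_i$. The genuinely delicate part is the bound on the \emph{number} of crossings: a straight chord added to a plane tree can cross edges that are not on its fundamental cycle, and a naive count of connectors does not give one, so the decomposition alone is not enough. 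I expect the lowest-bottom-endpoint deletion rule to be essential here—its role (mirroring the sink-avoidance rule of Phase~1) is to maintain a stronger invariant controlling how the unhappy connectors are arranged relative to the vertical order, so that the next blue edge $b_i$ meets only a single unhappy connector. Formulating this invariant precisely and proving that the rule preserves it, via a planarity argument in the slab below $\ell_{i-1}$ using that $B_{i-1}$ spans $S_{i-1}$ uncrossed and that the red connectors are pairwise non-crossing, is where I anticipate the real work lies.
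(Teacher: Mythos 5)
Your proposal follows the paper's own proof step for step up to, but not through, the decisive point. The bookkeeping, the decomposition of $T_{i-1}$ into the blue spanning subtree $B_{i-1}$ on $\{v_1,\dots,v_{i-1}\}$, the red part on $\{v_i,\dots,v_n\}$, and the connector edges; the observation that $b_i$ can only cross an \emph{unhappy connector}; and the argument that the fundamental cycle $\gamma_i$ passes through exactly one connector (that of the component containing $v_i$), whose bottom endpoint lies in $S_{i-1}$ and which is therefore selected as $r_i$ by the lowest-bottom-endpoint rule if it is unhappy---all of this coincides with the paper. But the theorem is not proved at that point, and you say so yourself: $b_i$ may a priori cross unhappy connectors of \emph{other} components of $R_{i-1}$, which do not lie on $\gamma_i$ and hence are not removed by the flip. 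You defer exactly this to an unnamed ``stronger invariant'' to be established ``via a planarity argument in the slab,'' which you leave unformulated and unproved. That is the hard core of the theorem, so the proposal has a genuine gap.

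The missing statement is the paper's Claim~\ref{cl:connector_edge_happy}: \emph{if $R_{i-1}$ is disconnected, then all connector edges are happy}. With it, either $R_{i-1}$ has several components and there are no unhappy connectors at all, or $R_{i-1}$ is connected, there is a unique connector, and your cycle argument disposes of it. Note also that the paper's proof of this claim is not a geometric slab argument but a combinatorial exchange argument over the algorithm's history, and it hinges on a hypothesis your write-up never invokes: that $T_I$ is a \emph{downward} tree. Downwardness makes $\{v_i,\dots,v_n\}$ induce a connected subtree of $T_I$, so a component $C$ of $R_{i-1}$ must have been cut off from the other high vertices at some earlier step $j<i$ by the removal of $r_j$ (an edge with both endpoints of index $\geq i$). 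The cycle $\gamma_j$ then contains a second edge $e$ leaving $C$; this $e$ is red, its other endpoint has index $\leq i-1$, hence $e$ has a strictly lower bottom endpoint than $r_j$, and had $e$ been unhappy the deletion rule would have removed $e$ instead of $r_j$. So $e$ is happy, survives to $T_{i-1}$, and is $C$'s unique (happy) connector. Without this claim---or some equivalent control on connectors of components other than the one containing $v_i$---the crossing bound you need does not follow from the structure you set up.
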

\begin{proof}
It is clear that each $T_i$ is a spanning tree, and that each flip is perfect, so the number of flips is $|T_I \setminus T_F|$.  In particular, a happy edge is never removed, so $T_i$ contains all happy edges. 
We must show that each $T_i$ is non-crossing.  By induction, it suffices to show that if step $i \ge 2$ adds edge $b_i$ and removes edge $r_i$, then $b_i$ does not cross any edge of $T_{i-1}$ except possibly $r_i$. We examine the structure of $T_{i-1}$.

Let $B_{i-1}$ be the subtree with edges $b_2, \ldots, b_{i-1}$. 
Note that $B_{i-1}$ is connected.  
By construction, $T_{i-1}$ contains $B_{i-1}$ and all the other edges of $T_{i-1}$ are red. Let $R_{i-1}$ consist of vertices $v_i, \ldots, v_n$ and the edges of $T_{i-1}$ induced on those vertices.  The edges of $R_{i-1}$ are red, and $R_{i-1}$ consists of some connected components (possibly isolated vertices).  
In $T_{i-1}$ each component of $R_{i-1}$ has exactly one red \defn{connector}-edge joining it to $B_{i-1}$.
Thus $T_{i-1}$ consists of $B_{i-1}$, $R_{i-1}$, and the connector-edges for the components of $R_{i-1}$. 

Now consider the flip performed to create $T_i$ by adding edge $b_i$ and removing edge $r_i$. Since $b_i$ is a blue edge, it cannot cross any edge of $B_{i-1}$. 
Since $b_i$'s topmost vertex is $v_i$, $b_i$ cannot cross any edge of $R_{i-1}$.  
Furthermore, $b_i$ cannot cross a happy edge. 
Thus the only remaining possibility is for $b_i$ to cross an unhappy connector-edge. 

We will prove:

\begin{claim}
\label{cl:connector_edge_happy}
If $R_{i-1}$ is disconnected, then all connector-edges are happy.
\end{claim}


Assuming the claim, we only need to show that $b_i$ is non-crossing when $R_{i-1}$ is connected.  Then there is only one connector-edge $r$ joining $R_{i-1} $ to $B_{i-1}$. If $r$ is happy, then $b_i$ cannot cross it.  So assume that $r$ is unhappy. Refer to Figure~\ref{fig:phase2proof}(a).
Now, the cycle $\gamma$ in $T_{i-1} \cup \{b_i\}$ contains $r$, since $b_i$ and $r$ are the only edges between $R_{i-1}$ and $B_{i-1}$.  Of the red edges in $\gamma$,  $r$ is the one with the lowest bottom endpoint.  This implies that $r$ is chosen as $r_{i}$, and removed. 
Therefore $b_i$ does not cross any edges of $T_i$, so $T_i$ is non-crossing.

\begin{figure}[ht]


    \centering
\hspace*{\fill}
\subfigure[~]{\includegraphics[page=6,scale=0.7]{figures/phase2}}
\hspace*{\fill}
\subfigure[~]{\includegraphics[page=3,scale=0.7]{figures/phase2}}
\hspace*{\fill}
    \caption{For the proof of
    Theorem~\ref{theorem:upper-bound}: (a) when $R_{i-1}$ is connected; (b) for the proof of Claim~\ref{cl:connector_edge_happy}.}
    \label{fig:phase2proof}
\end{figure}

It remains to prove the claim. Let $C$ be a connected component of $R_{i-1}$, and let $r$ be its connector-edge. Refer to Figure~\ref{fig:phase2proof}(b). We must prove that $r$ is happy. 
In the initial red tree $T_I$, the vertices $v_i, \ldots, v_{n}$ induce a connected subtree, so $C$ and $R_{i-1} \setminus C$ were once connected. Suppose they first became disconnected by the removal of red edge $r_j$ in step $j$ of the algorithm, for some $j<i$.  
Consider the blue edge $b_j$ that was added in step $j$ of the algorithm, and the cycle $\gamma_j$ in $T_{j-1} \cup \{b_j\}$.  Now $\gamma_j$ must contain another edge, call it $e$, with one endpoint in $C$ and one endpoint, $v_k$, not in $C$. 
Note that $e$ is red since it has an endpoint in $C$, and note that $v_k$ is not in $R_{i-1} \setminus C$ otherwise $C$ and $R_{i-1} \setminus C$ would not be disconnected after the removal of $r_j$.  
Therefore $v_k$ must lie in $B_{i-1}$, i.e., $k \le i-1$.  If $e$ is unhappy, then in step $j$ the algorithm would prefer to remove $e$ instead of $r_j$ since $e$ is a red edge in $\gamma_j$ with a lower bottom endpoint.  
So $e$ is happy, which means that the algorithm never removes it, and it is contained in $T_{i-1}$. Therefore $e$ must be equal to $r$, the unique connector-edge in $T_{i-1}$ between $C$ and $B_{i-1}$.  Therefore $r$ is happy. 
\end{proof}

\remove{
Clearly the algorithm uses only perfect flips and maintains spanning trees, so we only must show that each $T_i$ is non-crossing, which we do by induction on $i$.
Clearly this holds for $T_1=T'$, so assume it holds for $T_{i-1}$ for some $i\geq 2$.   We could introduce a crossing in $T_i$ only if $b_i$ crosses some edge $r$ that belongs to $T_{i-1}$.   Since blue edges do not cross each other, this implies that $r$ is red and unhappy.     Since $r$ crosses $b_i$, its top endpoint is not $v_i$.   Its top endpoint is also not in $v_1,\dots,v_{i-1}$ else $r$ (together with edges $b_2,\dots,b_{i-1}$) would be $i-1$ edges within $i-1$ vertices and hence form a cycle in $T_{i-1}$.   So the top endpoint of $r$ is in $v_{i+1},\dots,v_n$.
Its bottom endpoint is in $v_1,\dots,v_{i-1}$, else it would not cross $b_i$.  So $r$ is an unhappy connector-edge of $T_{i-1}$.
We will prove the following below:

\begin{quotation}
(E) If there exists an unhappy connector-edge, then $R_i$ is connected.
\therese{The naming `(E)' is here for historical reasons; we should rename once the proof is stabilized.}
\end{quotation}

Assuming (E) holds, therefore $R_{i-1}$ is connected.   But then $T_{i-1}\cup b_i$ contains the cycle $\gamma_i$ consisting of $b_i,r$, 
the path connecting $v_i$ and the top end of $r$ within $R_{i-1}$, and the path connecting the bottom ends of $b_i$ and $r$ within $B_{i-1}$.    
See Figure~\ref{fig:phase2proof}(a).
The edge $r^*$ chosen by $\gamma_i$ must be edge $r$, because all other edges in $\gamma_i$ that are not blue are in $R_{i-1}$ and their lower endpoint has index $i$ or higher.  Therefore the edge crossed by $b_i$ is removed in the same flip, and $T_i$ is non-crossing. 

\bigskip

It remains to show (E). Assume for contradiction that $R_i$ is disconnected, and $T_i$ has an unhappy connector-edge $\hat{r}$.
Let $C$ be the component of $R_i$ that contains the upper endpoint of $\hat{r}$.   
Since $R_{i}$ is disconnected but $r_{i+1},\dots,r_{n-1}$ form a connected tree, there must be some edge $r_j$ with $j>i$ that is not in $R_{i}$ but has one endpoint in $C$.    Among all such edges, choose $r_j$ to be the one that was removed last, say $r_j\in T_{h-1}$ but $r_j\not\in T_h$ for some $h\leq i$, and $j,h$ are chosen such that $h$ is maximal.     

Since processing $v_h$ removed edge $r_{j}$, it must have used the second case and $r_{j}$ belongs to cycle $\gamma_h$.   So $\gamma_h$ contains vertices in both $C$ and $\overline{C}$, hence at least two edges that connect $C$ to $\overline{C}$.   
Let $r\neq r_{j+1}$ be an edge in $\gamma_h$ connecting $v_\ell \in C$ with $v_k\not\in C$.
Edge $r$ must be red since condition (B) holds for $T_{h-1}$ and $v_{\ell}\in C$ implies $\ell\geq i+1>h-1$.    

We claim that $k\leq i$.   For if $k\geq i+1$, then both endpoints of $r$ are in $\{v_{i+1},\dots,v_n\}$,
so $r$ belongs to $\{r_{i+1},\dots,r_{n-1}\}$.   But $r$ cannot belong to $R_i$ (else the connected component $C$ of $R_i$ would include $v_k$).   Also $r$ belongs to $T_h$ since it belongs to $\gamma_h$, hence to $T_{h-1}$, and it was not the edge $r_j$ that was removed when building $T_h$ from $T_{h-1}$. Thus if we had $k\geq i+1$, then $r$ would be an edge in $\{r_{i+1},\dots,r_{n-1}\}$ that was removed some time after $r_j$, contradicting the choice of $r_j$.

So $k\leq i$, which implies that $r$ is happy because otherwise it (rather than $r_j$) would have been removed from $\gamma_h$.   In particular therefore $r\neq \hat{r}$ and $r$ belongs to $T_{i}$ by (A).   But then $T_i$ contains a cycle $\gamma$, consisting of $r,\hat{r}$, the path connecting their top endpoints within $C$, and the path connecting their bottom endpoints within $B_i$.   
See Figure~\ref{fig:phase2proof}(b).
Contradiction, so (E) holds and the proof is complete.
}

\remove{
\begin{figure}[htp]
	\hspace*{\fill}
    \subfigure[~]{\includegraphics[page=2,scale=0.8,page=5]{figures/phase2}}
	\hspace*{\fill}
    \subfigure[~]{\includegraphics[page=2,scale=0.8,page=4]{figures/phase2}}
	\hspace*{\fill}
    \caption{
(a) $T_i$ is non-crossing, assuming (E) holds.
(b) Proving condition (E).   
    }
    \label{fig:phase2proof}
\end{figure}
}

\subsection{Two-phase reconfiguration algorithm}
We can now combine the results of Sections~\ref{sec:phase1} and~\ref{sec:phase2} to develop a new two-phase reconfiguration algorithm between two non-crossing spanning trees $T_I$ and $T_F$: 
\begin{enumerate}
\item In the first phase we reconfigure $T_I$ into $T'_I$ and $T_F$ into $T'_F$ such that $T'_I$ and $T'_F$ are opposite trees (one upward and one downward), using
Theorem~\ref{thm:trees-opposite}.
\item In the second phase we reconfigure $T'_I$ into $T'_F$ using only perfect flips, 
as given by Theorem~\ref{thm:trees-up-down}. 
(Note, however, that the happy edges in $T'_I$ and $T'_F$ may differ from the ones in $T_I$ and $T_F$, since the first phase does not preserve happy edges). 
\end{enumerate}

Finally, we concatenate the  reconfiguration sequences from $T_I$ to $T'_I$, from $T'_I$ to $T'_F$, and the reverse of the sequence from $T_F$ to $T'_F$.

\begin{theorem}
If $T_I$ and $T_F$ are non-crossing spanning trees on a general point set, then the algorithm presented above reconfigures $T_I$ to $T_F$ in at most 
$2n-3$ flips. 
\end{theorem}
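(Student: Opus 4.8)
The plan is essentially to concatenate the three flip sequences produced by the two phases and add up their lengths; the only real content is the arithmetic, together with the observation that reversing a flip sequence is again a valid flip sequence (since a single flip is symmetric, replacing each edge in a non-crossing spanning tree $T \setminus \{e\} \cup \{e'\}$ by its predecessor walks the same sequence backwards through non-crossing spanning trees).

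First I would invoke Theorem~\ref{thm:trees-opposite} on the pair $(T_I, T_F)$. This produces two opposite trees $T'_I$ and $T'_F$ (one upward and one downward), reached by flip sequences of lengths $a$ (from $T_I$ to $T'_I$) and $b$ (from $T_F$ to $T'_F$) satisfying $a + b \le n-2$. Next, since Theorem~\ref{thm:trees-up-down} reconfigures a \emph{downward} tree into an \emph{upward} tree, I would orient the second phase accordingly: without loss of generality assume $T'_I$ is downward and $T'_F$ is upward (otherwise swap the roles and reverse, using the remark above). Theorem~\ref{thm:trees-up-down} then flips $T'_I$ to $T'_F$ using $|T'_I \setminus T'_F|$ perfect flips, and since $T'_I$ has exactly $n-1$ edges this is at most $n-1$ flips. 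Finally I would concatenate the sequence $T_I \to T'_I$ (length $a$), the perfect sequence $T'_I \to T'_F$ (length $\le n-1$), and the reverse of $T_F \to T'_F$, namely $T'_F \to T_F$ (length $b$). Each piece consists of non-crossing spanning trees, and reversing the last one preserves this. The total length is $a + |T'_I \setminus T'_F| + b \le (a+b) + (n-1) \le (n-2) + (n-1) = 2n-3$, as claimed.

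As for obstacles, there is essentially none of substance at this stage: all the difficulty is already absorbed into Theorems~\ref{thm:trees-opposite} and~\ref{thm:trees-up-down} (in particular, the nontrivial proof that the perfect flips of Phase~2 keep every intermediate tree non-crossing). The two points that still deserve a sentence of care are (i) matching the upward/downward orientation demanded by Phase~2 with whichever orientation Phase~1 happens to output, which is handled by the reversal trick, and (ii) the fact that the happy edges relative to $T'_I, T'_F$ need not coincide with those relative to $T_I, T_F$ (the first phase is not happy-edge preserving), so the count of Phase~2 must be bounded by the crude $|T'_I \setminus T'_F| \le n-1$ rather than by any statement about the original happy edges.
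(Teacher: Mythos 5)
Your proposal is correct and follows essentially the same route as the paper: Phase~1 via Theorem~\ref{thm:trees-opposite} costs at most $n-2$ flips, Phase~2 via Theorem~\ref{thm:trees-up-down} costs $|T'_I \setminus T'_F| \le n-1$ perfect flips, and the three sequences (with the final one reversed) concatenate to give $2n-3$. Your two points of care---reversibility of flip sequences and the non-preservation of happy edges by Phase~1---are exactly the caveats the paper itself notes in its description of the two-phase algorithm.
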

\begin{proof}
By Theorem~\ref{thm:trees-opposite}, the first phase of the algorithm takes at most $n-2$ flips.
By Theorem~\ref{thm:trees-up-down}, the second phase uses 
at most $n-1$ flips.  
It follows that the total number of flips is at most $2n-3$. 
\end{proof}

\begin{theorem}
\label{tree-and-path}
For a general point set, if $T_I$ is a 
non-crossing spanning tree and $T_F$ is a non-crossing path that is monotone in some direction, then $T_I$ can be reconfigured to $T_F$ in 
at most $1.5n -2 - h$ 
flips, where $h=|T_I\cap T_F|$ is the number of happy edges. 
Furthermore, there is a lower bound of $1.5n - 5$ flips, even for points in convex position and if one tree is a monotone path. 
\end{theorem}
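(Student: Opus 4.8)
The plan is to treat the upper and lower bounds separately, reusing the two-phase machinery for the upper bound and giving an explicit hard instance for the lower bound.

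\textbf{Upper bound.} The key observation is that a monotone path is \emph{both} an upward tree and a downward tree. Concretely, I would rotate the plane so that the direction in which $T_F$ is monotone becomes vertical, and order the points $v_1,\dots,v_n$ by increasing $y$-coordinate. Then $T_F$ is exactly the path $v_1v_2\cdots v_{n-1}v_n$, so every vertex other than $v_n$ has a unique edge going up and every vertex other than $v_1$ has a unique edge going down; hence $T_F$ has a single sink and a single source and is at once upward and downward. Moreover every edge of $T_F$ has the form $v_iv_{i+1}$, so every happy edge (an edge of $T_I\cap T_F$) also has this form.

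Next I would apply Corollary~\ref{cor:up-or-down} to $T_I$: in at most $0.5n-1$ flips we reconfigure $T_I$ to a tree $T_I'$ that is either upward or downward, and crucially this sequence never flips an edge of the form $v_iv_{i+1}$. Since every happy edge is of this form, all $h$ happy edges survive into $T_I'$, so $|T_I'\cap T_F|\ge h$. Because $T_F$ is simultaneously upward and downward, I may regard it as whichever orientation is opposite to $T_I'$, making $T_I'$ and $T_F$ opposite trees. Phase~2 (Theorem~\ref{thm:trees-up-down}, applied in the appropriate direction---the algorithm is symmetric under $y\mapsto -y$, and flip sequences may be reversed) then reconfigures $T_I'$ to $T_F$ using exactly $|T_I'\setminus T_F|=(n-1)-|T_I'\cap T_F|\le (n-1)-h$ perfect flips. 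Summing the two phases gives at most $(0.5n-1)+(n-1-h)=1.5n-2-h$ flips, as claimed.

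\textbf{Lower bound.} For the matching lower bound I would exhibit an explicit instance with points in convex position in which $T_F$ is a monotone path and any flip sequence has length at least $1.5n-5$, adapting the construction of Hernando et al.~\cite{hernando1999geometric} to these two constraints. The bound coming from the symmetric difference alone only gives $n-1$ flips (when $T_I$ and $T_F$ are edge-disjoint), so the content is to force roughly $n/2$ \emph{extra} flips. The plan is to design the instance so that about half of the edges of $T_F$ are each ``blocked'': adding such an edge $f$ by a flip requires that all tree edges crossing $f$ be removed first, but removing them forces the temporary insertion of an auxiliary edge lying in neither $T_I$ nor $T_F$ in order to keep the intermediate graph a connected non-crossing spanning tree. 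Each such auxiliary edge must later be removed again, and a charging argument---counting, over the whole sequence, the flips that insert or delete these non-target edges---yields the extra $\approx n/2$ term. I would then verify that the configuration can be realized in convex position and that the distinguished tree really is a monotone path.

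\textbf{Main obstacle.} The upper bound is essentially bookkeeping on top of the already-proven Phase~1 and Phase~2 results, once one notices that a monotone path is both upward and downward and that Corollary~\ref{cor:up-or-down} protects exactly the path edges $v_iv_{i+1}$. The harder part is the lower bound: building a single instance that simultaneously keeps the points in convex position, makes one tree a monotone path, and forces $\Theta(n)$ non-perfect flips, and then proving the $1.5n-5$ bound rigorously through the charging/potential argument rather than merely arguing that no perfect flip sequence exists.
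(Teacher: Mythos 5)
Your upper-bound argument is correct and is essentially identical to the paper's proof: rotate so $T_F$ is $y$-monotone, observe that a monotone path is simultaneously an upward and a downward tree and that all of its edges have the form $v_iv_{i+1}$, invoke Corollary~\ref{cor:up-or-down} to get $T_I'$ in at most $0.5n-1$ flips while preserving the $h$ happy edges, and finish with Theorem~\ref{thm:trees-up-down} in $|T_I'\setminus T_F|\le n-1-h$ perfect flips. No issues there.

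The gap is in the lower bound: you describe a plan rather than a proof. You never specify the instance, and the mechanism you propose---a charging argument over ``auxiliary'' edges that must be temporarily inserted and later deleted to maintain connectivity---is both unproven and more complicated than what is actually needed. The paper's proof (Lemma~\ref{lem:lowerbound}) takes the Hernando et al.\ tree as $T_I$: points $v_1,\dots,v_n$ in convex position placed on alternating sides of $v_1v_n$, with $T_I$ consisting of $v_1v_n$ together with edges $v_1v_{2i+1}$ and $v_nv_{2i}$, and $T_F$ the monotone path $v_1v_2\cdots v_n$. The counting is then direct, with no charging scheme: every non-hull edge of $T_F$ crosses at least $n/2-1$ edges of $T_I$, so before the \emph{first} of the $n-3$ non-hull edges of $T_F$ can be added, at least $n/2-1$ flips must remove its crossing edges; after that, each of the remaining $n-4$ non-hull edges of $T_F$ costs at least one further flip, giving $(n/2-1)+(n-4)=1.5n-5$. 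Your sketch, by contrast, rests on the claim that removals ``force the temporary insertion'' of non-target edges in a way that can be charged globally---a claim that would itself require proof (connectivity alone does not obviously force parking edges at the rate you need), and which the paper's argument sidesteps entirely. If you want a complete proof, adopt the explicit construction and the first-blocked-edge count; the paper additionally shows this instance is tight by exhibiting a flip sequence of length exactly $1.5n-5$.
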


\begin{proof} Rotate the plane so that $T_F$ is $y$-monotone.  Note that $T_F$ is then 
both an upward and a downward tree.  We thus have the flexibility to turn $T_I$ into either an upward or a downward tree in the first phase of the algorithm. 
By Corollary~\ref{cor:up-or-down}, 
$T_I$ can be turned into an upward or downward tree $T'_I$ in at most $0.5n-1$ flips. Furthermore, since $T_F$ is a $y$-monotone path, any edge in 
$T_I\cap T_F$ has the form $v_iv_{i+1}$ for some $1\leq i<n$,  and thus, by
Corollary~\ref{cor:up-or-down},
these edges do not flip, which implies that they are still in $T'_I \cap T_F$, so $|T'_I\cap T_F|\geq h$. 
\remove{By Lemma~\ref{lem:tree-up-down} 
\anna{I'm removing that lemma (it turns out that we ONLY use it here).  Let's just make the two-line argument right here.} it takes a total of at most $n-2$ flips to turn $T_1$ into an upward tree and into a downward tree.  Therefore $T_1$ can be turned into an upward or downward tree $T'_1$ in at most $0.5n-1$ flips.  Any edge in $T_1\cap T_2$ has the form $v_iv_{i+1}$ for some $1\leq i<n$ since $T_2$ is a $y$-monotone path.   \changed{By Lemma~\ref{lem:tree-up-down}} all these edges also exist in $T'_1$, so $|T_1'\cap T_2|\geq h$.}
The second phase of the algorithm uses 
\remove{at most $n$ }
$|T_I'\setminus T_F| \leq n-1-h$
perfect flips to reconfigure $T'_I$ into $T_F$. Hence the total number of flips is at most $1.5n-2-h$. 

For the lower bound, see Lemma~\ref{lem:lowerbound} below.
\end{proof}

\begin{lemma}
\label{lem:lowerbound}
On any set of $n \ge 4$ points in convex position, for $n$ even, there exists a non-crossing spanning tree $T_I$ and a
non-crossing path $T_F$ such that 
reconfiguring $T_I$ to $T_F$ requires at least $1.5n - 5$ 
flips, and this bound is tight. 
\end{lemma}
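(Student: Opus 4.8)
The plan is to produce one explicit pair $(T_I,T_F)$ on $n$ points in convex position (with $n$ even) that realizes the bound exactly, and then to split the two directions sharply: the upper bound ``$\le 1.5n-5$'' will come for free from Theorem~\ref{tree-and-path}, so essentially all the work is the matching lower bound. Concretely, I would place the points on a convex curve ordered by $x$-coordinate so that $T_F$ is the $x$-monotone Hamiltonian path $p_1p_2\cdots p_n$, whose edges are the ``short'' chords $p_ip_{i+1}$. For $T_I$ I would take a nested (zigzag) non-crossing spanning tree whose $n-4$ unhappy edges are ``long'' chords that pairwise nest, each crossing a controlled number of the short $T_F$-edges, and which shares with $T_F$ exactly three edges (say the two extreme short edges and one central one). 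Then $h:=|T_I\cap T_F|=3$ and $d:=|T_I\setminus T_F|=|T_F\setminus T_I|=n-4$.

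Since $h=3$, Theorem~\ref{tree-and-path} immediately gives a flip sequence of length at most $1.5n-2-h=1.5n-5$, so once the lower bound is proved the instance needs \emph{exactly} $1.5n-5$ flips and the bound is tight. For the lower bound the engine is the local constraint that every flip must respect: if a flip adds an edge $f$ and the result is non-crossing, then $f$ crosses at most one edge of the pre-flip tree, and if it crosses exactly one, that edge is the one removed. Hence before any $f\in T_F$ with $c(f)$ crossing $T_I$-edges can be present, at least $c(f)-1$ of those crossing edges must already be gone. Starting from the trivial count $F\ge d=n-4$ (each flip increases the number of present $T_F$-edges by at most one), I would then use the nesting to show these forced removals cannot all be paid for by perfect flips: removing one of the long chords of $T_I$ disconnects the tree so that its only non-crossing reconnection at that moment is a chord lying in neither $T_I$ nor $T_F$ (because the corresponding short $T_F$-edge is still blocked by a not-yet-removed chord), forcing the insertion of a \emph{helper} edge. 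Each of the $\approx n/2$ long chords contributes one such helper-adding flip, disjoint from the $d$ flips that first install the short $T_F$-edges, giving $F\ge(n-4)+(n/2-1)=1.5n-5$.

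The main obstacle is the final step: converting the ``forced helper'' intuition into a proof robust against \emph{every} interleaving of flips, since an adversary may mix removals and additions cleverly. I would handle this with a monovariant argument — for instance tracking the quantity (number of remaining unhappy $T_I$-edges) $+$ (number of $T_F$-edges still crossed by the current tree), and showing the nested structure forbids it from dropping by more than a bounded amount per flip except on the designated helper flips — or, equivalently, with an exchange argument that normalizes any optimal sequence into one that processes the nested chords from innermost to outermost, where the $n/2-1$ helper insertions are manifestly unavoidable. Checking that $h=3$ and that the nesting yields precisely the intended crossing numbers (so that neither too few nor too many helpers are forced) is the delicate but routine bookkeeping that pins the additive constant to $-5$.
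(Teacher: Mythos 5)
You have the right local constraint (an added edge can cross at most the single edge removed in the same flip) and a legitimate idea for tightness (arrange $h=3$ and invoke Theorem~\ref{tree-and-path} to get $1.5n-2-h=1.5n-5$), but there is a genuine gap exactly where the work is: the lower bound. Your count $F\ge (n-4)+(n/2-1)$ needs roughly $n/2$ forced ``helper'' (parking) insertions under \emph{every} interleaving of flips, and this is precisely the step you defer to an unspecified monovariant or normalization argument. Nothing in the sketch rules out the natural adversary: with pairwise-nested long chords, the innermost chord crosses only a small number of path edges, so a sequence that processes chords from innermost to outermost can hope to pay for each removal with a perfect flip; whether any helpers are forced depends on fine details of a construction you never write down (which chords, which three happy edges, how the points are placed). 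The placement is a second concrete problem: if the points really lie on a convex curve with hull order equal to the $x$-order, then every edge $p_ip_{i+1}$ of $T_F$ is a convex hull edge, no edge of $T_I$ crosses any edge of $T_F$, and $T_I\cup T_F$ is non-crossing; in that case a standard argument (repeatedly add an unhappy $T_F$-edge and delete an unhappy $T_I$-edge from the created cycle) gives a perfect flip sequence of length $n-4<1.5n-5$, so the lower bound is false for your instance as literally described. For the path edges to be crossable at all, the points must alternate sides of $p_1p_n$, and in that placement index-nested chords generally \emph{do} cross each other geometrically, so even the existence of your non-crossing $T_I$ is unclear.

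The paper avoids the interleaving difficulty entirely by choosing the construction so that the counting is order-free: $T_I$ is the double star with centers $v_1,v_n$ (plus the edge $v_1v_n$) on points alternating sides of $v_1v_n$, and $T_F$ is the monotone path. Then \emph{every} non-hull edge of $T_F$ crosses at least $n/2-1$ edges of $T_I$, so before the \emph{first} non-hull $T_F$-edge can appear, at least $n/2-2$ of its crossers must already have been removed (one more can go in the same flip that adds it), and each of the remaining $n-4$ non-hull $T_F$-edges needs its own flip; this yields $(n/2-1)+(n-4)=1.5n-5$ with no exchange or reordering argument at all. Tightness is then shown by an explicit sequence of length $(n/2-2)+(n-3)$; your $h=3$ trick could serve that purpose too, but only once a construction with a valid lower bound is in hand. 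If you want to salvage your route, the cleanest fix is to abandon the nested family and make every unhappy $T_F$-edge cross many $T_I$-edges --- which is essentially the paper's construction.
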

\begin{proof}
Our construction is depicted in Figure~\ref{fig:lowerbound}. 
Note that the tree $T_I$ is the same as in the lower bound of
$1.5n - 5$ proved by Hernando et al.~\cite{hernando1999geometric}, but their tree $T_F$ was not a path.

\begin{figure}[htp]
    \centering
    \includegraphics[page=1,width=\textwidth]{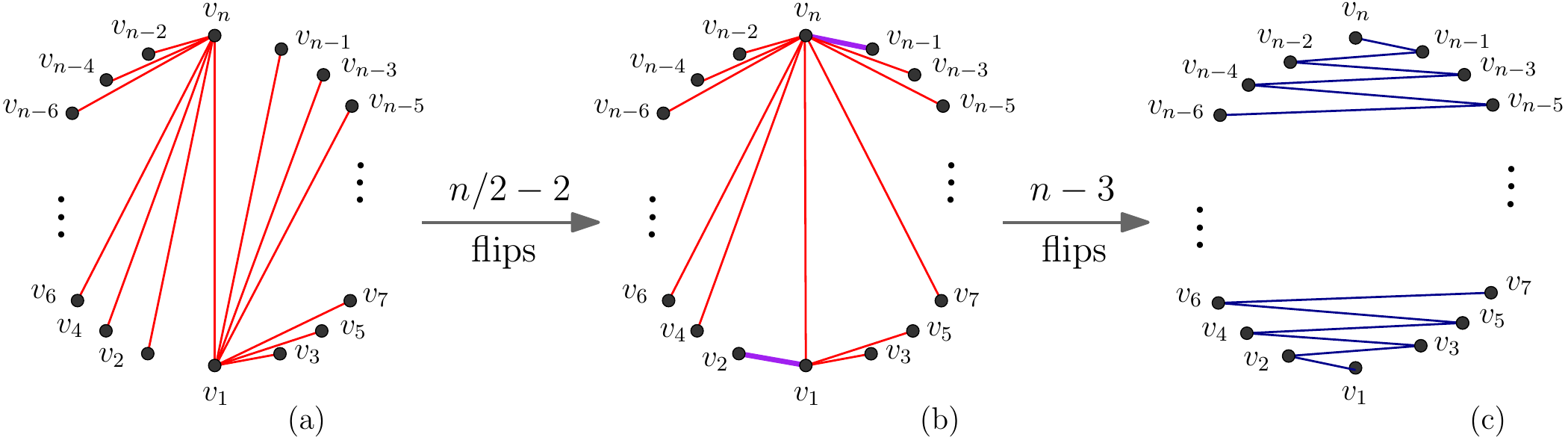}
    \caption{Tight reconfiguration bound: (a) initial tree $T_I$ (b) intermediate tree with two happy hull edges $v_1v_2$ and $v_{n-1}v_n$ (c) final $y$-monotone path $T_F$.}
    \label{fig:lowerbound}
\end{figure}

The construction is as follows (refer to Figure~\ref{fig:lowerbound}(a) and (c)). The points $v_1 \ldots v_n$ (ordered by increasing $y$-coordinate) are placed in convex position on alternate sides of $v_1v_n$. 
$T_I$ contains edges $v_1v_{2i+1}$ and $v_{n}v_{2i}$ for $i=1,\ldots, n/2-1$, and $v_1v_n$.
Note that $v_1$ and $v_n$ have degree $n/2$ each. $T_F$ is a path that connects vertices in order (so it includes edges $v_iv_{i+1}$, for $i = 1, \ldots, n-1$).

Note that every non-hull edge of $T_F$ (in blue) crosses at least $n/2-1$ edges of $T_I$ (in red). 
Indeed, edges of the form  
$v_{2i+1}v_{2i}$ cross exactly $n/2$ edges: 
$n/2-i-1$ edges incident to $v_1$, plus $i+1$ edges incident to $v_n$, minus $1$ because $v_1v_n$ is included in both counts.
Edges of the form 
$v_{2i}v_{2i+1}$ cross one less edge (specifically $v_1v_{2i+1}$).

Thus, any valid reconfiguration from $T_I$ to $T_F$ must flip $n/2-1$ edges of $T_I$ out of the way before the first of the $n-3$ non-hull edges of $T_F$ is added.  After that, we need at least one flip for each of the remaining $n-4$ non-hull edges of $T_F$. Thus the total number of flips is at least $n/2-1 + (n-4)=1.5n-5$. 

We note that our two-phase reconfiguration algorithm uses $1.5n-3$ flips for this instance, but there is a flip sequence of length $1.5n -5$: first flip $v_2v_n$ to $v_1v_2$ to create a happy hull edge, then connect $v_n$ to all $v_i$ for odd $i \ge 7$ by performing the flips $v_1v_i$ to $v_nv_i$ in order of decreasing $i$. The number of flips thus far is $n/2-2$ (note that $v_1v_3$ and $v_1v_5$ stay in place). The resulting tree (shown in Figure~\ref{fig:lowerbound}b) has two happy hull edges. We show that this tree can be reconfigured into $T_F$ using perfect flips only (so the number of flips is $n-3$). Flip $v_1v_n$ to non-hull edge $v_4v_5$ and view the resulting tree as the union of an upward tree rooted at $v_1$ and a downward subtree rooted at $v_n$ (sharing $v_4v_5$). These two subtrees are separated by $v_4v_5$ and therefore can be independently reconfigured into their corresponding subtrees in $T_F$ using perfect flips, as given by Theorem~\ref{thm:trees-up-down}. Thus the total number of flips is $(n/2-2)+ (n-3)=1.5n-5$, proving this bound tight.
\end{proof}

\begin{theorem}
\label{thm:convex-path} For points in convex position, if $T_I$ is a 
non-crossing spanning tree and $T_F$ is a path, then $T_I$ can be reconfigured to $T_F$ in at most $1.5n -2 - h$
flips, where $h=|T_I\cap T_F|$ is the number of happy edges. 
Furthermore, there is a lower bound of $1.5n - 5$ flips.
\end{theorem}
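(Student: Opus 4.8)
The lower bound is immediate: the construction in Lemma~\ref{lem:lowerbound} already places the points in convex position and makes $T_F$ a (monotone, hence in particular a) path, so it certifies a $1.5n-5$ lower bound for this statement as well. So the real work is the upper bound, and the plan is to re-run the two-phase machinery exactly as in Theorem~\ref{tree-and-path}: turn $T_I$ into a downward (or upward) tree in at most $0.5n-1$ flips while keeping the happy edges, then finish with $|T_I'\setminus T_F|\le n-1-h$ perfect flips via Theorem~\ref{thm:trees-up-down}. The only thing that breaks in the convex-but-non-monotone setting is the step where Theorem~\ref{tree-and-path} used monotonicity of $T_F$, so that is where all the new content must go.

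Concretely, I would first isolate a structural lemma: \emph{for points in convex position, every non-crossing spanning path is, in some direction $u$, an upward tree or a downward tree} (equivalently, the height sequence along the path is unimodal — a single valley or single peak). Unlike the monotone case I do \emph{not} expect it to be both; a convex path need not be monotone in any direction, so this lemma is strictly weaker than ``rotate to $y$-monotone.'' I would prove it from the contiguous-arc property of convex non-crossing paths: every prefix of the path occupies a contiguous arc of the hull, so the first and last edges $p_1p_2$ and $p_{n-1}p_n$ are hull edges, and more generally the path is built by repeatedly extending the current arc at one of its two ends. Using this, I would choose the sweep direction (for instance, perpendicular to a hull edge that the path omits, or perpendicular to the chord $p_1p_n$) so that walking along the path the projection first decreases and then increases, giving a single source. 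Reversing $u$ then makes it downward. After this step I may assume, say, that $T_F$ is an upward tree, so Phase~2 of Theorem~\ref{thm:trees-up-down} applies verbatim once $T_I$ has been turned into a downward tree.

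The hard part is recovering the two savings that monotonicity gave away for free. First, I lose the ``pick whichever of upward/downward is cheaper'' flexibility of Corollary~\ref{cor:up-or-down}: since $T_F$ is only upward, I am forced to make $T_I$ \emph{downward}, and a priori this could cost $t_I-1$ flips with $t_I$ as large as $n-2$. To control this I would exploit that the lemma actually gives an \emph{interval} of admissible directions keeping $T_F$ unimodal, and argue by a counting/averaging argument (using that the number of sinks of $T_I$ in direction $u$ plus the number in direction $-u$ is at most $n$) that some admissible direction leaves $T_I$ with at most roughly $n/2$ sinks, so that Theorem~\ref{thm:tree-downward} turns $T_I$ into a downward tree in at most $0.5n-1$ flips. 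Second — and this is the step I expect to be the true obstacle — the happy edges are edges of $T_F$, and for a non-monotone convex path they are \emph{not} of the consecutive form $v_iv_{i+1}$, so the ``these edges never flip'' guarantee of Corollary~\ref{cor:up-or-down} no longer protects them. I would therefore need to strengthen the Phase~1 reconfiguration of $T_I$ so that it avoids flipping the happy path-edges (extending the ``do not remove $v_kv_{k+1}$'' idea in the proof of Theorem~\ref{thm:tree-downward} to the whole happy set), and to do this \emph{simultaneously} with the $\le 0.5n-1$ budget. Making these two requirements compatible — few sink-removing flips, and none of them touching a happy edge — is the crux, and is what would have to be nailed down to close the gap to $1.5n-2-h$.
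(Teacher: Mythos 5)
Your lower-bound paragraph is fine and matches the paper (Lemma~\ref{lem:lowerbound} is invoked verbatim). The upper bound, however, has a genuine gap, and it is precisely the one you flag yourself at the end: neither of the two difficulties you create by retreating to a ``unimodal in some direction'' lemma is actually resolved. Worse, the averaging argument you sketch for the first difficulty is vacuous as stated: a sink of $T_I$ with respect to $-u$ is exactly a source of $T_I$ with respect to $u$, so ``make $T_I$ downward w.r.t.\ $u$ while $T_F$ is upward w.r.t.\ $u$'' and ``make $T_I$ upward w.r.t.\ $-u$ while $T_F$ is downward w.r.t.\ $-u$'' are the \emph{same} reconfiguration problem with the same cost $t_u(T_I)-1$; reversing the direction buys nothing. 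To exploit the bound $s+t\le n$ as in Corollary~\ref{cor:up-or-down} you need the freedom to turn $T_I$ \emph{either} upward \emph{or} downward with respect to one fixed direction, and that requires $T_F$ to be simultaneously upward and downward in that direction, i.e., monotone --- exactly the property you abandoned. The happy-edge difficulty (happy edges of a non-monotone path are not of the form $v_iv_{i+1}$, so Theorem~\ref{thm:tree-downward}'s protection does not apply) is likewise left entirely open.

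The missing idea, which the paper uses, is that for points in convex position \emph{crossing is a purely combinatorial property}: two edges cross if and only if their endpoints alternate in the cyclic order around the hull. Hence one may move the points along a circle, preserving the cyclic order, and any flip sequence valid for the perturbed points remains valid for the original ones. Labelling the points $v_1,\dots,v_n$ in path order along $T_F$, the two hull chains between $v_1$ and $v_n$ are each visited by $T_F$ in order (otherwise two subpaths of $T_F$ would cross), so one can place $v_i$ at height $i$ on the circle without changing the cyclic order. After this perturbation $T_F$ is a literal $y$-monotone path, and Theorem~\ref{tree-and-path} applies verbatim: monotonicity restores both the up/down flexibility of Corollary~\ref{cor:up-or-down} (giving the $0.5n-1$ Phase-1 budget) and the protection of happy edges (which are now consecutive edges $v_iv_{i+1}$), yielding $1.5n-2-h$ with no new argument needed. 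In short: your instinct that literal monotonicity can fail (e.g., for the hull path) is correct, but the right response is to relabel/perturb within the cyclic order rather than to weaken the structural lemma, since the weakened lemma forfeits exactly the two savings the theorem requires.
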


\begin{proof}
When points are in convex position, two edges cross (a geometric property) if and only if their endpoints alternate in the cyclic ordering of points around the convex hull (a combinatorial property). 
This insight allows us to show that the path $T_F$ is ``equivalent to'' a monotone path, which means that we can use the previous Theorem~\ref{tree-and-path}. 
In particular, let the ordering of points in $T_F$ be $v_1, \ldots, v_n$. We claim that the above algorithms can be applied using this ordering in place of the ordering of points by $y$-coordinate.  Thus, a sink in $T_I$ is a point $v_i$ with no edge to a later vertex in the ordering, and etc.  
One could justify this by examining the steps of the algorithms (we relied on geometry only to show that we can add a non-crossing edge ``upward'' from a sink, which becomes easy for points in convex position).  As an alternative, we
make the argument formal by showing how to perturb the points so that $T_F$ becomes a monotone path while preserving the ordering of points around the convex hull---which justifies that the flips for the perturbed points are correct for the original points.

First adjust the points so that they lie on a circle with $v_1$ lowest at $y$-coordinate 1 and $v_n$ highest at $y$-coordinate $n$.  
The convex hull separates into two chains
from $v_1$ to $v_n$.
Observe that $T_F$ visits the points of each chain in order from bottom to top (if $a$ appears before $b$ on one chain but $T_F$ visits $b$ before $a$, then the subpaths from $v_1$ to $b$ and from $a$ to $v_n$ would cross).  Thus, we can place $v_i$ at $y$-coordinate $i$ while preserving the ordering of points around the circle.

We can now apply Theorem~\ref{tree-and-path} to $T_I$ and $T_F$ on the perturbed points. This gives a sequence of at most $1.5n-h-2$ flips to reconfigure $T_I$ into $T_F$ and the flip sequence is still correct on the original points, thus proving the upper bound claimed by the theorem. 
For the lower bound, note that the points in Figure~\ref{fig:lowerbound} are in convex position and $T_F$ is a path. Thus Lemma~\ref{lem:lowerbound} (which employs the example from Figure~\ref{fig:lowerbound}) settles the lower bound claim.
\end{proof}

\section{Improving the Upper Bound for a Convex Point Set}
\label{sec:improved-bound}

In this section we
show that for $n$ points in convex position, reconfiguration between two non-crossing spanning trees can always be done with fewer than $2n$
flips.  

\begin{theorem}
\label{theorem:upper-bound}
There is an algorithm to reconfigure between 
an initial non-crossing spanning tree $T_I$ and a final non-crossing spanning tree $T_F$
on $n$ points in convex position using 
at most 
$2d - \Omega (\log d )$ flips, where $d = |T_I \setminus T_F |$.
\end{theorem}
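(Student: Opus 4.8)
The plan is to measure everything against the trivial bound of $2d$ and to show that the convex structure lets us save one flip at each of $\Omega(\log d)$ recursive steps. Color the edges of $T_I\setminus T_F$ \emph{red} and those of $T_F\setminus T_I$ \emph{blue}, so there are $d$ of each. I would use only two kinds of flips: a \emph{clearing} flip that removes a red edge and inserts a convex-hull boundary edge (not in $T_F$), and a \emph{perfect} flip that removes a red or previously-created hull edge and inserts a blue edge. Two facts from the convex setting drive the analysis, both via the characterization used already in the proof of Theorem~\ref{thm:convex-path} that two chords cross iff their endpoints interleave on the hull: a convex-hull boundary edge crosses nothing, and hence the only edges that can ever cross a given blue edge are the still-present red edges. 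In particular, once a red edge is cleared to the hull it never again obstructs a blue edge, though it must still be removed later by some perfect flip that inserts a blue edge but removes a hull edge, and therefore yields no net saving.

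This gives a clean accounting. Every blue edge is inserted exactly once by a perfect flip, so there are exactly $d$ perfect flips; a happy edge is never removed, since the fundamental cycle of an inserted blue edge always contains a red or hull edge (a cycle of blue and happy edges would lie in $T_F$). If $c$ of the perfect flips remove a hull edge and $d-c$ remove a red edge directly, and if $c$ is also the number of clearing flips (each clearing flip creates exactly one hull edge, later consumed by exactly one hull-removing perfect flip), then the total number of flips is $c+d = 2d-(d-c)$. Thus the savings relative to $2d$ is exactly $d-c$, the number of red edges removed by a \emph{direct} perfect flip, and the theorem reduces to guaranteeing $\Omega(\log d)$ direct perfect flips.

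The basic move that produces one direct perfect flip is: take a blue edge $e$ currently crossed by $k\ge 1$ red edges, clear $k-1$ of them to the hull, and then perfect-flip the last crossing red edge into $e$ (this last crossing edge lies on the fundamental cycle, and removing it makes $e$ non-crossing); this final flip is a direct placement and saves one flip. To accumulate $\Omega(\log d)$ such savings I would recurse on sub-polygons. The structural input I need is a \emph{balanced separator}: an edge $e\in T_F$ that is crossed by at most roughly $d/2$ red edges and whose two sides each contain at most a constant fraction of the remaining difference edges. Placing such an $e$ by the basic move costs one guaranteed saving plus at most $d/2-1$ clearings; because $e\in T_F$, afterward no tree edge crosses it, so the two sides become independent sub-instances handled recursively. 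Each side being a constant factor smaller, the recursion has depth $\Omega(\log d)$, and one guaranteed saving per level along a root-to-leaf path yields the $2d-\Omega(\log d)$ bound.

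The main obstacle is the balanced-separator lemma, and in particular making it yield a genuine saving at every level: I must choose $e$ so that it is actually crossed by at least one red edge — otherwise the basic move degenerates and may remove a hull edge rather than a red one, saving nothing — while simultaneously bounding both its red-crossing count and the sizes of the two sides. I expect to establish existence by combining an averaging argument over the chords with a centroid-style balance argument on $T_F$, exploiting the convex crossing characterization and the freedom to draw the separating edge from either $T_I$ or $T_F$ (whichever admits a lightly crossed edge). The remaining verifications are routine but must be carried out: that every clearing flip is available (removing a red edge splits the tree into two components that are always joined by some hull boundary edge, which crosses nothing), and that every intermediate configuration is a non-crossing spanning tree, so that the counts above are valid.
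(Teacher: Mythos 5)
Your high-level plan coincides with the paper's announced strategy (clear all but one of the edges crossing a carefully chosen difference edge $e$, perfect-flip the last crossing edge into $e$, and repeat to accumulate $\Omega(\log d)$ savings), but the two steps you defer are exactly where the content lies, and both are genuinely gapped. The ``balanced separator'' you postulate---a blue edge crossed by at least one but at most roughly $d/2$ red edges, with both sides containing at most a constant fraction of the difference set---is not what the paper proves, is not needed, and likely cannot be guaranteed: when $T_I\cup T_F$ happens to be non-crossing, no blue edge is crossed at all, so your separator does not exist, and in configurations like Figure~\ref{fig:lowerbound} the lightly crossed blue edges are hull edges (maximally unbalanced) while the balanced ones have close to $d/2$ crossings. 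The paper's Lemma~\ref{lemma:crossing-bound} goes the opposite way: it works with \emph{minimal} edges, i.e.\ edges of the symmetric difference one of whose sides contains \emph{no} difference edge (maximally unbalanced), and shows some minimal edge has at most $\lfloor (d+3)/2\rfloor$ crossings, via two minimal edges $a,b$ with $k_a+k_b\le d+k_{ab}$ and $k_{ab}\le 3$ (four subtrees of a tree are joined by at most three tree edges). Balance of the sides is irrelevant because the recursion is not on sub-polygons but on the size of the current difference set, which drops by $k\le\lfloor(d+3)/2\rfloor$ per round; that alone forces $\Omega(\log d)$ rounds, each saving one flip.

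The more serious gap is your basic move. You assert that once only one red edge crosses $e$, ``this last crossing edge lies on the fundamental cycle,'' but for an arbitrary blue edge this is false: take four points $u,a,v,b$ in convex position, current tree $\{ua,av,ab\}$, $T_F\supseteq\{uv\}$; the edge $ab$ crosses $uv$ but is not on the cycle $u\hbox{-}a\hbox{-}v\hbox{-}u$, so exchanging $ab$ for $uv$ isolates $b$. When the last crossing edge is off the cycle you can neither remove it (the result is disconnected) nor remove a cycle edge (the crossing remains), so neither validity nor the saving is guaranteed. This is precisely what minimality buys in the paper: Claim~\ref{claim:two-components-1} (the tree restricted to the minimal side has exactly two components, containing $u$ and $v$) together with the invariant of Claim~\ref{claim:disconnected-Q-bar} ($u,v$ stay disconnected on the other side) implies that removing the last crossing edge disconnects $u$ from $v$, so the concluding perfect flip is always legal. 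A related structural difference compounds this: the paper parks a cleared edge by also flipping it \emph{into} $T_F$ (a reverse flip appended at the far end of the sequence), so parked hull edges are happy for the current pair of trees and the minimality invariants persist down the recursion; in your one-directional variant, parked hull edges stay foreign to $T_F$, which is why your $k=0$ case can fail to save (the fundamental cycle may contain only parked and happy edges) and why you would need a separate argument that the later hull-consuming perfect flips are always schedulable. As written, the proposal reduces the theorem to an unproven lemma that is stronger than what is true, and relies on a flip-validity claim that is false without the minimality structure.
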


Before proving the theorem we note 
that previous reconfiguration algorithms do not respect this bound.
Avis and Fukuda~\cite[Section 3.7]{avis1996reverse} proved an upper bound of $2n$ minus a constant by reconfiguring each spanning tree $T_i$ to a star $S$ using $|S \setminus T_i|$ flips.  When $T_i$ is a path, $|S \cap T_i| \le 2$, so $|S \setminus T_i| \ge n-3$ and their method takes at least $2n-6$ flips.  Similarly, the method of flipping both trees to a canonical path around the convex hull takes at least $2n-6$ flips when $T_1$ and $T_2$ are paths with only two edges on the convex hull. 
Although paths behave badly for these canonicalization methods, they are actually easy cases 
as we showed in 
Section~\ref{sec:2phase}. 
As in that section, we do not use a canonical tree to prove Theorem~\ref{theorem:upper-bound}---instead, the flips are tailored to the specific initial and final trees.

Throughout this section, we assume points in convex position.
Consider 
the 
symmetric difference 
$D = (T_I \setminus T_F) \cup (T_F \setminus T_I)$,
so $|D|=2d$.
It is easy to reconfigure $T_I$ to $T_F$ using $2d$ flips---we use $d$ flips to move the edges of  $T_I \setminus T_F$ to the convex hull, giving an intermediate tree $T$, and, from the other end, use $d$ flips to move the edges of $T_F \setminus T_I$ to the same tree $T$.  The plan is to save $\Omega ( \log d )$ of these flips by using that many \emph{perfect flips}
(recall that 
a perfect flip exchanges an edge of $T_I \setminus T_F$ directly with an edge of $T_F \setminus T_I$).
In more detail, the idea is to find an edge $e \in D$ that is crossed by at most (roughly) $d/2$ edges of the other tree. We flip all but one of the crossing edges out of the way to the convex hull, and---if $e$ is chosen carefully---we show that we can perform one flip from $e$ to the last crossing edge, thus providing one perfect flip after at most $d/2$ flips.
Repeating this approach 
$\log d$ times
gives our claimed bound.

We first show how to find an edge $e$ with not too many crossings.  To do this, we define  ``minimal'' edges.
An edge joining points $u$ and $v$ 
determines two subsets of points, those clockwise from $u$ to $v$ and those clockwise from $v$ to $u$ (both sets include $u$ and $v$).  We call these the \defn{sides} of the edge.
An edge is \defn{contained} in a side if both endpoints are in the side.
We call a side \defn{minimal} if it contains no edge of the symmetric difference $D$, and call an edge $e\in D$ \defn{minimal} if at least one of its sides is minimal.
Note that if $D$ is non-empty then it contains at least one minimal edge (possibly a convex hull edge). 
We need the following property of minimal edges.

\begin{figure}[htb]
    \centering
\subfigure[~]{\includegraphics[width=.3\textwidth,trim= 0 0 250 0,clip]{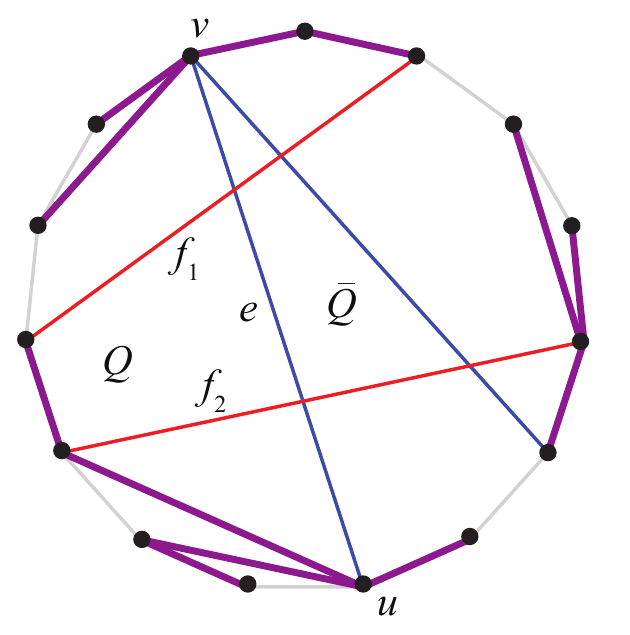}}
\hspace*{10mm}
\subfigure[~]{    \includegraphics[width=.3\textwidth,trim= 250 0 0 0,clip]{figures/ST-reconfig}}
    \caption{(a) Illustration for Claim~\ref{claim:two-components-1} showing $T_I \cap T_F$ in thick purple, $T_I \setminus T_F$ in red, $T_F \setminus T_I$ in blue, and a minimal edge $e \in T_F$.  
    (b) Illustration for Lemma~\ref{lemma:crossing-bound} with $d=6$, showing two minimal edges $a$ and $b$ with $k_a = 5$, $k_b = 4$, and $k_{ab} = 3$.
    }
    \label{fig:minimal-edge}
\end{figure}

\begin{claim}
\label{claim:two-components-1}
Let $e=uv$ be a minimal edge of $D$.  Let $Q$ be a minimal side of $e$, and let $\bar Q$ be the other side.  Suppose $e \in T_F$.  Then
$T_I \cap Q$ consists of exactly two connected components, one containing $u$ and one containing $v$.
\end{claim}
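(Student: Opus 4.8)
The plan is to compare $T_I$ and $T_F$ inside the minimal side $Q$ and then reduce the claim to the elementary fact that deleting a single edge from a tree produces exactly two components. The two ingredients are a short set-theoretic consequence of minimality and one genuinely geometric fact that exploits convex position.

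First I would record what minimality of $Q$ buys us. Since $Q$ contains no edge of $D$ other than $e$ itself, any edge with both endpoints in $Q$ is either happy (in both trees) or in neither tree, with the sole exception of $e=uv$, which lies in $T_F\setminus T_I$. Chasing the definition of $D$ this gives the clean identity $T_I\cap Q=(T_F\cap Q)\setminus\{e\}$: every induced edge of $T_I$ on $Q$ is also in $T_F$, and conversely $T_F\cap Q$ and $T_I\cap Q$ differ only in the edge $e$. This step is pure bookkeeping from the definitions of minimal side and symmetric difference.

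The crux is to show that $T_F\cap Q$ is connected, hence a spanning tree of $Q$; this is where convex position is essential. Because the points are convex, $e=uv\in T_F$ is a chord, and because $T_F$ is non-crossing, no edge of $T_F$ can join a point strictly inside $Q$ to a point strictly inside $\bar Q$ (such an edge would cross $e$). Thus in $T_F$ the only passages between the interior of $Q$ and the interior of $\bar Q$ run through $u$ or $v$. Deleting $u$ and $v$ from $T_F$ therefore leaves every resulting component lying entirely on one side; each component contained in the interior of $Q$ must, since $T_F$ is connected, reattach to $u$ or $v$ via an edge that stays inside $Q$ and so belongs to $T_F\cap Q$. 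Together with the edge $uv$, these attachments make $T_F\cap Q$ connected, and as an induced subgraph of a tree it is acyclic, hence a tree on vertex set $Q$.

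Finally I would combine the two steps: $T_I\cap Q=(T_F\cap Q)\setminus\{e\}$ is obtained from the tree $T_F\cap Q$ by deleting the single edge $e=uv$, which splits a tree into exactly two components, one containing $u$ and one containing $v$, as required. I expect the connectivity of $T_F\cap Q$ — the non-crossing-plus-convex argument — to be the only nontrivial point; once it is in hand the rest is routine, and the symmetric versions (for $\bar Q$, or for the case $e\in T_I$) follow by identical reasoning.
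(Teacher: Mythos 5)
Your proposal is correct and takes essentially the same approach as the paper: both arguments rest on the two facts that $T_F \cap Q$ is a spanning tree of $Q$ (split by $e$ into subtrees at $u$ and at $v$) and that minimality forces $T_I \cap Q = (T_F \cap Q) \setminus \{e\}$, so that $T_I \cap Q$ is a tree minus one edge and hence has exactly two components. The only difference is that you supply the convex-position/non-crossing argument for why $T_F \cap Q$ is connected, a fact the paper asserts without proof.
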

\begin{proof}
The set $T_F \cap Q$ is a non-crossing tree
consisting of edge $e$ and two subtrees $T_u$ containing $u$ and $T_v$ containing $v$.
Since $e$ is minimal, there are no edges of $D$ in $Q$ except for $e$ itself.
This means that
$T_I \cap Q$  
consists of  $T_u$ and $T_v$,
and since $e \notin T_I$, these two components of $T_I \cap Q$ are disconnected in $Q$.
See Figure~\ref{fig:minimal-edge}(a).
\end{proof}

Our algorithm will operate on a minimal edge $e$.  To guarantee the savings in flips, we need a minimal edge with not too many crossings.  

\begin{lemma}
\label{lemma:crossing-bound} If $D$ is non-empty, then there is a minimal edge with at most 
$\lfloor (d+3)/2 \rfloor$ crossings.
\end{lemma}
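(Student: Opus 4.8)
The plan is to locate two minimal edges lying in the \emph{same} tree, say $a,b\in T_F$, and average their crossing numbers. Write $A$ and $B$ for the sets of edges crossing $a$ and $b$ respectively, and set $k_a=|A|$, $k_b=|B|$, $k_{ab}=|A\cap B|$. The key observation is that $A,B\subseteq T_I\setminus T_F$: since $T_F$ is non-crossing, anything crossing a blue edge is red, and a happy edge lies in the non-crossing tree $T_F$ and so cannot cross $a$ or $b$ either. As $|T_I\setminus T_F|=d$, this forces $|A\cup B|\le d$, whence
\[
2\min(k_a,k_b)\;\le\;k_a+k_b\;=\;|A\cup B|+|A\cap B|\;\le\;d+k_{ab}.
\]
If I can guarantee $k_{ab}\le 3$, then $\min(k_a,k_b)\le (d+3)/2$, and since the left-hand side is an integer, the better of $a,b$ is a minimal edge with at most $\lfloor(d+3)/2\rfloor$ crossings. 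This is exactly the inclusion--exclusion picture suggested by Figure~\ref{fig:minimal-edge}(b).

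Two easy reductions come first. The statement is trivial for $d\le 3$: any edge of $D$ is crossed only by edges of the opposite tree, of which there are $d$, so every minimal edge (one exists since $D\neq\emptyset$) has at most $d\le\lfloor(d+3)/2\rfloor$ crossings. Hence I may assume $d\ge 4$, where $D$ is large enough to supply the pair I need. Moreover, by the colour symmetry of the statement I am free to choose in which tree ($T_I$ or $T_F$) I hunt for $a$ and $b$.

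The heart of the argument is the choice of $a,b$ forcing $k_{ab}\le 3$, and this is where I expect the real work to lie. I would take the two minimal edges to be ``opposite'': pick a minimal edge $a$ with minimal side $Q_a$, and then a minimal edge $b$ whose minimal side $Q_b$ lies inside the complementary side $\bar Q_a$, so that $Q_a$ and $Q_b$ are disjoint arcs. Any edge of $A\cap B$ then runs from the interior of $Q_a$ to the interior of $Q_b$: crossing $a$ puts exactly one endpoint strictly inside $Q_a$, crossing $b$ puts exactly one strictly inside $Q_b$, and these two interior endpoints cannot coincide since the arcs are disjoint. Now I would exploit Claim~\ref{claim:two-components-1}, which says each minimal side meets the opposite tree in exactly two connected pieces, together with the fact that the edges of $A\cap B$ are pairwise non-crossing (they all lie in one tree, so form a nested family spanning the band between the two arcs). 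The goal is to show that only a bounded number of such nested edges can respect the two-component structure at both ends simultaneously; pinning this constant down to $3$ is the crux, and the $+3$ should ultimately account for the boundary pieces of the two-component decompositions at the shared end of the band.

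Finally I would confirm that a same-tree pair in this position actually exists when $d\ge 4$. Starting from any minimal edge $a$, the side $\bar Q_a$ still carries edges of $D$, and repeating the ``find an empty side'' reasoning inside $\bar Q_a$ yields a second minimal edge $b$ with $Q_b\subseteq\bar Q_a$; a short case check ensures $a$ and $b$ can be chosen the same colour (otherwise the instance is small enough to fall under the $d\le 3$ case). The main obstacle, to be clear, is the geometric/combinatorial bound $k_{ab}\le 3$; the inclusion--exclusion step, the base cases, and the colour reduction are routine, and assembling them with the overlap bound gives the stated $\lfloor(d+3)/2\rfloor$.
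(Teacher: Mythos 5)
Your overall strategy is the same as the paper's (the same inclusion--exclusion, the same target bound $k_{ab}\le 3$, the same appeal to Claim~\ref{claim:two-components-1}), but you leave precisely the decisive step unproven, and the route you sketch for it would not succeed. You hope to get $k_{ab}\le 3$ from the fact that the edges of $A\cap B$ form a nested (pairwise non-crossing) family between the two disjoint arcs; planarity alone, however, gives no constant bound, since arbitrarily many pairwise non-crossing edges can run between two disjoint arcs. The missing idea is the \emph{acyclicity of $T_I$}: every edge of $A\cap B$ lies in $T_I$ and has one endpoint strictly inside $Q_a$ and one strictly inside $Q_b$, so by Claim~\ref{claim:two-components-1} it joins one of the two connected components of $T_I\cap Q_a$ to one of the two connected components of $T_I\cap Q_b$. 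Four connected pieces of the tree $T_I$ can be joined by at most three edges of $T_I$, because a fourth such edge would close a cycle. That single observation is the entire ``crux'' you defer, and it is where the constant $3$ actually comes from --- not from boundary effects of the two-component decompositions.

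There is also a genuine error in your existence argument for the same-colour pair. The fallback ``otherwise the instance is small enough to fall under the $d\le 3$ case'' is false: for arbitrarily large $d$, all minimal edges inside $\bar Q_a$ may belong to $T_I$ while $a\in T_F$, so no colour-symmetric case reduction of this kind is available. The paper resolves this differently. First it disposes of the case that some minimal edge has zero crossings (such an edge trivially satisfies the lemma). Under the standing assumption that every minimal edge is crossed, it chooses $b$ to be an edge of $T_F\setminus T_I$ inside $\bar Q_a$ whose inner side $B$ contains no other edge of $T_F\setminus T_I$; if this $b$ were not minimal, then $B$ would contain a minimal edge $c\in T_I\setminus T_F$, and nothing could cross $c$ (a crossing edge would be in $T_F$, hence would either cross $b$, impossible, or lie inside $B$, impossible since all $T_F$-edges in $B$ other than $b$ are happy), contradicting the assumption. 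Note also that the paper allows the degenerate choice $b=a$ (with $Q_b=\bar Q_a$) when $\bar Q_a$ is itself minimal --- a case your disjoint-arcs picture quietly excludes but which the component-counting argument handles with the sharper conclusion $k_a=k_b=k_{ab}\le 3$.
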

\begin{proof}
Clearly the lemma holds if some minimal edge is not crossed at all (e.g., a convex hull edge in $D$), so we assume that all minimal edges have a crossing.
Let $a$ be a minimal edge of $D$.
Suppose $a \in T_F$. 
Let $A$ be a minimal side of $a$, and let $\bar A$ be the other side. 
Our plan is to find a second minimal edge $b \in T_F$ such that $b$ is inside $\bar A$ and $b$ has 
a minimal side $B$ that is contained in $\bar A$.  We will then argue that $a$ or $b$ satisfies the lemma.
See Figure~\ref{fig:minimal-edge}(b).

If $\bar A$ is minimal, then set $b := a$ and $B = \bar A$. 
Otherwise, 
let $b$ be an edge of $T_F \setminus T_I$ in $\bar A$ whose $B$ side (the side in $\bar A$) 
contains no other edge of $T_F \setminus T_I$. 
Note that $b$ exists, and that all the edges of $T_F$ in $B$ (except $b$) lie in $T_I$.  
If $b$ is not minimal, then $B$ contains
a minimal edge $c$ (which must then be in $T_I \setminus T_F$), and $c$ is not crossed by any edge of $T_F$, 
because 
such an edge would either have to cross $b$, which is impossible since $b \in T_F$, or lie in $B$, which is impossible since 
all the edges of $T_F$ in $B \setminus \{b\}$ 
are in $T_F \cap T_I$. 
But we assumed that all minimal edges are crossed, so $c$ cannot exist, and so $b$ must be minimal.

Let $k_a$ be the number of edges 
of $T_I$ crossing $a$, let $k_b$ be the number of edges of $T_I$ crossing $b$, and let $k_{ab}$ be the number of edges of $T_I$ crossing both $a$ and $b$. 
Observe that $k_a + k_b \le d + k_{ab}$.

We claim that $k_{ab} \le 3$.  Then $k_a + k_b \le d+3$ so $\min\{k_a,k_b\} \le (d+3)/2$, which 
will complete
the proof since the number of crossings is an integer. 
By Claim~\ref{claim:two-components-1}, $T_I \cap A$ has two connected components and $T_I \cap B$ has two connected components. 
Now 4 connected components in a tree can have at most three edges joining them, which
implies that $k_{ab} \le 3$.
(Note that this argument is correct even for $a=b$, though we get a sharper bound since $k_a = k_b = k_{ab} \le 3$.)
\end{proof}

\paragraph*{Algorithm.}

Choose a minimal edge $e$ with $k$ crossings, where $0 \le k \le (d+3)/2$ (as guaranteed by Lemma~\ref{lemma:crossing-bound}). 
Suppose $e \in T_F$, so the crossing edges belong to $T_I$. The case where $e \in T_I$ is symmetric. In either case the plan is to 
perform some flips on $T_I$ and some on $T_F$ to reduce the difference $d$
by $k$ (or by 1, if $k=0$) and apply the algorithm recursively to the resulting instance.  Note that the algorithm constructs a flip sequence by adding flips at both ends of the sequence.  

If $k=0$ then add $e$ to $T_I$.  This creates a cycle, and the cycle must have an edge $f$ in $T_I \setminus T_F$. Remove $f$.  This produces a new tree $T_I$.  We have performed one perfect flip and reduced $d$ by 1.
Now recurse.

Next suppose $k \ge 1$.
Let $e = uv$.  
Let $Q$ be the  minimal side of $e$ and let $\bar Q$ be the other side (both sets include $u$ and $v$).
Let $f_1, \ldots, f_k$ be the edges that cross $e$.  We will flip all but the last crossing edge to the convex hull.
\changed{}
For $i=1, \ldots k-1$ we flip $f_i$ as follows.

\begin{enumerate}
\item Perform a flip in $T_I$ by removing $f_i$ and adding a convex hull edge $g$ that lies in $\bar Q$. (The existence of $g$ is proved below.)

\item If $g \in T_F$ then this was a perfect flip and we have performed one perfect flip and reduced $d$ by 1. 

\item Otherwise (if $g \notin T_F$), perform a flip in $T_F$ by adding $g$ and removing an edge $h \in T_F \setminus T_I$,
that lies in $\bar Q$ and is not equal to $e$.
(The existence of $h$ is proved below.)

\end{enumerate}

At this point, only $f_k$ crosses $e$.
Perform one flip in $T_I$ to remove $f_k$ and add $e$. 
(Correctness proved below.)
Now, apply the algorithm recursively to the resulting $T_I, T_F$.

This completes the description of the algorithm.

\paragraph*{Correctness.}
We must prove that $g$ and $h$ exist and that the final flip is valid.

First note that $e$ remains a minimal edge 
after each flip performed inside the loop
because we never add or remove edges inside $Q$.
We need one more invariant of the loop.

\begin{claim}
\label{claim:disconnected-Q-bar}
Throughout the loop $u$ and $v$ are disconnected in $T_I \cap {\bar Q}$. 
\end{claim}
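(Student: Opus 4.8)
The plan is to prove that the unique $u$–$v$ path $P$ in the tree $T_I$ must leave $\bar Q$, i.e.\ it uses at least one point lying strictly inside the $Q$-side. Since $T_I$ is a tree, any path from $u$ to $v$ inside $T_I\cap\bar Q$ would have to coincide with $P$; hence $u$ and $v$ are disconnected in $T_I\cap\bar Q$ precisely when $P$ is not entirely contained in $\bar Q$. First I would observe that this is really a static statement: it holds for any non-crossing spanning tree $T_I$ for which $e=uv$ is still a minimal edge of $T_F$ (so that Claim~\ref{claim:two-components-1} applies and $T_I\cap Q$ splits into the two components $C_u\ni u$ and $C_v\ni v$) and for which $e$ is crossed by at least one edge of $T_I$. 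Throughout the loop all of these hypotheses persist: the algorithm never adds or removes an edge inside $Q$, so $e$ stays minimal and $T_I\cap Q$ is unchanged; each iteration keeps $e\in T_F$ (step~3 removes an edge $h\neq e$); and at least $f_k$ still crosses $e$ until the final flip. So it suffices to prove the property once, as a structural lemma, and then note that the hypotheses hold at every tree encountered during the loop.

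For the main argument I would fix any edge $f=ab$ of $T_I$ that crosses $e$, with $a$ strictly inside the $Q$-side and $b$ strictly inside the $\bar Q$-side (both endpoints differ from $u,v$ because $f$ crosses $e$). The key geometric fact, immediate in convex position, is that since $e=uv$ and $f=ab$ cross, their endpoints alternate around the hull as $u,a,v,b$; hence the chord $f$ separates $u$ from $v$, placing them in the two different regions into which $f$ cuts the disk. Now suppose for contradiction that $P\subseteq\bar Q$. Then $P$ uses no crossing edge of $e$ (each such edge has an endpoint strictly inside the $Q$-side, which $P$ avoids), and in particular $P$ avoids $f$; so $u$ and $v$ lie in the same component $S$ of $T_I\setminus\{f\}$, with $P\subseteq S$. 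By Claim~\ref{claim:two-components-1}, $a$ lies in $C_u$ or $C_v$ and is therefore joined to $u$ or to $v$ by $Q$-edges, none of which is $f$; hence $a\in S$ as well, which forces $b$ into the other component $S'$. Consequently $P$ contains neither $b$ (it lies in $S'$) nor $a$ (it lies strictly inside the $Q$-side, while $P\subseteq\bar Q$).

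To finish I would invoke a Jordan-curve argument: $P$ is a curve from $u$ to $v$ made of chords, so it stays inside the convex hull; since $u$ and $v$ lie in the two regions cut off by the chord $f$, the path $P$ must meet $f$. But it cannot cross $f$ transversally, because $P$ and $f$ are both edges of the non-crossing tree $T_I$, and we have just shown that $P$ avoids both endpoints of $f$. This is the desired contradiction, so $P\not\subseteq\bar Q$, and the claim follows at every step of the loop.

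I expect the main obstacle to be exactly the step that rules out $P$ passing through the $\bar Q$-side endpoint $b$ of the crossing edge: the naive separation argument only forbids $P$ from crossing $f$ in its interior, and a priori a path could slip through a shared vertex. Handling this cleanly is what forces me to bring in Claim~\ref{claim:two-components-1}, using it to place $a$ together with $u$ and $v$ in one component of $T_I\setminus\{f\}$ and thereby trap $b$ in the other component, out of reach of $P$.
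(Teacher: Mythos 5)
Your proof is correct, and it rests on the same two ingredients as the paper's proof---the existence of an edge of $T_I$ crossing $e$ (the paper uses $f_k$; any crossing edge works) and Claim~\ref{claim:two-components-1}---but it deploys them in the opposite order, so the two arguments are genuinely different in execution. The paper assumes a $u$--$v$ path $\pi$ exists in $T_I \cap \bar Q$ and first uses \emph{planarity}: since $f_k$ cannot cross $\pi$, the endpoint $y \in \bar Q$ of $f_k$ must lie \emph{on} $\pi$; then Claim~\ref{claim:two-components-1} connects the other endpoint $x$ to $u$ or $v$ inside $Q$, and the final contradiction is an explicit cycle in $T_I$, violating \emph{acyclicity}. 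You argue in reverse: you first use the tree structure (Claim~\ref{claim:two-components-1} together with the two components of $T_I \setminus \{f\}$) to show that \emph{both} endpoints of the crossing edge $f$ avoid the path $P$, and your final contradiction is topological---the chord $f$ separates $u$ from $v$ inside the hull, so $P$ must meet $f$, which non-crossingness now forbids. The paper's version is shorter; yours has the merit of making explicit that the statement is a static lemma whose hypotheses ($e$ remains a minimal edge of the symmetric difference lying in $T_F$, edges inside $Q$ are untouched, and $f_k$ still crosses $e$) are loop invariants, a point the paper compresses into the single sentence preceding the claim. Both proofs correctly handle the one genuine subtlety, namely that ruling out a transversal crossing of $f$ is not enough---one must also prevent the path from passing through an endpoint of $f$---and in both cases Claim~\ref{claim:two-components-1} is exactly what closes that loophole.
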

\begin{proof}
Suppose there  is a path $\pi$ from $u$ to $v$ in $T_I \cap {\bar Q}$. 
Now consider the edge $f_k$ which crosses $e$, say from $x \in Q$ to $y \in {\bar Q}$.  Since $f_k$ cannot cross $\pi$, we have $y \in \pi$.   
By Claim~\ref{claim:two-components-1}, $T_I \cap Q$ consists of two components, one containing $u$ and one containing $v$.  Suppose, without loss of generality, that $x$ lies in the component containing $u$. Then there is a path from $x$ to $u$ in $T_I \cap Q$ and a path from $u$ to $y$ in $T_I \cap {\bar Q}$, and these paths together with $f_k$ make a cycle in $T_I$, a contradiction. 
\end{proof}

First we prove that $g$ exists in Step 1. Removing $f_i$ from $T_I$ disconnects $T_I$ into two pieces. There are two convex hull edges that connect the two pieces.  By Claim~\ref{claim:two-components-1}, $T_I \cap Q$ consists of two connected components, one containing $u$ and one containing $v$.  Thus at most one of the convex hull edges lies in $Q$, so at least one lies in $\bar Q$.

Next we prove that $h$ exists in Step 3. 
Adding $g$ to $T_F$ creates a cycle $\gamma$ in $T_F$ and this cycle must lie in $\bar Q$ (because $e \in T_F$) and must contain at least one edge of $T_F \setminus T_I$ (because $T_I$ does not contain a cycle). If $e$ were the only edge of $T_F \setminus T_I$ in $\gamma$, then $u$ and $v$ would be joined by a path in $T_I \cap {\bar Q}$, contradicting Claim~\ref{claim:disconnected-Q-bar}.  Thus $h$ exists.

Finally, we prove that the last flip in $T_I$ (to remove $f_k$ and add $e$) is valid. Removing $f_k$ leaves $u$ and $v$ disconnected in $Q$ by Claim~\ref{claim:two-components-1} and disconnected in $\bar Q$ by Claim~\ref{claim:disconnected-Q-bar}.  Adding $e$ reconnects them, and yields a non-crossing spanning tree.

\paragraph*{Analysis.}
We now prove that the algorithm uses at most the claimed number of flips. 

\begin{observation}
\label{obs:minimal-edge-analysis}
In each recursive call: 
if $k=0$, then the algorithm performs one perfect flip and reduces $d$ by 1; 
and if $k>0$, then the algorithm performs at most $2k-1$ flips (one for $f_k$ and at most 2 for each other $f_i$) and reduces $d$ by $k$ (in each loop iteration, $g$ joins the happy set $T_I \cap T_F$ and in the final step $e$ joins the happy set).  
\end{observation}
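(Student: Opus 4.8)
The statement packages two independent accounting claims --- the flip count and the decrease in $d=|T_I\setminus T_F|$ --- and I would verify each by reading off the algorithm while tracking the happy set $T_I\cap T_F$.

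\emph{Flip count.} This is immediate from the control flow. If $k=0$ there is exactly one flip. If $k\ge 1$, the loop runs $k-1$ times; each iteration always performs Step~1 (one flip in $T_I$) and performs Step~3 (one flip in $T_F$) at most once, so it costs at most two flips. Adding the single concluding flip (remove $f_k$, add $e$) gives at most $2(k-1)+1=2k-1$.

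\emph{Reduction in $d$.} The plan is to show that each loop iteration and the concluding flip lower $d$ by exactly one, and then to count them. The enabling observation is that every crossing edge $f_i$ is unhappy: since $f_i$ crosses $e\in T_F$ and $T_F$ is non-crossing, $f_i\notin T_F$, and as $f_i\in T_I$ by hypothesis we get $f_i\in T_I\setminus T_F$. For $k=0$ the single flip removes $f\in T_I\setminus T_F$ and adds $e\in T_F\setminus T_I$, a perfect flip, so $e$ becomes happy, $f$ leaves, and $d$ drops by one. For $k\ge 1$ consider one loop iteration: if $g\in T_F$ already, then Step~1 alone removes the unhappy $f_i$ and inserts $g$, which now lies in both trees and is happy; if instead $g\notin T_F$, then Step~3 additionally inserts $g$ into $T_F$ and removes the unhappy edge $h\in T_F\setminus T_I$, again leaving $g$ happy. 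In either case a one-line set computation, using $g\notin T_I$, $f_i\in T_I\setminus T_F$, and (when Step~3 fires) $g\notin T_F$ and $h\in T_F\setminus T_I$, shows that $T_I'\setminus T_F'=(T_I\setminus T_F)\setminus\{f_i\}$, so $d$ decreases by exactly one. Finally the concluding flip removes the unhappy $f_k$ and adds $e$, which becomes happy, so $d$ drops once more. Summing over the $k-1$ iterations and the concluding flip yields a total decrease of $k$ (and the single decrease when $k=0$), as claimed.

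The only point needing genuine care is the two-flip branch, where \emph{both} trees change: I would confirm via the explicit set identity above that the net effect on $|T_I\setminus T_F|$ is still $-1$, the newly happy edge $g$ cancelling its own insertion and leaving only the removal of $f_i$. A second small check is that $e\notin T_I$ right up to the concluding flip, so that adding $e$ genuinely makes it happy: since $k\ge 1$ the edge $e$ is crossed and hence is not a convex-hull edge, whereas every edge added to $T_I$ inside the loop is a hull edge $g$; thus no such $g$ equals $e$ and $e$ remains outside $T_I$ until the last step. Everything else rests on Claims~\ref{claim:two-components-1} and~\ref{claim:disconnected-Q-bar}, already established for correctness.
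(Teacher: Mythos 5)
Your proposal is correct and follows essentially the same accounting as the paper, which justifies the observation only by the parenthetical remarks (at most two flips per loop iteration plus one for $f_k$; each iteration makes $g$ happy and the final flip makes $e$ happy, and only unhappy edges are ever removed, so $d$ drops by exactly $k$). Your explicit set identity $T_I'\setminus T_F'=(T_I\setminus T_F)\setminus\{f_i\}$ and the check that $e\notin T_I$ until the concluding flip are just careful spellings-out of the same argument.
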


\remove{The algorithm uses at most two flips each time it reduces $d$ by 1, and 
\changed{we save one each time there is a perfect flip.} 
To prove that the algorithm performs at most $2d - \Omega(\log d )$ flips, it suffices to prove that the number of 
perfect flips is $\Omega(\log d )$. 

\begin{lemma} 
The number of perfect flips performed by the algorithm is at least 
$\lfloor \log(d+3) \rfloor -1$.
\end{lemma}
\begin{proof}
\changed{Let $p(d)$ be the minimum number of perfect flips performed by the algorithm on any input with a difference set of size $d$.
We prove by induction on $d$ that $p(d) \ge \lfloor \log(d+3) \rfloor -1$.}

As base cases: for 
$d=0$, $p(d) = 0 = \lfloor \log(d+3) \rfloor -1 $;
and for 
$d=1,2$, the algorithm performs at least one perfect flip, so $p(d) \ge 1 = \lfloor \log(d+3) \rfloor -1 $.

For the general case we have $d \ge 3$.  If the algorithm chooses an edge with $k=0$ crossings, then the algorithm performs one perfect flip and reduces $d$ by 1. 
Thus, 
the number of perfect flips is at least $1 + p(d-1)$.  Applying induction, we have:
\begin{align*}
1 + p(d-1)
\ge 1 + \lfloor \log (d-1+3) \rfloor - 1 
\ge \lfloor \log(d+3) \rfloor - 1,
\end{align*}
which proves the result in this case.

Next, suppose that the algorithm chooses an edge with $k \ge 1$ crossings.
We are guaranteed that $k \le (d+3)/2$, or more precisely $k \le \lfloor (d+3)/2 \rfloor$.
As noted above, the algorithm performs at least one perfect flip and reduces $d$ by $k$.  The resulting instance has a difference set of size $d' = d - k  \ge d - \lfloor (d+3)/2 \rfloor = \lceil (d-3)/2 \rceil$.
Thus, since $p$ is an increasing function, 
the number of perfect flips is at least  
$1 + p(\lceil (d-3)/2 \rceil)$.

\changed{If $d$ is odd, then
$\lceil (d-3)/2 \rceil = (d-3)/2$.  Applying induction, we have:}
\begin{align*}
1 + p((d-3)/2)
\ge 1 + \lfloor \log ((d-3)/2 + 3) \rfloor - 1
= \lfloor \log(d+3)/2 \rfloor
= \lfloor \log(d+3) \rfloor - 1.
\end{align*}

Finally, if $d$ is even, then $\lceil (d-3)/2 \rceil = (d-2)/2$, so 
the number of perfect flips is at least
$1 + p((d-2)/2) 
\ge \lfloor \log((d+4)/2) \rfloor \ge \lfloor \log (d+3) \rfloor - 1$.
\end{proof}

}

\begin{lemma}
The number of flips performed by the algorithm is at most $2d- \lfloor \log(d{+}3) \rfloor +1$.
\end{lemma}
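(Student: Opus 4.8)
The plan is to set up a recurrence for the number of flips performed by the algorithm and prove the bound by induction on $d$. Let $f(d)$ denote the maximum number of flips the algorithm can perform on any instance whose difference set has size $d$, and let $\log$ be base $2$. I will prove $f(d) \le 2d - \lfloor \log(d+3)\rfloor + 1$, from which the lemma follows since the algorithm is one such execution.

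First I would extract a clean recurrence from Observation~\ref{obs:minimal-edge-analysis} and Lemma~\ref{lemma:crossing-bound}. In a single recursive call the algorithm selects a minimal edge with $k$ crossings and then recurses on a strictly smaller instance. By Observation~\ref{obs:minimal-edge-analysis}, if $k=0$ it performs $1$ flip and reduces $d$ by $1$, and if $k \ge 1$ it performs at most $2k-1$ flips and reduces $d$ by $k$. Both cases fit the uniform statement: the call reduces $d$ by some integer $m$ with $1 \le m \le \lfloor (d+3)/2 \rfloor$ while performing at most $2m-1$ flips. The upper bound on $m$ is exactly the crossing bound of Lemma~\ref{lemma:crossing-bound} when $k \ge 1$, and for the $k=0$ case $m=1 \le \lfloor(d+3)/2\rfloor$ holds trivially since $d \ge 1$. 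Hence $f(d) \le (2m-1) + f(d-m)$ for some $m$ in that range.

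Next I would carry out the induction. The base cases $d=0,1,2$ are checked directly against the formula (e.g.\ $f(0)=0 = 2\cdot 0 - \lfloor\log 3\rfloor + 1$, and $f(1) \le 1 = 2 - \lfloor\log 4\rfloor + 1$). For the inductive step with $d \ge 3$, I apply the inductive hypothesis to $f(d-m)$ and simplify:
\[
f(d) \le (2m-1) + \big(2(d-m) - \lfloor \log(d-m+3)\rfloor + 1\big) = 2d - \lfloor \log(d-m+3)\rfloor .
\]
It then remains to show $2d - \lfloor \log(d-m+3)\rfloor \le 2d - \lfloor\log(d+3)\rfloor + 1$, i.e.\ $\lfloor\log(d+3)\rfloor \le \lfloor\log(d-m+3)\rfloor + 1$. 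This follows from $m \le \lfloor(d+3)/2\rfloor \le (d+3)/2$, which gives $d+3 \le 2(d-m+3)$, together with the elementary fact that $\lfloor\log x\rfloor \le \lfloor\log y\rfloor + 1$ whenever $x \le 2y$.

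The only genuinely delicate point is making the floor-logarithm arithmetic line up exactly with the $+3$ offset; this is precisely why Lemma~\ref{lemma:crossing-bound} is stated with $\lfloor(d+3)/2\rfloor$ rather than $d/2$, since the ``$+3$'' is what makes the quantity $d+3$ (rather than $d$) roughly halve at each recursive call, so that the per-call savings of one flip accumulate to $\lfloor\log(d+3)\rfloor - 1$. I expect no conceptual obstacle: all of the geometric and combinatorial content already resides in Observation~\ref{obs:minimal-edge-analysis} and Lemma~\ref{lemma:crossing-bound}, and once the recurrence is phrased in the uniform form $f(d) \le (2m-1) + f(d-m)$ with $1 \le m \le \lfloor(d+3)/2\rfloor$, this lemma reduces to a clean inductive bookkeeping argument.
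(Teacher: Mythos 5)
Your proof is correct and follows essentially the same route as the paper's: induction on $d$, using Observation~\ref{obs:minimal-edge-analysis} together with the crossing bound $k \le \lfloor (d+3)/2 \rfloor$ from Lemma~\ref{lemma:crossing-bound}, and the same floor-logarithm arithmetic (the paper's $k\ge 1$ case computes exactly your chain $2d - \lfloor\log(d-k+3)\rfloor \le 2d - \lfloor\log((d+3)/2)\rfloor \le 2d - \lfloor\log(d+3)\rfloor + 1$). The only cosmetic difference is that you merge the $k=0$ and $k\ge 1$ cases into a single recurrence $f(d) \le (2m-1) + f(d-m)$ with $1 \le m \le \lfloor(d+3)/2\rfloor$, whereas the paper treats them separately.
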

\begin{proof}
We prove this by induction on $d$.   In the base case $d=0$ we perform $0=2d-\lfloor \log(d{+}3)\rfloor +1$ flips.

Now assume $d\geq 1$, 
and consider what happens in the first recursive call of the algorithm, see  Observation~\ref{obs:minimal-edge-analysis}. 
If the algorithm chooses an edge with $k=0$ crossings, then the algorithm performs one perfect flip.
The resulting instance has a difference set of size $d' = d - 1$ and induction applies, so in 
the total number of flips we perform is at most
\begin{align*}
1 + 2d'-\lfloor \log(d'{+}3)\rfloor + 1
= 2d-\lfloor \log(d{+}2)\rfloor \leq
2d-\lfloor \log(d{+}3)\rfloor +1,
\end{align*}
which proves the result in this case.

Now suppose that the algorithm chooses an edge with $k \ge 1$ crossings, where $k \le \lfloor (d{+}3)/2 \rfloor$.
The algorithm performs at most $2k-1$ flips and the resulting instance has a difference set
of size $d' = d - k$ and therefore 
$d'{+}3  \ge d{+}3 - \lfloor (d{+}3)/2 \rfloor  = \lceil (d{+}3)/2 \rceil \geq (d{+}3)/2$.   
By induction, the total number
of flips that we perform is hence at most
\remove{
\begin{align*}
2k-1 + 2d'-\lfloor \log(d'{+}3)\rfloor + 1
\mchanged{\leq}~2d-\lfloor \log((d{+}3)/2)\rfloor \leq 2d - \lfloor \log (d{+}3)\rfloor +1
\end{align*}
}

\begin{align*}
(2k-1) + (2d'-\lfloor \log(d'{+}3)\rfloor + 1)
& \leq (2k-1) + (2(d-k) - \lfloor \log((d{+}3)/2)\rfloor + 1)\\
& \leq~2d-\lfloor \log((d{+}3)/2)\rfloor\\ 
& \leq 2d - \lfloor \log (d{+}3)\rfloor +1
\end{align*}

as desired. 
\end{proof}

This completes the proof of Theorem~\ref{theorem:upper-bound}.
\bigskip

\section{The Happy Edge Conjecture}
\label{sec:happy-edges}

In this section we make some conjectures and prove some preliminary results in attempts to characterize \emph{which} edges need to be flipped in minimum flip sequences for non-crossing spanning trees.

Recall that
an edge $e$ is \emph{happy} if
$e$ lies in $T_I \cap T_F$.  
We make the following 
conjecture for points in convex position.  In fact, we do not have a counterexample even for general point sets, though our guess is that the conjecture fails in the general case.

\begin{conjecture} {\bf[Happy Edge Conjecture for Convex Point Sets]}
\label{conj:happy-edge}
For any point set $P$ in convex position and any two non-crossing spanning trees $T_I$ and $T_F$ of $P$, there is a minimum flip sequence from $T_I$ to $T_F$ such that 
no happy edge is flipped during the sequence.
\end{conjecture}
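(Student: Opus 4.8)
The plan is to induct on $n$ and peel off one happy edge at a time while preserving global optimality. The structural fact that makes this natural is that a happy edge is a \emph{separator}: if $e = uv \in T_I \cap T_F$, then since $T_I$ and $T_F$ are each non-crossing and each contains $e$, no edge of either tree can cross $e$. Hence $e$ splits the point set into its two sides $Q$ and $\bar Q$ (meeting only at $u$ and $v$), and every edge of $T_I$ and of $T_F$ lies strictly inside one side or equals $e$. Writing $T_I^Q := (T_I \cap Q) \cup \{e\}$ and $T_F^Q := (T_F \cap Q) \cup \{e\}$, and analogously for $\bar Q$, each is a non-crossing spanning tree of the points of that side, with $e$ happy in both subproblems. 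Because $e$ is happy, the symmetric difference splits cleanly, so $|T_I \setminus T_F| = |T_I^Q \setminus T_F^Q| + |T_I^{\bar Q} \setminus T_F^{\bar Q}|$.

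The easy direction is to build a short sequence that never touches $e$. Choosing $e$ to be an \emph{interior} happy edge (so that both $Q$ and $\bar Q$ contain strictly fewer than $n$ points; the case where every happy edge lies on the convex hull is handled by the already-established hull-edge result), the induction hypothesis gives a minimum sequence within $Q$ and one within $\bar Q$, each avoiding all happy edges and in particular fixing $e$. Concatenating them (reconfigure inside $Q$, then inside $\bar Q$; every intermediate tree stays non-crossing because the two sides are separated by the retained edge $e$) yields a valid global sequence of length $\mathrm{dist}_Q + \mathrm{dist}_{\bar Q}$ that fixes every happy edge. So it suffices to prove the matching lower bound $\mathrm{dist}(T_I, T_F) \ge \mathrm{dist}_Q + \mathrm{dist}_{\bar Q}$; equivalently, that an optimal sequence gains nothing by flipping the happy edge $e$.

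For that lower bound I would take an arbitrary minimum sequence and argue it can be transformed, without increasing its length, into one that keeps $e$ in place throughout; the additive bound then follows by projecting onto the two sides. Suppose the sequence removes $e$ at some step (adding an edge $f$ reconnecting the two components of $T_I - e$) and re-adds it later. While $e$ is absent the two sides are no longer separated, so intermediate edges may cross the chord $uv$, and this coupling is exactly what a naive local swap cannot undo. The idea is to exploit the convex-position crossing criterion (two edges cross iff their endpoints alternate around the hull, as used in the proof of Theorem~\ref{thm:convex-path}) to reroute: each edge that crosses $uv$ during the detour is pushed to the convex hull on its own side, in the spirit of the minimal-edge argument of Section~\ref{sec:improved-bound}, so that $e$ can be reinserted immediately, replacing the detour by an equally long or shorter sub-sequence that never deletes $e$ --- the analogue of the observation that an uncrossable hull edge need never flip.

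The main obstacle is precisely this last step: showing that a detour with $e$ removed cannot be strictly shorter than the best $e$-preserving reconfiguration. Uncrossing the chord may seem to demand extra flips that cancel the savings, and it is conceivable (this is why the statement is only a conjecture) that for some configuration a temporary removal of $e$ genuinely shortens the sequence. A promising tool is a charging/potential argument that assigns to each intermediate tree the number of edges crossing the chord $uv$ and matches every flip creating or destroying such a crossing with a within-side flip; if this potential can be driven to zero at no net cost, the decomposition, and hence the conjecture, follow. Making that charging airtight --- in particular ruling out that re-adding $e$ is forced to remove an otherwise-useful edge $g$, using Claim~\ref{claim:two-components-1} to track the two components on each side --- is where I expect the real difficulty, and the likely reason the convex case remains open while only the uncrossable hull-edge case has been settled.
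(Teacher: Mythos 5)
You have not proved the statement, and you acknowledge as much yourself; it is worth being explicit that this statement is an \emph{open conjecture} in the paper. The authors prove it only for happy edges on the convex hull (via Proposition~\ref{prop:remove-add}) and verify it computationally for $n \le 10$; they do not have a proof of the general convex case. Your setup is sound as far as it goes: a happy edge $e=uv$ is crossed by no edge of $T_I \cup T_F$, both trees decompose across the chord $uv$, the symmetric difference splits additively, and concatenating within-side sequences (obtained by induction) yields a valid $e$-fixing flip sequence of length $\mathrm{dist}_Q + \mathrm{dist}_{\bar Q}$. This gives the inequality $\mathrm{dist}(T_I,T_F) \le \mathrm{dist}_Q + \mathrm{dist}_{\bar Q}$, and your induction would indeed close the conjecture if you had the reverse inequality. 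But the reverse inequality \emph{is} the conjecture, specialized to a single interior happy edge: you must rule out that a minimum sequence profits by temporarily deleting $e$ so that intermediate edges may cross the chord $uv$. None of your proposed tools accomplishes this. Projection onto the two sides is undefined while $e$ is absent, since intermediate trees then need not decompose along $uv$, so the lower bound cannot be extracted that way without circularity. The charging/potential argument is stated only as a hope (``if this potential can be driven to zero at no net cost''), with no matching rule, no invariant, and no handling of the interaction you yourself flag, namely that re-inserting $e$ may force the removal of an edge the sequence still needs.

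It is also worth noting why the paper's strongest available tool does not rescue your plan. Proposition~\ref{prop:remove-add} produces a strictly shorter sequence only under the hypothesis that, during the remove--add window for $e$, no added edge crosses $e$. For a convex hull edge that hypothesis holds vacuously, which is exactly how the authors settle the hull case; for an interior happy edge it can fail, and then the proposition gives no contradiction and no length savings. So your reduction repackages the conjecture (correctly) as a statement about one interior separator edge, but the core difficulty --- showing that uncrossing the chord can always be done at no net cost --- remains untouched, which is the same point at which the authors stopped.
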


In this section we first prove this conjecture for the case of happy edges on the convex hull. Then in Section~\ref{sec:parking-edges} we make some stronger conjectures about which extra edges (outside $T_I$ and $T_F$) might be needed in minimum flip sequences.  In Section~\ref{sec:greedy-perfect-flips} we show that even if no extra edges are needed, it may be tricky to find a minimum flip sequence---or, at least, a greedy approach fails.
Finally, in Section~\ref{sec:edge-slides} we prove that the Happy Edge Conjecture fails if we restrict the flips to ``slides'' where one endpoint of the flipped edge is fixed and the other endpoint moves along an adjacent tree edge. 

If the Happy Edge Conjecture is false then a minimum flip sequence might need to remove an  edge and later add it back. 
We are able to prove something about such ``remove-add'' subsequences, even for general point sets:

\begin{proposition}
\label{prop:remove-add}
Consider any point set $P$ and any two non-crossing spanning trees $T_I$ and $T_F$ on $P$ and any minimum flip sequence from $T_I$ to $T_F$.  If some edge $e$ is removed and later added back, then 
some flip during that subsequence must 
add an edge $f$ that crosses $e$.
\end{proposition}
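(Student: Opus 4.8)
The plan is to argue by contradiction from minimality. Fix a minimum flip sequence $T_I = T_0, T_1, \dots, T_m = T_F$, and suppose edge $e$ is removed at step $i$ (so $e \in T_{i-1}$, $e \notin T_i$) and first added back at step $j > i$ (so $e \notin T_{j-1}$, $e \in T_j$, and $e \notin T_k$ for $i \le k \le j-1$). Assume toward a contradiction that \emph{no} flip among steps $i, \dots, j$ inserts an edge crossing $e$. The re-insertion at step $j$ adds $e$ itself, which does not cross $e$, so the assumption really concerns steps $i,\dots,j-1$. The consequence I would record first is the one that matters: since $T_{i-1}$ is non-crossing and contains $e$, no edge of $T_{i-1}$ crosses $e$; since deleting edges never creates crossings and, by assumption, no inserted edge crosses $e$, a straightforward induction shows that \emph{$e$ crosses no edge of $T_k$ for every $i-1 \le k \le j-1$}.

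Next I would ``inflate'' the window by reintroducing $e$. Write $T_i = T_{i-1} - e + a$ and $T_j = T_{j-1} - c + e$, and set $G_k := T_k + e$ for $i \le k \le j-1$. By the compatibility just established each $G_k$ is non-crossing, and since $T_k$ is a spanning tree with $e \notin T_k$, each $G_k$ is connected with exactly one cycle $C_k$ (the fundamental cycle of $e$ in $T_k$, which contains $e$). The two boundary trees sit inside these unicyclic graphs: $T_{i-1} = G_i - a$ and $T_j = G_{j-1} - c$, where $a \in C_i$ and $c \in C_{j-1}$ because deleting $a$ (resp.\ $c$) from a unicyclic graph leaves a tree only if the deleted edge lies on its cycle. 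Moreover, for $i \le k \le j-2$ the original flip $T_k \to T_{k+1}$ removes and adds edges, neither of which is $e$, so it carries over verbatim to a single ``unicyclic flip'' $G_k \to G_{k+1}$. Thus the $j-i-1$ middle flips of the window become a chain of unicyclic flips from $G_i$ to $G_{j-1}$, while the two boundary flips (steps $i$ and $j$) have degenerated into a pure deletion of the extra edge on each side.

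The heart of the argument is to project this chain back down to spanning trees while paying at most one tree flip per unicyclic flip, and to exploit the two degenerate boundary operations to save one flip. Concretely, I would prove a \emph{hole-sliding} lemma: if $H' = H - x + y$ is a flip of non-crossing unicyclic graphs and $h$ is any edge on the cycle of $H$, then there is a cycle edge $h'$ of $H'$ with $(H-h) \to (H'-h')$ a single flip (or with $H-h = H'-h'$). The proof is a short case check: when $x = h$ one takes $h' = y$ and the two trees coincide; otherwise $x$ lies in the tree $H - h$, deleting it splits that tree into two parts, and exactly as in the matroid exchange argument one of $y$ or $h$ reconnects them, giving the single required flip. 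Starting the ``hole'' at $a$ on $G_i$ (so the initial tree is precisely $T_{i-1}$) and applying the lemma across the $j-i-1$ unicyclic flips yields a tree flip sequence of length at most $j-i-1$ ending at some spanning tree $G_{j-1} - h^\ast$; one final flip slides the hole from $h^\ast$ to $c$ (both on $C_{j-1}$) to reach $T_j$. This gives a flip sequence from $T_{i-1}$ to $T_j$ of length at most $(j-i-1)+1 = j-i$, one fewer than the $j-i+1$ flips the original sequence spent between $T_{i-1}$ and $T_j$. Splicing the shorter segment into the global sequence contradicts minimality, so some flip in the window must after all insert an edge crossing $e$.

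I expect the main obstacle to be the hole-sliding lemma together with the bookkeeping that makes the count strictly decrease rather than merely break even. The saving comes precisely from the two degenerate boundary deletions (of $a$ on the left and of $c$ on the right), so the argument must be arranged so that the starting hole is forced to sit at $a$ and only a single additional flip is ever spent steering the terminal hole to $c$; verifying that all intermediate objects remain non-crossing spanning trees throughout (immediate, since each is an edge-deletion of a non-crossing unicyclic graph) is the routine part.
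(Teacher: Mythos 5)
Your proof is correct. It shares the paper's basic skeleton --- because no flip in the window adds an edge crossing $e$, every intermediate tree stays compatible with $e$, so one can work with the non-crossing unicyclic graphs $T_k \cup \{e\}$ and recover spanning trees by deleting a cycle edge --- but the mechanism for choosing \emph{which} cycle edge to delete is genuinely different. The paper chooses it by \emph{looking ahead}: it sets $f_k$ to be the first edge of the fundamental cycle $\gamma_k$ of $e$ in $T_k$ that the original sequence removes at some later step, defines $N_k = T_k \cup \{e\} \setminus \{f_k\}$, and verifies by a two-case argument (the edge removed by the original flip lies on $\gamma_k$ or not) that consecutive $N_k$ differ by at most one flip; this global choice makes the normalized sequence land exactly on the final tree of the window, with no correcting flip, shortening $k$ flips to at most $k-1$. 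You instead choose the hole \emph{forward}, with no knowledge of the future, via your hole-sliding lemma --- a local matroid-exchange statement whose case analysis (if $x=h$ the two trees coincide; otherwise connectivity of $H'$ forces $y$ or $h$ to reconnect the two components of $(H-h)-x$, and in either case the new hole lies on the cycle of $H'$) is sound --- and you pay one extra flip at the end to steer the hole from $h^\ast$ to $c$, still ending with at most $j-i$ flips against the original $j-i+1$, so minimality is contradicted either way. What your route buys is a self-contained local lemma, reusable in settings where the remainder of the sequence cannot be inspected (e.g., inductive or online arguments); what the paper's route buys is brevity and an exact landing, since the look-ahead definition absorbs the terminal correction into the construction itself.
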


Before proving this Proposition, we note the implication that the Happy Edge Conjecture is true for convex hull edges:

\begin{corollary}
Conjecture~\ref{conj:happy-edge} is true for happy edges on the convex hull. 
Furthermore, \emph{every} minimum flip sequence 
keeps the happy convex hull edges throughout the sequence.
\end{corollary}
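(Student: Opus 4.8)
The plan is to derive the corollary directly from Proposition~\ref{prop:remove-add}, together with one elementary geometric observation: for points in convex position, a convex hull edge cannot be crossed by any edge. Indeed, recall (as used in the proof of Theorem~\ref{thm:convex-path}) that for points in convex position two edges cross if and only if their four endpoints alternate in the cyclic order around the hull. A convex hull edge joins two \emph{cyclically adjacent} points, so no point lies strictly between its endpoints on one side; hence no other edge can have endpoints alternating with it, and a convex hull edge is never crossed.

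With this in hand I would argue by contradiction. Let $e$ be a happy convex hull edge, so $e \in T_I \cap T_F$; in particular $e$ is present in both the first and last tree of any flip sequence. Suppose some minimum flip sequence removes $e$ at some step. Since $e \in T_F$, it must reappear by the end, so it is added back at some later step, giving a remove-add subsequence for $e$. By Proposition~\ref{prop:remove-add}, some flip during that subsequence adds an edge $f$ that crosses $e$. But $e$ is a convex hull edge, so nothing crosses it---a contradiction. Hence $e$ is never removed.

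Since this argument applies to every happy convex hull edge and to every minimum flip sequence, it establishes the stronger ``Furthermore'' statement: every minimum flip sequence keeps all happy convex hull edges throughout. The first claim---that Conjecture~\ref{conj:happy-edge} holds for happy convex hull edges---is then immediate, because a sequence that never removes these edges in particular never flips them.

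I do not expect a genuine obstacle here, as the corollary is short and leans entirely on the already-proved proposition. The only points requiring care are making the ``cannot be crossed'' observation precise through the combinatorial characterization of crossings for convex position, and noting that the happiness of $e$ is exactly what forces any removal of $e$ to be matched by a later re-addition, so that Proposition~\ref{prop:remove-add} is applicable.
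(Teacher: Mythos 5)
Your proposal is correct and follows essentially the same argument as the paper: suppose a minimum flip sequence removes a happy hull edge $e$, note that $e \in T_F$ forces it to be re-added, invoke Proposition~\ref{prop:remove-add} to obtain an added edge crossing $e$, and derive a contradiction since nothing can cross a convex hull edge. The only difference is that you spell out the combinatorial justification that hull edges are uncrossable, which the paper treats as immediate.
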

\begin{proof}
Let $e$ be a happy convex hull edge.
Suppose for a contradiction that there is a minimum flip sequence in which $e$ is removed.  Note that $e$ must be added back, since it is in $T_F$. By Proposition~\ref{prop:remove-add}, the flip sequence must use an edge $f$ that crosses $e$.  But that is impossible because $e$ is a convex hull edge so nothing crosses it. 
\end{proof}

\begin{proof}[Proof of Proposition~\ref{prop:remove-add}]
Consider a flip sequence from $T_I$ to $T_F$ and suppose that an edge $e$ is removed and later added back, and that no edge crossing $e$ is added during that subsequence.  We will make a shorter flip sequence.  The argument is similar to the ``normalization'' technique used by Sleator et al.~\cite{sleator1988} to prove the happy edge result for flips in triangulations of a convex point set.

Let $T_0, \ldots, T_k$ be the trees in the subsequence, where $T_0$ and $T_k$ contain $e$, but none of the intervening trees do.  Suppose that none of the trees $T_i$ contains an edge that crosses $e$.
We will construct a shorter flip sequence from $T_0$ to $T_k$.  
For each $i$, $0 \le i \le k$ consider adding $e$ to $T_i$.  For $i \ne 0,k$, this creates a cycle $\gamma_i$. 
Let $f_i$ be the first edge of $\gamma_i$ 
that is removed during the flip sequence from $T_i$ to $T_k$.  Note that $f_i$ exists since $T_k$ contains $e$, so it cannot contain all of $\gamma_i$.  
Define $N_i = T_i \cup \{e\} \setminus \{f_i\}$ for $1 \le i \le k-1$, and define $N_0 := T_0$.
Observe that $N_i$ is a spanning tree, and is non-crossing because no edge of $T_i$ crosses $e$ by hypothesis.
Furthermore,  
$N_{k-1} = T_k$
because the flip from $T_{k-1}$ to $T_k$ is exactly the same as the flip from $T_{k-1}$ to $N_{k-1}$.  

We claim that $N_0, \ldots, N_{k-1}$ is a flip sequence.  This will complete the proof, since it is a shorter flip sequence from $T_0$ to $T_k$.

Consider $N_i$ and $N_{i+1}$.  Suppose that the flip from $T_i$ to $T_{i+1}$ adds $g$ and removes $h$.

\[
\begin{tikzcd}[row sep=large, column sep=large]
T_i \arrow{r}{+g,-h} \arrow[swap]{d}{+e,-f_i} & T_{i+1} \arrow{d}{+e,-f_{i+1}} \\
N_i \arrow{r}{+g,-?} & N_{i+1}
\end{tikzcd}
\]

Recall that $\gamma_i$ 
is the cycle containing $e$ in $T_i \cup e$. 
If $h$ belongs to $\gamma_i$ 
then $f_i = h$, and then to get from $N_i$ to $N_{i+1}$ we add $g$ and remove $f_{i+1}$.
Next, suppose that $h$ does not belong to $\gamma_i$. 
Then the cycle $\gamma_i$ 
still exists in $T_{i+1}$.  Now, $\gamma_{i+1}$ 
is the unique cycle in $T_{i+1} \cup e$.  Thus $\gamma_{i+1} = \gamma_i$. 
Furthermore, $f_{i+1}$ is by definition the first edge removed from $\gamma_{i+1}$ 
in the flip sequence from $T_{i+1}$ to $T_k$. Thus $f_{i+1} = f_i$. Therefore, to get from $N_i$ to $N_{i+1}$ we add $g$ and remove $h$.

This shows that a single flip changes $N_i$ to $N_{i+1}$, which completes the proof.
\end{proof}

Note that the proof of Proposition~\ref{prop:remove-add} produces a strictly shorter flip sequence.  But to prove the Happy Edge Conjecture (Conjecture~\ref{conj:happy-edge}) it would suffice to produce a flip sequence of the same length.
One possible approach is to consider how 
remove-add pairs and add-remove pairs 
interleave in a flip sequence. 
Proposition~\ref{prop:remove-add} shows that 
a remove-add pair for edge $e$ must contain an add-remove pair for $f$ inside it.
We may need to understand
how the order of flips can be rearranged in a flip sequence.  
Such flip order rearrangements 
are at the heart of results on triangulation flips, both for convex point sets~\cite{sleator1988,pournin2014diameter} and for general point sets~\cite{kanj2017computing}.

\subsection{Extra edges used in flip sequences}
\label{sec:parking-edges}
Any flip sequence from $T_I$ to $T_F$ must involve flips that remove edges of $T_I \setminus T_F$ and flips that add edges of $T_F \setminus T_I$.  
Recall that in a perfect flip sequence, these are the only moves and they pair up perfectly, so the number of flips is $| T_I \setminus T_F |$.    
Theorem~\ref{thm:trees-up-down} gives one situation where a perfect flip sequence is possible, but typically (e.g.,~in the example of Figure~\ref{fig:lowerbound})
we must add edges not in $T_F$, and later remove them.  
More formally, an edge outside $T_I \cup T_F$ that is used in a flip sequence is called a \defn{parking edge},
with the idea that we ``park'' edges there temporarily.  

We make two further successively stronger conjectures.  They may not hold, but disproving them would give more insight.

\begin{conjecture}
\label{conj:non-crossing}
For any point set $P$ in convex position and any two non-crossing spanning trees $T_I$ and $T_F$ of $P$ there is a minimum flip sequence from $T_I$ to $T_F$ that never 
uses a parking edge that crosses an edge of $T_F$.
\end{conjecture}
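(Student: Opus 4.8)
The plan is to adapt the normalization strategy of Sleator, Tarjan, and Thurston~\cite{sleator1988} that already underlies Proposition~\ref{prop:remove-add}, but now applied to parking edges instead of happy edges. Among all minimum flip sequences from $T_I$ to $T_F$, I would select one that minimizes the potential $\Phi = \sum_i c_i$, where $c_i$ counts the parking edges in the $i$-th intermediate tree that cross $T_F$, and then aim to show that a minimizer has $\Phi = 0$. Convexity enters through the fact, established in the proof of Theorem~\ref{thm:convex-path} and used throughout Section~\ref{sec:improved-bound}, that two edges cross exactly when their endpoints alternate around the hull. Consequently each edge $t = cd$ of $T_F$ splits $P$ into its two sides, $T_F$ partitions the hull into faces, and a parking edge crosses $T_F$ precisely when it joins points on opposite sides of some $t \in T_F$.

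Assume for contradiction that the chosen sequence carries a parking edge $p = ab$ that, at some step, crosses an edge $t = cd$ of $T_F$; let $P_a, P_b$ be the sides of $t$ containing $a$ and $b$. Since $p$ lies outside $T_I \cup T_F$, it is added at some step $s_1$ and removed at some later step $s_2$, and by time-reversal symmetry together with Proposition~\ref{prop:remove-add} some flip in the window $[s_1,s_2]$ must remove an edge crossing $p$. The surgery I would attempt is to replace $p$ throughout $[s_1,s_2]$ by a segment $p'$ obtained by sliding an endpoint of $p$ onto $t$ or to a neighbouring vertex on the $P_a$-side, so that $p'$ shares a vertex with $t$ and no longer crosses it. One then patches the boundary flips at $s_1$ and $s_2$ and checks, via the commutation (diamond) relations for flips and the alternation characterization of crossings, that every tree in the modified window remains a non-crossing spanning tree of the same cardinality. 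A successful patch yields a minimum flip sequence with strictly smaller $\Phi$, contradicting minimality and forcing $\Phi = 0$.

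The main obstacle is exactly this length-preserving surgery. In Proposition~\ref{prop:remove-add} the removed-and-readded edge lies in both $T_I$ and $T_F$ and can be reinserted into a canonical cycle $\gamma_i$, whereas a parking edge interacts with flips on both sides of $t$, so naively rerouting $p$ may collide with other edges alive during $[s_1,s_2]$. I expect that the reroute becomes local and collision-free only for an \emph{extremally} chosen $p$ — for instance, the crossing parking edge that is innermost with respect to a fixed rooting of $T_F$ at a hull vertex, or the one introduced latest — because then the target side $P_a$ carries no competing crossing parking edge and the flips there can be reorganized by commutation. Proving that this extremal choice genuinely forbids interference, and that the patched endpoints $s_1$ and $s_2$ still realize valid single flips, is where the real work lies; the difficulty of guaranteeing such a global rearrangement in a \emph{minimum} sequence is precisely why the statement is posed as a conjecture rather than a theorem.
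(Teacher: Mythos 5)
You are attempting to prove a statement that the paper itself does not prove: this is Conjecture~\ref{conj:non-crossing}, posed as an open problem. The paper's only results about it are the implication chain Conjecture~\ref{conj:CH-parking} $\Rightarrow$ Conjecture~\ref{conj:non-crossing} $\Rightarrow$ Conjecture~\ref{conj:happy-edge} and an exhaustive verification of the strongest conjecture for $n \le 10$ points in convex position (Observation~\ref{obs:happy-edge-experiments}). So there is no paper proof to compare against, and your proposal must stand on its own --- which it does not. The potential-function framing ($\Phi$ minimal over all minimum flip sequences) is a reasonable skeleton, but it carries no content until the decrease step is established, and that step --- the ``surgery'' that constructs $p'$, patches the flips at $s_1$ and $s_2$, shows every modified tree in the window is still a non-crossing spanning tree, preserves the length of the sequence, and makes $\Phi$ strictly decrease --- is exactly what you never carry out. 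You concede this in your final paragraph; the concession is accurate, and it means the proposal is a research plan, not a proof. Nothing in the extremal-choice heuristic (``innermost'' or ``latest introduced'' $p$) is argued, and there are concrete reasons to worry: $p'$ may itself cross other edges of $T_F$ (so $\Phi$ need not drop), may coincide with an edge already present in some tree of the window, or may cross edges that are alive inside the window, and the boundary flips at $s_1$, $s_2$ need not remain single valid flips after the substitution.

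Separately, one step you assert as established is wrong. You claim that ``by time-reversal symmetry together with Proposition~\ref{prop:remove-add} some flip in the window $[s_1,s_2]$ must remove an edge crossing $p$.'' Proposition~\ref{prop:remove-add} applies to \emph{remove-then-add} pairs, i.e., edges present in the trees at both ends of the window; a parking edge's lifetime is an \emph{add-then-remove} pair, and time reversal takes add--remove pairs to add--remove pairs (in the reversed sequence $p$ is again first added and later removed), so the proposition cannot be transported to this situation. In fact the claimed conclusion is essentially impossible: every tree strictly inside $[s_1,s_2]$ contains $p$ and is non-crossing, hence contains no edge crossing $p$, so no flip strictly inside the window can remove (or add) an edge crossing $p$; only the two boundary flips, which add and remove $p$ itself, can simultaneously involve an edge crossing $p$, and nothing forces them to. What your approach really needs is a correct ``dual'' of Proposition~\ref{prop:remove-add} for add--remove pairs, and as the discussion following that proposition in the paper suggests, this seems to require understanding how flips in a minimum sequence can be reordered --- which is precisely the open difficulty that keeps this statement a conjecture.
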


\begin{conjecture}
\label{conj:CH-parking}
For a point set $P$ in convex position and any two non-crossing spanning trees $T_I$ and $T_F$ on $P$ there is a minimum flip sequence from $T_I$ to $T_F$ that only uses 
parking edges from the convex hull.
\end{conjecture}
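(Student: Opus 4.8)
The plan is to argue by \emph{normalization}: start from an arbitrary minimum flip sequence and, without increasing its length, rewrite it so that every parking edge lies on the convex hull. Note first that hull edges are ``free'' parking edges, since a convex hull edge crosses nothing and so can always be added without violating the non-crossing condition; moreover, for any partition of $P$ into two nonempty parts there is a hull edge with one endpoint in each part (walking around the hull one must cross between the parts at least once). Hence whenever a flip removes an edge and must reconnect the two resulting components, a hull edge is always available to do the reconnection. The content of the conjecture is therefore not the \emph{existence} of a hull-only flip sequence (the algorithm of Section~\ref{sec:improved-bound} already produces one), but that confining parking edges to the hull costs no extra flips.

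Among all minimum flip sequences I would take one minimizing a potential $\Phi$ equal to the total number of (step, edge) pairs at which a non-hull parking edge is present. Assuming $\Phi>0$, pick an \emph{innermost} episode: a non-hull parking edge $p=ab$ that is added at step $s$ and removed at step $t$, with $t-s$ minimum. The goal is a local rewrite that strictly decreases $\Phi$ while preserving the length and the endpoints $T_I,T_F$, contradicting the minimality of $\Phi$. Throughout the episode $p$ stays in the tree, and removing $p$ from each intermediate tree $T_i$ (for $s\le i\le t$) defines a cut $(A_i,B_i)$ with $a\in A_i$ and $b\in B_i$. The idea is to run a \emph{parallel} sequence $N_i$ obtained from $T_i$ by replacing $p$ with a hull edge $g_i$ that bridges $(A_i,B_i)$, mimicking each real flip. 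Because $g_i$ crosses nothing, each $N_i$ is automatically non-crossing, so the only things to check are that consecutive trees $N_i,N_{i+1}$ differ by a single flip and that the parallel sequence re-synchronizes with the original at step $t$. The key structural inputs are the combinatorial crossing characterization in convex position and Claim~\ref{claim:two-components-1}, which controls how the cut $(A_i,B_i)$ can evolve.

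The main obstacle is exactly the step at which this cut changes, i.e.\ a real flip (remove $h$, add $g'$) that moves a block of vertices from $a$'s side of $p$ to $b$'s side. Then the bridging hull edge must change from $g_i$ to a different $g_{i+1}$, and the naive replacement $N_i\to N_{i+1}$ requires two flips rather than one, destroying length preservation. Overcoming this seems to require either (i) showing that a single hull edge can be chosen to bridge $(A_i,B_i)$ throughout the \emph{entire} episode --- which I would attempt by restricting to \emph{minimal} parking edges $p$, so that one side of $p$ stays minimal and pins down the bridge, in the spirit of Lemma~\ref{lemma:crossing-bound} --- or (ii) a genuine reordering of flips so that cut-changing flips are commuted outside the episode. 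Flip-reordering of exactly this kind drives the known results on triangulation flip distance, both in the convex case~\cite{sleator1988,pournin2014diameter} and in the general case~\cite{kanj2017computing}; adapting those rearrangement lemmas to non-crossing spanning trees is, I expect, the crux, and the reason the statement remains conjectural.

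Finally, I would organize the argument in two stages matching the two ``successively stronger'' conjectures: first push all parking edges to be non-crossing with $T_F$ (the content of Conjecture~\ref{conj:non-crossing}), using Proposition~\ref{prop:remove-add} together with its corollary that happy convex hull edges never flip to rule out the most problematic remove--add interactions, and only then push the parking edges all the way to the hull. Treating the weaker statement first isolates the cut-change obstacle in a setting where fewer crossings are possible, which may make the commutation lemma tractable.
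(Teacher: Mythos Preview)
The statement you are attempting to prove is \emph{Conjecture}~\ref{conj:CH-parking} in the paper: the authors do not prove it. They state it as an open conjecture, verify it computationally for $n\le 10$ (Observation~\ref{obs:happy-edge-experiments}), and observe only the easy implication chain Conjecture~\ref{conj:CH-parking} $\Rightarrow$ Conjecture~\ref{conj:non-crossing} $\Rightarrow$ Conjecture~\ref{conj:happy-edge}. There is therefore no paper proof to compare against.

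Your write-up is not a proof either, and to your credit you say so: you identify the cut-change obstacle (when a flip moves vertices across the partition defined by the parking edge $p$, the bridging hull edge must change, and the parallel sequence costs an extra flip) and you explicitly label overcoming it as ``the crux, and the reason the statement remains conjectural.'' That is an honest assessment. The normalization-with-potential framework you sketch is reasonable and in the spirit of Proposition~\ref{prop:remove-add}, but the two escape routes you propose are both speculative. Route~(i), choosing a single hull bridge valid for the whole episode by restricting to minimal $p$, does not obviously work: minimality of $p$ constrains edges \emph{contained} in one side of $p$, not edges \emph{crossing} $p$, and it is precisely a crossing flip that changes the cut. Route~(ii), commuting cut-changing flips outside the episode, is exactly the kind of flip-reordering lemma that is missing in this setting; invoking the triangulation literature is suggestive but not an argument, since spanning-tree flips and triangulation flips have quite different local structure. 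So what you have is a plausible research outline with a clearly marked gap, not a proof --- which matches the status the paper assigns to the statement.
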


Our experiments verify Conjecture~\ref{conj:CH-parking} for $n \le 10$ points, (see Observation~\ref{obs:happy-edge-experiments}).
We note that Conjecture~\ref{conj:CH-parking} cannot hold for general point sets (there just aren't enough convex hull edges).  However, we do not know if Conjecture~\ref{conj:non-crossing} fails for general point sets.

\begin{claim}
Conjecture~\ref{conj:CH-parking} $\Longrightarrow$ Conjecture~\ref{conj:non-crossing} $\Longrightarrow$ 
Conjecture~\ref{conj:happy-edge}.
\end{claim}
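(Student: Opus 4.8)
The plan is to prove the two implications separately and directly, treating the chain as the assertion that any flip sequence witnessing a stronger conjecture already witnesses the weaker one; in particular no new sequence needs to be constructed in either step. The only ingredients required are Proposition~\ref{prop:remove-add} together with the basic fact that $T_I$ and $T_F$ are each non-crossing.

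For Conjecture~\ref{conj:CH-parking} $\Longrightarrow$ Conjecture~\ref{conj:non-crossing}, I would argue that the identical minimum flip sequence works. Fix a minimum flip sequence all of whose parking edges lie on the convex hull. A convex hull edge crosses no edge whatsoever, so in particular it crosses no edge of $T_F$. Hence the sequence uses no parking edge that crosses an edge of $T_F$, which is exactly the conclusion of Conjecture~\ref{conj:non-crossing}. This step is essentially immediate.

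For Conjecture~\ref{conj:non-crossing} $\Longrightarrow$ Conjecture~\ref{conj:happy-edge}, I would again keep the same sequence and show that a sequence with no $T_F$-crossing parking edge cannot flip a happy edge at all. Suppose, for contradiction, that some happy edge $e \in T_I \cap T_F$ is flipped; since $e \in T_F$, it must be removed and then later added back. Apply Proposition~\ref{prop:remove-add} to obtain an edge $f$, added somewhere in that remove-add subsequence, that crosses $e$. Now classify $f$: since $e \in T_F$ and $T_F$ is non-crossing, $f \notin T_F$; since $f$ crosses $e \in T_F$ and the sequence (by hypothesis) uses no parking edge crossing $T_F$, the edge $f$ cannot be a parking edge, so $f \in T_I \cup T_F$; combining these, $f \in T_I$. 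But $e \in T_I$ as well, and $e$ and $f$ cross, contradicting that $T_I$ is non-crossing. Hence no happy edge is flipped, which is Conjecture~\ref{conj:happy-edge} for the same sequence.

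The only place where care is needed---and the step I would treat as the crux---is the classification of the crossing witness $f$ furnished by Proposition~\ref{prop:remove-add}: one must verify that $f$ is genuinely \emph{used} by the sequence (so that the parking-edge hypothesis of Conjecture~\ref{conj:non-crossing} applies), that it cannot lie in $T_F$, and that excluding the parking case then forces $f$ into $T_I$, where its crossing with the happy edge $e$ yields the contradiction. Everything else is bookkeeping, and notably no length-preserving surgery on the flip sequence is required, since each implication is witnessed by the very sequence already assumed to exist.
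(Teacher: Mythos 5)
Your proof is correct and follows essentially the same route as the paper: the first implication is immediate because convex hull edges cross nothing, and the second applies Proposition~\ref{prop:remove-add} to a flipped happy edge and uses the non-crossing property of the two trees to derive a contradiction with the parking-edge hypothesis. The only cosmetic difference is where the contradiction lands---the paper shows the witness $f$ must be a parking edge crossing $T_F$, while you rule out the parking case first and contradict the non-crossingness of $T_I$---but the ingredients and structure are identical.
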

\begin{proof}
The first implication is clear. 
For the second implication we use  Proposition~\ref{prop:remove-add}. Consider the minimum flip sequence promised by Conjecture~\ref{conj:non-crossing}.  If there is a happy edge $e \in T_I \cap T_F$ that is removed during this flip sequence, then by Proposition~\ref{prop:remove-add}, the flip sequence must add an edge $f$ that crosses $e$.  But then $f$ is a parking edge that crosses an edge of $T_F$, a contradiction.
\end{proof}

\subsection{Finding a perfect flip sequence---greedy fails}
\label{sec:greedy-perfect-flips}

It is an open question whether there is a polynomial time algorithm to find [the length of] a minimum flip sequence between two given non-crossing spanning trees $T_I$ and $T_F$.
A more limited goal is testing whether there is a flip sequence of length $|T_I \setminus T_F|$---i.e., whether there is a perfect flip sequence.  This is also open.

In Figure~\ref{fig:Therese-example} we give an example to show that 
a greedy approach to finding a perfect flip sequence may fail.  
In this example there is a perfect flip sequence but a poor choice of perfect flips leads to a dead-end configuration where no further perfect flips are possible. Note that choosing perfect flips involves pairing edges of $T_I \setminus T_F$ with edges of $T_F \setminus T_I$ as well as ordering the pairs.

\begin{figure}[ht]
    \centering
    \includegraphics[width=.9\textwidth]{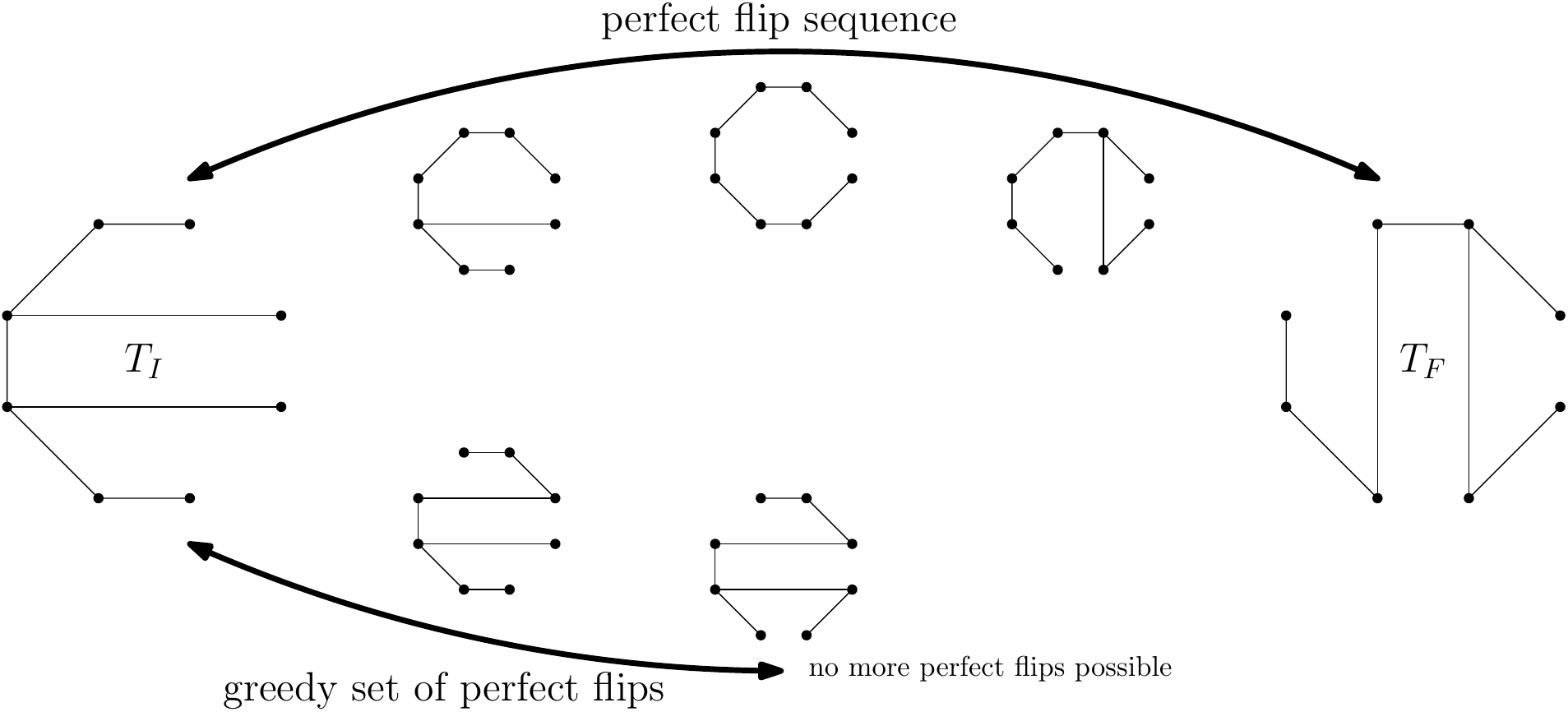}
    \caption{Even if a perfect flip sequence exists, we do not necessarily find it by greedily executing perfect flips.}

    \label{fig:Therese-example}
\end{figure}

\subsection{The Happy Edge Conjecture fails for edge slides}
\label{sec:edge-slides}

Researchers have examined various restricted types of flips for non-crossing spanning trees, see~\cite{nichols2020transition}.
An \defn{edge slide} is the most
restricted 
flip operation possible: it keeps one endpoint of the flipped edge fixed and moves the
other one along an adjacent tree edge without intersecting any of the other edges or vertices of the tree. In other words, the edge that is  removed, the edge that is inserted, 
and the edge along which the slide takes place form an empty triangle.  Aichholzer et al.~\cite{aichholzerreinhardt2007155} proved that for any set $P$ of $n$ points in the plane it is possible to transform between any two non-crossing spanning trees of $P$  
using $O(n^2)$ edge slides. The authors also give an example to show that $\Omega(n^2)$ slides might be required even if the two spanning trees differ in only two edges. This example already implies that for point sets in general position the Happy Edge Conjecture fails for edge slides. We will show that this is also the case for points in convex position.

\begin{figure}[htb]
    \centering
    \includegraphics[width=.9\textwidth]{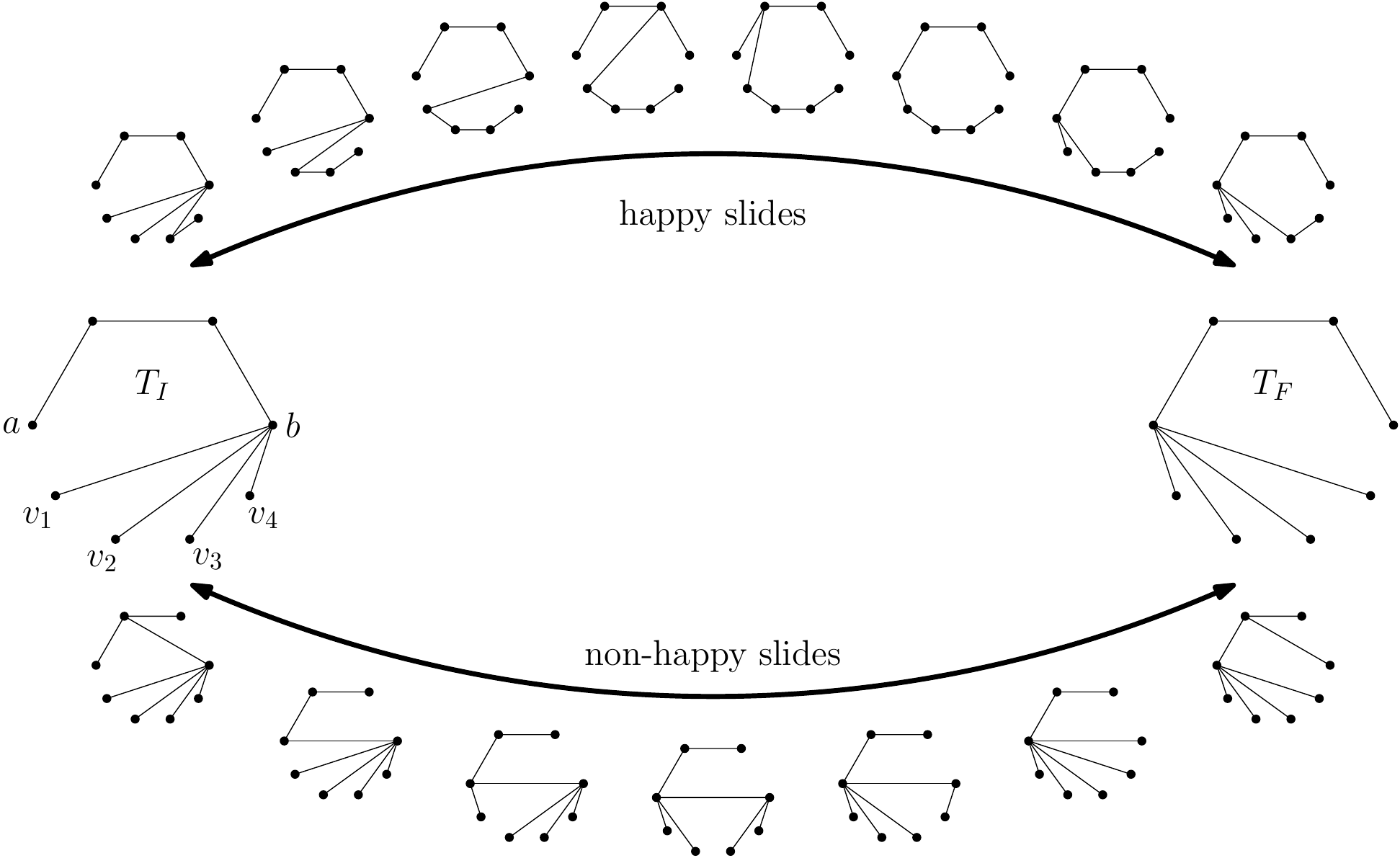}
    \caption{When flips are restricted to slide along an existing edge the Happy Edge Conjecture fails even for sets of points in convex position: Flipping from tree $T_I$ to tree $T_F$ needs 9 flips when respecting happy edges (top), but can be done with 8 flips (bottom) when using an edge (the edge upward from vertex $b$) which is common to both the start and target tree.}
    \label{fig:sliding_nonhappy}
\end{figure}

Figure~\ref{fig:sliding_nonhappy} shows an example of two plane spanning trees $T_I$ and $T_F$ on $8$ points in convex position which can be transformed into each other with 8 slides, shown at the bottom of the figure. To obtain this short sequence we temporarily use an edge which is common to both trees to connect the two vertices $a$ and $b$. Thus this sequence contains a non-happy slide operation, that is, an edge that is common to both, $T_I$ and $T_F$ is moved. When flipping from tree $T_I$ to tree $T_F$ by using only happy slide operations there are some useful observations. First, there can not be an edge directly connecting $a$ and $b$, as this would cause a cycle. This implies that any edge which connects a vertex $v_i$, $1 \leq i \leq 4$, with $b$ needs at least two slides to connect $a$ to some (possible different) vertex $v_j$. Moreover, the first of these edges that gets connected to $a$ needs at least three slides, as at the beginning this is the shortest path connecting $b$ to $a$. Thus in total we need at least $3+2+2+2=9$ happy slide operations. Figure~\ref{fig:sliding_nonhappy}(top) shows such a sequence.  It is not hard to see that this example can be generalized to larger $n$ and implies that the Happy Edge Conjecture fails for points in convex position.

\section{Exhaustive search over small point sets in convex position}
\label{sec:exhaustive}

For small point sets in convex position we investigated the minimum flip distance between non-crossing spanning trees by exhaustive computer search. Table~\ref{tab:exhaustive} summarizes the obtained results. For $n=3,\ldots ,12$ we give the number of non-crossing spanning trees (which is sequence A001764 in the On-Line Encyclopedia of Integer Sequences, \url{https://oeis.org}) and the number of reconfiguration  
steps
between them. Moreover, we computed the maximum reconfiguration distance between two trees 
(the diameter of the reconfiguration graph)
as well as the radius of the reconfiguration graph.
We provide the same information for the special case when the trees are non-crossing spanning paths. Note that in this case the intermediate graphs can still be non-crossing spanning trees. For the case where all intermediate graphs are also non-crossing spanning paths the diameter of the reconfiguration graph for points in convex position is known to be $2n-6$ for $n \ge 5$~\cite{akl2007planar,chang2009diameter}.

\begin{table}[ht]
\begin{tabular}{ r || r | r || r | r || r | r | r }
$n$ & number of   & number of  & max flip & flip   & number of   & path max. & path flip \\
    & plane trees & flip edges & distance & radius & plane paths & flip dist.& radius    \\
 \hline
 3 &         3 &           3 &  1 &  1 &     3 &  1 &  1 \\
 4 &        12 &          32 &  3 &  2 &     8 &  3 &  2 \\
 5 &        55 &         260 &  4 &  3 &    20 &  4 &  3 \\
 6 &       273 &       1 920 &  5 &  4 &    48 &  5 &  4 \\
 7 &     1 428 &      13 566 &  6 &  5 &   112 &  6 &  5 \\
 8 &     7 752 &      93 632 &  8 &  6 &   256 &  7 &  6 \\
 9 &    43 263 &     637 560 &  9 &  7 &   576 &  8 &  7 \\
10 &   246 675 &   4 305 600 & 11 &  8 & 1 280 & 10 &  8 \\
11 & 1 430 715 &  28 925 325 & 12 &  9 & 2 816 & 11 &  9 \\
12 & 8 414 640 & 193 666 176 & 14 & 10 & 6 144 & 13 & 10 \\

\end{tabular}
\caption{For a set of $n$ points in convex position this table gives the size of the reconfiguration graph (the number of non-crossing (``plane'') spanning trees and the number of reconfiguration edges) the maximum reconfiguration distance and radius. For the special case of non-crossing (``plane'') spanning paths also number, distance, and radius are given.}
\label{tab:exhaustive}
\end{table}

\paragraph{Results.}
Our computations show that
for small sets in convex position 
the radius of each reconfiguration graph is strictly larger than half 
the diameter.
More precisely, for $6 \leq n \leq 12$ the diameter is $\lfloor 1.5n-4 \rfloor$  and the radius is $n-2$ which would give an upper bound for the diameter of only $2n-4$.
But this might be an artefact of small numbers: compare for example the result of Sleator, Tarjan, and Thurston which give the upper bound of $2n-10$ for the rotation distance of binary trees which is tight only for $n \geq 13$~\cite{pournin2014diameter,sleator1988}. That the radius seems not to be suitable for obtaining a tight bound for the diameter also supports our way of bounding the diameter of the reconfiguration graph by not using a central canonical tree.

In addition to the results shown in the table, we checked, for $n \le 10$, \emph{which} edges are exchanged, in order to test the Happy Edge Conjecture (Conjecture~\ref{conj:happy-edge}) and whether only parking edges on the convex hull are used (Conjecture~\ref{conj:CH-parking}).

\begin{observation}
\label{obs:happy-edge-experiments}
For $n \leq 10$ points in convex position (1) the Happy Edge Conjecture is true, and (2) there are always minimum flip sequences that only use parking edges on the convex hull. 
\end{observation}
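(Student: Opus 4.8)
The plan is to verify both parts by exhaustively exploring the reconfiguration graph for each fixed $n\le 10$, turning all geometric predicates into cheap combinatorial tests via the fact (used already in the proof of Theorem~\ref{thm:convex-path}) that, for points in convex position, two edges cross exactly when their endpoints alternate in the cyclic order around the hull.

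First I would enumerate every non-crossing spanning tree on $n$ convex points. Labelling the points $1,\dots,n$ around the hull, an edge is an unordered pair and the crossing relation is the interleaving test above; a candidate edge set is a non-crossing spanning tree iff it consists of $n-1$ pairwise non-crossing edges that form a tree. These can be generated by backtracking, and the totals checked against the counts (sequence A001764) listed in Table~\ref{tab:exhaustive}. I would then build the reconfiguration graph $G$ whose vertices are these trees and whose edges join $T,T'$ with $|T\setminus T'|=1$: to list the neighbours of $T$, delete each edge $e\in T$, which splits $T$ into two components, and add back every non-crossing edge reconnecting them. A breadth-first search from each vertex yields the exact flip distance $d_G(T_I,T_F)$ for all pairs and re-confirms the diameter and radius entries of Table~\ref{tab:exhaustive}.

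The core observation making the two claims checkable is that each is equivalent to a distance-preservation statement on an induced subgraph of $G$. For part~(1), let $H=T_I\cap T_F$ be the happy edges; the subgraph $G_H$ of $G$ induced on the trees that contain $H$ is exactly the set of configurations reachable without ever flipping a happy edge, because any single flip between two trees containing $H$ removes an edge outside $H$. Hence the Happy Edge Conjecture (Conjecture~\ref{conj:happy-edge}) holds for the pair precisely when $d_{G_H}(T_I,T_F)=d_G(T_I,T_F)$, which I verify by a BFS confined to $G_H$. For part~(2), I instead restrict to the subgraph induced on trees all of whose edges lie in $T_I\cup T_F\cup(\text{convex-hull edges})$; these are exactly the configurations attainable using only convex-hull parking edges, and Conjecture~\ref{conj:CH-parking} holds for the pair iff the distance in this restricted subgraph equals $d_G(T_I,T_F)$.

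The main obstacle is sheer scale: the number of trees reaches about $2.5\times 10^5$ at $n=10$, so even the unrestricted all-pairs step needs one BFS per vertex, and the per-pair restricted checks for parts~(1) and~(2) are the true bottleneck. I would tame this by (i) precomputing the crossing relation and each tree's edge set as bitmasks, so neighbour generation and the alternation test become single-word bit operations; (ii) grouping the pairs by their happy set $H$ (respectively by their allowed-edge set) so that one restricted BFS per source settles many pairs simultaneously, rather than rebuilding a subgraph for every pair; and (iii) fixing the happy convex-hull edges at the outset, which is justified by the corollary of Proposition~\ref{prop:remove-add} stating that such edges are never flipped in any minimum sequence, thereby restricting attention to trees that contain them. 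With these reductions the search is feasible for $n\le 10$, yielding the stated verification.
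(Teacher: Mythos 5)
Your proposal is correct and matches the paper's own method: the paper likewise verifies Observation~\ref{obs:happy-edge-experiments} by exhaustive computer search on the reconfiguration graph of convex-position trees (enumeration of non-crossing spanning trees, BFS-based distance computation, and per-pair checks of which edges get exchanged), noting exactly the bottleneck you identify---that the happy/parking restrictions depend on the pair $(T_I,T_F)$, which limits symmetry reductions and caps the verification at $n\le 10$. Your reformulation of both conjectures as distance preservation in induced subgraphs ($G_H$ for happy edges, and the subgraph on trees with edges in $T_I\cup T_F\cup\text{hull}$ for parking edges) is a correct and clean way to implement the same check.
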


\paragraph{Methods.}
These computations are rather time consuming, as in principle for any pair of non-crossing spanning trees (paths) the flip distance has to be computed. For an unweighted and undirected graph $G$ with $n'$ nodes (non-crossing spanning trees in our case) and $m'$ edges (edge exchanges in our case) the standard algorithm to compute the diameter of $G$ is to apply breath first search (BFS) for each node of $G$. The time requirement for this simple solution is $O(n'm')$. There exist several algorithms which achieve a better running time for graphs of real world applications, see e.g.,~\cite{CRESCENZI201384}, but in the worst case they still need $O(n'm')$ time. The basic idea behind these approaches is to compute the eccentricity $e(v)$ of a node $v \in G$ (which is the radius as seen from this node $v$), and compare this with the largest distance $d$ between two nodes found so far. If $e(v)=d/2$ we know that the diameter of the graph is $d$ and the algorithm terminates. The difference between the various algorithms is how the nodes for computing the eccentricity and the lower bound for the diameter are chosen and the performance of the approaches are usually tested by applying them to a set of examples.

However, it turned out that by the structure of our reconfiguration graphs these approaches do not perform better than the simple
textbook
solution. 
Because the radius of the reconfiguration graph is strictly larger than half the diameter in our test cases, no algorithmic shortcut is possible.
\remove{Actually, for small sets in convex position we verified that the radius of each reconfiguration graph is strictly larger than half of the diameter of it (c.f. Table~\ref{tab:exhaustive}), so no algorithmic shortcut is possible.
At least for small sets it seems that the radius is $n-2$ which would give an upper bound for the diameter of only $2n-4$.
But this might be an artefact of small numbers: compare for example the result of Sleator, Tarjan, and Thurston which give the upper bound of $2n-10$ for the rotation distance of binary trees which is tight only for $n \geq 13$~\cite{pournin2013combinatorial,sleator1988}. That the radius seems not to be suitable for obtaining a reasonable bound for the diameter also supports our way of bounding the diameter of the reconfiguration graph by not using a central canonical tree.
}

To still be able to compute the diameter of the rather large graphs (for $n=12$ the reconfiguration graph has $8~414~650$ nodes and $193~666~176$ edges) we make use of the inherent symmetries of our graphs. For every tree $T$ we can cyclically shift the labels of the vertices (by $1$ to $n-1$ steps) and/or mirror the tree to obtain another non-crossing spanning tree $T'$ of the convex point set. All trees that can be obtained this way can be grouped together. While every tree is needed in the reconfiguration graph to correctly compute shortest reconfiguration distances, by symmetry a call of BFS for any tree from the same group will result in the same eccentricity. It is thus sufficient to call BFS only for one tree of each group. For $n$ points this reduces the number of calls by almost a factor of $2n$, as the size of the group can be up to $2n$ (some trees are self-symmetric in different ways, thus some groups have a cardinality less than $2n$). 

For our experiments on \emph{which} edges are exchanged (for Observation~\ref{obs:happy-edge-experiments}),
the computations get even more involved. The reason is that these properties of edges are defined by the initial and final tree. So it can happen that a short sub-path is valid only for some, but not all, pairs of trees where we would like to use it. Moreover, for similar reasons we can not make full use of the above described symmetry.
This is the reason why we have been able to test our conjectures only for sets with up to $n=10$ points.

\section{Conclusions and Open Questions}

We conclude with some open questions:

\begin{enumerate}

\item We gave two algorithms to find flip sequences for non-crossing spanning trees and we bounded the length of the flip sequence.  The algorithms run in polynomial time, but it would be good to optimize the run-times.

\item A main open question is to close the gap between $1.5n$ and $2n$ for the leading term 
of the diameter of the reconfiguration graph of non-crossing spanning trees.

\item A less-explored problem is to find the radius of the reconfiguration graph (in the worst case, as a function of $n$, the number of points). 
Is there a lower bound of $n-c$ on the radius of the reconfiguration graph for some small constant $c$?

\item Prove or disprove the Happy Edge Conjecture.

\item Is the distance problem (to find the minimum flip distance between two non-crossing spanning trees) NP-complete for general point sets?  For convex point sets? 
A first step towards an NP-hardness reduction would be to find instances where the Happy Edge Conjecture fails.

\item An easier question is to test whether there is a perfect flip sequence between two non-crossing spanning trees.  Can that be done in polynomial time, at least for points in convex position?

\item Suppose the Happy Edge Conjecture turns out to be false.  Is the following problem NP-hard?  
Given two trees, is there a minimum flip sequence between them that does not flip happy edges?

\item Suppose we have a minimum flip sequence that does not flip happy edges and does not use parking edges (i.e., the flips only involve edges of the difference set $D = (T_I \setminus T_F) \cup (T_F \setminus T_I)$).  Is it a perfect flip sequence?

\item All the questions above can be asked for the other versions of flips between non-crossing spanning trees (as discussed in Section~\ref{sec:background} and surveyed in~\cite{nichols2020transition}).

\item For the convex case, what if we only care about the cyclic order of points around the convex hull, i.e., we may freely relabel the points so long as we preserve the cyclic order of the labels. This ``cyclic flip distance'' may be less than the standard flip distance.  
For example, two stars rooted at different vertices have cyclic flip distance $0$ but standard flip distance $n-2$. 
\end{enumerate}

\paragraph*{Acknowledgements}
This research was begun in the Fall 2019 Waterloo problem solving session run by Anna Lubiw and Therese Biedl.  For preliminary progress on the Happy Edge Conjecture, we thank the participants, Pavle Bulatovic, Bhargavi Dameracharla, Owen Merkel, Anurag Murty Naredla, and Graeme Stroud.  
Work on the topic continued in the 2020 and 2021 Barbados workshops run by Erik Demaine and we thank the other participants for many valuable discussions.

\bibliographystyle{plainurl}   
\bibliography{flip}

\begin{thebibliography}{10}

\bibitem{aam-dcgnc-15}
Oswin Aichholzer, Andrei Asinowski, and Tillmann Miltzow.
\newblock {{Disjoint compatibility graph of non-crossing matchings of points in
  convex position}}.
\newblock {\em The Electronic Journal of Combinatorics}, 22:1--65, 2015.
\newblock \href {https://doi.org/10.37236/4403} {\path{doi:10.37236/4403}}.

\bibitem{aah-sstft-00}
Oswin Aichholzer, Franz Aurenhammer, and Ferran Hurtado.
\newblock Sequences of spanning trees and a fixed tree theorem.
\newblock {\em Computational Geometry}, 21(1-2):3--20, 2002.
\newblock \href {https://doi.org/10.1016/S0925-7721(01)00042-6}
  {\path{doi:10.1016/S0925-7721(01)00042-6}}.

\bibitem{AICHHOLZER2009617}
Oswin Aichholzer, Sergey Bereg, Adrian Dumitrescu, Alfredo García, Clemens
  Huemer, Ferran Hurtado, Mikio Kano, Alberto Márquez, David Rappaport,
  Shakhar Smorodinsky, Diane Souvaine, Jorge Urrutia, and David~R. Wood.
\newblock Compatible geometric matchings.
\newblock {\em Computational Geometry}, 42(6):617--626, 2009.
\newblock \href {https://doi.org/10.1016/j.comgeo.2008.12.005}
  {\path{doi:10.1016/j.comgeo.2008.12.005}}.

\bibitem{aichholzer2021hardness-tokens}
Oswin Aichholzer, Erik~D Demaine, Matias Korman, Jayson Lynch, Anna Lubiw,
  Zuzana Mas{\'a}rov{\'a}, Mikhail Rudoy, Virginia~Vassilevska Williams, and
  Nicole Wein.
\newblock Hardness of token swapping on trees.
\newblock {\em arXiv preprint}, 2021.
\newblock \href {https://doi.org/10.48550/arXiv.2103.06707}
  {\path{doi:10.48550/arXiv.2103.06707}}.

\bibitem{aklmmopv-fpsp-22}
Oswin Aichholzer, Kristin Knorr, Maarten L{\"o}ffler, Zuzana
  Mas{\'{a}}rov{\'{a}}, Wolfgang Mulzer, Johannes Obenaus, Rosna Paul, and
  Birgit Vogtenhuber.
\newblock {{Flipping Plane Spanning Paths}}.
\newblock In {\em Proc. $38^{th}$ European Workshop on Computational Geometry
  (EuroCG 2022)}, pages 66:1--66:7, Perugia, Italy, 2022.
\newblock \href {https://doi.org/10.48550/arXiv.2202.10831}
  {\path{doi:10.48550/arXiv.2202.10831}}.

\bibitem{amp-fdtsp-15}
Oswin Aichholzer, Wolfgang Mulzer, and Alexander Pilz.
\newblock {{Flip Distance Between Triangulations of a Simple Polygon is
  {NP}-Complete}}.
\newblock {\em Discrete \& Computational Geometry}, 54(2):368--389, 2015.
\newblock \href {https://doi.org/10.1007/s00454-015-9709-7}
  {\path{doi:10.1007/s00454-015-9709-7}}.

\bibitem{aichholzerreinhardt2007155}
Oswin Aichholzer and Klaus Reinhardt.
\newblock A quadratic distance bound on sliding between crossing-free spanning
  trees.
\newblock {\em Computational Geometry}, 37(3):155--161, 2007.
\newblock \href {https://doi.org/10.1016/j.comgeo.2004.12.010}
  {\path{doi:10.1016/j.comgeo.2004.12.010}}.

\bibitem{akl2007planar}
Selim~G. Akl, Md.~Kamrul Islam, and Henk Meijer.
\newblock On planar path transformation.
\newblock {\em Information Processing Letters}, 104(2):59--64, 2007.
\newblock \href {https://doi.org/10.1016/j.ipl.2007.05.009}
  {\path{doi:10.1016/j.ipl.2007.05.009}}.

\bibitem{apostolakis2021non}
Nikos Apostolakis.
\newblock Non-crossing trees, quadrangular dissections, ternary trees, and
  duality-preserving bijections.
\newblock {\em Annals of Combinatorics}, 25(2):345--392, 2021.
\newblock \href {https://doi.org/10.1007/s00026-021-00531-w}
  {\path{doi:10.1007/s00026-021-00531-w}}.

\bibitem{avis1996reverse}
David Avis and Komei Fukuda.
\newblock Reverse search for enumeration.
\newblock {\em Discrete Applied Mathematics}, 65(1-3):21--46, 1996.
\newblock \href {https://doi.org/10.1016/0166-218X(95)00026-N}
  {\path{doi:10.1016/0166-218X(95)00026-N}}.

\bibitem{avis2022reconfiguration}
David Avis and Duc~A Hoang.
\newblock On reconfiguration graph of independent sets under token sliding.
\newblock {\em arXiv preprint}, 2022.
\newblock \href {https://doi.org/10.48550/arXiv.2203.16861}
  {\path{doi:10.48550/arXiv.2203.16861}}.

\bibitem{biniaz2019token}
Ahmad Biniaz, Kshitij Jain, Anna Lubiw, Zuzana Mas{\'a}rov{\'a}, Tillmann
  Miltzow, Debajyoti Mondal, Anurag~Murty Naredla, Josef Tkadlec, and Alexi
  Turcotte.
\newblock Token swapping on trees.
\newblock {\em arXiv preprint}, 2019.
\newblock \href {https://doi.org/10.48550/arXiv.1903.06981}
  {\path{doi:10.48550/arXiv.1903.06981}}.

\bibitem{bonamy_et_al:LIPIcs:2019:11024}
Marthe Bonamy, Nicolas Bousquet, Marc Heinrich, Takehiro Ito, Yusuke Kobayashi,
  Arnaud Mary, Moritz M{\"u}hlenthaler, and Kunihiro Wasa.
\newblock The perfect matching reconfiguration problem.
\newblock In {\em 44th International Symposium on Mathematical Foundations of
  Computer Science (MFCS 2019)}, volume 138 of {\em Leibniz International
  Proceedings in Informatics (LIPIcs)}, pages 80:1--80:14, Dagstuhl, Germany,
  2019. Schloss Dagstuhl--Leibniz-Zentrum fuer Informatik.
\newblock \href {https://doi.org/10.4230/LIPIcs.MFCS.2019.80}
  {\path{doi:10.4230/LIPIcs.MFCS.2019.80}}.

\bibitem{bonsma2013complexity}
Paul Bonsma.
\newblock The complexity of rerouting shortest paths.
\newblock {\em Theoretical Computer Science}, 510:1--12, 2013.
\newblock \href {https://doi.org/10.1016/j.tcs.2013.09.012}
  {\path{doi:10.1016/j.tcs.2013.09.012}}.

\bibitem{BOSEHURTADO200960}
Prosenjit Bose and Ferran Hurtado.
\newblock Flips in planar graphs.
\newblock {\em Computational Geometry}, 42(1):60--80, 2009.
\newblock \href {https://doi.org/10.1016/j.comgeo.2008.04.001}
  {\path{doi:10.1016/j.comgeo.2008.04.001}}.

\bibitem{buchin2009transforming}
Kevin Buchin, Andreas Razen, Takeaki Uno, and Uli Wagner.
\newblock Transforming spanning trees: {A} lower bound.
\newblock {\em Computational Geometry}, 42(8):724--730, 2009.
\newblock \href {https://doi.org/10.1016/j.comgeo.2008.03.005}
  {\path{doi:10.1016/j.comgeo.2008.03.005}}.

\bibitem{chang2009diameter}
Jou{-}Ming Chang and Ro{-}Yu Wu.
\newblock On the diameter of geometric path graphs of points in convex
  position.
\newblock {\em Information Processing Letters}, 109(8):409--413, 2009.
\newblock \href {https://doi.org/10.1016/j.ipl.2008.12.017}
  {\path{doi:10.1016/j.ipl.2008.12.017}}.

\bibitem{CRESCENZI201384}
Pilu Crescenzi, Roberto Grossi, Michel Habib, Leonardo Lanzi, and Andrea
  Marino.
\newblock On computing the diameter of real-world undirected graphs.
\newblock {\em Theoretical Computer Science}, 514:84--95, 2013.
\newblock Graph Algorithms and Applications: in Honor of Professor Giorgio
  Ausiello.
\newblock \href {https://doi.org/10.1016/j.tcs.2012.09.018}
  {\path{doi:10.1016/j.tcs.2012.09.018}}.

\bibitem{Hernando02graphsof}
C.~Hernando, F.~Hurtado, and M.~Noy.
\newblock Graphs of non-crossing perfect matchings.
\newblock {\em Graphs and Combinatorics}, 18:517--532, 2002.
\newblock \href {https://doi.org/10.1007/s003730200038}
  {\path{doi:10.1007/s003730200038}}.

\bibitem{hernando1999geometric}
M.C. Hernando, F.~Hurtado, A.~M{\'{a}}rquez, M.~Mora, and M.~Noy.
\newblock Geometric tree graphs of points in convex position.
\newblock {\em Discrete Applied Mathematics}, 93(1):51--66, 1999.
\newblock \href {https://doi.org/10.1016/S0166-218X(99)00006-2}
  {\path{doi:10.1016/S0166-218X(99)00006-2}}.

\bibitem{houle2005graphs}
Michael~E Houle, Ferran Hurtado, Marc Noy, and Eduardo Rivera-Campo.
\newblock Graphs of triangulations and perfect matchings.
\newblock {\em Graphs and Combinatorics}, 21(3):325--331, 2005.
\newblock \href {https://doi.org/10.1007/s00373-005-0615-2}
  {\path{doi:10.1007/s00373-005-0615-2}}.

\bibitem{IST-DCGM-2013}
Mashhood Ishaque, Diane~L. Souvaine, and Csaba~D. Tóth.
\newblock Disjoint compatible geometric matchings.
\newblock {\em Discrete \& Computational Geometry}, 49:89--131, 2013.
\newblock \href {https://doi.org/10.1007/s00454-012-9466-9}
  {\path{doi:10.1007/s00454-012-9466-9}}.

\bibitem{ito2020parameterized}
Takehiro Ito, Marcin Kami{\'n}ski, Hirotaka Ono, Akira Suzuki, Ryuhei Uehara,
  and Katsuhisa Yamanaka.
\newblock Parameterized complexity of independent set reconfiguration problems.
\newblock {\em Discrete Applied Mathematics}, 283:336--345, 2020.
\newblock \href {https://doi.org/10.1016/j.dam.2020.01.022}
  {\path{doi:10.1016/j.dam.2020.01.022}}.

\bibitem{kanj2017computing}
Iyad~A. Kanj, Eric Sedgwick, and Ge~Xia.
\newblock Computing the flip distance between triangulations.
\newblock {\em Discrete \& Computational Geometry}, 58(2):313--344, 2017.
\newblock \href {https://doi.org/10.1007/s00454-017-9867-x}
  {\path{doi:10.1007/s00454-017-9867-x}}.

\bibitem{lubiw2019proof}
Anna Lubiw, Zuzana Mas{\'a}rov{\'a}, and Uli Wagner.
\newblock A proof of the orbit conjecture for flipping edge-labelled
  triangulations.
\newblock {\em Discrete \& Computational Geometry}, 61(4):880--898, 2019.
\newblock \href {https://doi.org/10.1007/s00454-018-0035-8}
  {\path{doi:10.1007/s00454-018-0035-8}}.

\bibitem{LUBIW201517}
Anna Lubiw and Vinayak Pathak.
\newblock Flip distance between two triangulations of a point set is
  {NP}-complete.
\newblock {\em Computational Geometry}, 49:17--23, 2015.
\newblock \href {https://doi.org/10.1016/j.comgeo.2014.11.001}
  {\path{doi:10.1016/j.comgeo.2014.11.001}}.

\bibitem{mutze2022combinatorial}
Torsten M{\"u}tze.
\newblock Combinatorial {G}ray codes-an updated survey.
\newblock {\em arXiv preprint}, 2022.
\newblock \href {https://doi.org/10.48550/arXiv.2202.01280}
  {\path{doi:10.48550/arXiv.2202.01280}}.

\bibitem{nichols2020transition}
Torrie~L. Nichols, Alexander Pilz, Csaba~D. T{\'{o}}th, and Ahad~N. Zehmakan.
\newblock Transition operations over plane trees.
\newblock {\em Discrete Mathematics}, 343(8):111929, 2020.
\newblock \href {https://doi.org/10.1016/j.disc.2020.111929}
  {\path{doi:10.1016/j.disc.2020.111929}}.

\bibitem{nishimura2018introduction}
Naomi Nishimura.
\newblock Introduction to reconfiguration.
\newblock {\em Algorithms}, 11(4):52, 2018.
\newblock \href {https://doi.org/10.3390/a11040052}
  {\path{doi:10.3390/a11040052}}.

\bibitem{noy1998enumeration}
Marc Noy.
\newblock Enumeration of noncrossing trees on a circle.
\newblock {\em Discrete Mathematics}, 180(1-3):301--313, 1998.
\newblock \href {https://doi.org/10.1016/S0012-365X(97)00121-0}
  {\path{doi:10.1016/S0012-365X(97)00121-0}}.

\bibitem{pilz2014flip}
Alexander Pilz.
\newblock Flip distance between triangulations of a planar point set is
  {APX}-hard.
\newblock {\em Computational Geometry}, 47(5):589--604, 2014.
\newblock \href {https://doi.org/10.1016/j.comgeo.2014.01.001}
  {\path{doi:10.1016/j.comgeo.2014.01.001}}.

\bibitem{pournin2014diameter}
Lionel Pournin.
\newblock The diameter of associahedra.
\newblock {\em Advances in Mathematics}, 259:13--42, 2014.
\newblock \href {https://doi.org/10.1016/j.aim.2014.02.035}
  {\path{doi:10.1016/j.aim.2014.02.035}}.

\bibitem{randall2006rapidly}
Dana Randall.
\newblock Rapidly mixing markov chains with applications in computer science
  and physics.
\newblock {\em Computing in Science \& Engineering}, 8(2):30--41, 2006.
\newblock \href {https://doi.org/10.1109/MCSE.2006.30}
  {\path{doi:10.1109/MCSE.2006.30}}.

\bibitem{razen2008lower}
Andreas Razen.
\newblock A lower bound for the transformation of compatible perfect matchings.
\newblock In {\em Proc. 24th European Workshop on Computational Geometry
  (EuroCG 2008)}, pages 115--118, 2008.
\newblock URL:
  \url{https://hal.inria.fr/inria-00595116/PDF/EuroCG08Abstracts.pdf#page=126}.

\bibitem{sleator1988}
Daniel~D. Sleator, Robert~E. Tarjan, and William~P. Thurston.
\newblock Rotation distance, triangulations, and hyperbolic geometry.
\newblock {\em Journal of the American Mathematical Society}, 1(3):647--681,
  1988.
\newblock \href {https://doi.org/10.1090/S0894-0347-1988-0928904-4}
  {\path{doi:10.1090/S0894-0347-1988-0928904-4}}.

\bibitem{vandenheuvel2013complexity}
Jan van~den Heuvel.
\newblock The complexity of change.
\newblock {\em Surveys in Combinatorics}, 409(2013):127--160, 2013.
\newblock \href {https://doi.org/10.48550/arXiv.1312.2816}
  {\path{doi:10.48550/arXiv.1312.2816}}.

\end{thebibliography}

\end{document}